\newcommand\bcmdtab{\noindent\bgroup\tabcolsep=0pt%
  \begin{tabular}{@{}p{10pc}@{}p{20pc}@{}}}
\newcommand\ecmdtab{\end{tabular}\egroup}
\title[Network Science]
      {Distance Closures on Complex Networks}
 \author[Tiago Simas \& Luis M. Rocha]
         {Tiago Simas$^1$ and Luis M. Rocha$^{1,2,3}$\\
          $^1$Cognitive Science Program, Indiana University, Bloomington, IN 47406, USA\\
          $^2$Center for Complex Networks and Systems, School of Informatics \& Computing, Indiana University, Bloomington, IN 47406, USA \\
         $^3$Instituto Gulbenkian de Ciencia, Oeiras, Portugal}
\begin{document}

\label{firstpage}

\maketitle

\begin{abstract}
To expand the toolbox available to network science, we study the isomorphism between distance and Fuzzy (proximity or strength) graphs. Distinct transitive closures in Fuzzy graphs lead to closures of their isomorphic distance graphs with widely different structural properties. For instance, the All Pairs Shortest Paths (APSP) problem, based on the Dijkstra algorithm, is equivalent to a metric closure, which is only one of the possible ways to calculate shortest paths in weighted graphs.
We show that different closures lead to different distortions of the original topology of weighted graphs. Therefore, complex network analyses that depend on the calculation of shortest paths on weighted graphs should take into account the closure choice and associated topological distortion.
We characterise the isomorphism using the max-min and Dombi disjunction/conjunction pairs. This allows us to: (1) study alternative distance closures, such as those based on diffusion, metric, and ultra-metric distances; (2) identify the operators closest to the metric closure of distance graphs (the APSP), but which are logically consistent; and (3) propose a simple method to compute alternative distance closures using existing algorithms for the APSP. In particular, we show that a specific diffusion distance is promising for community detection in complex networks, and is based on desirable axioms for logical inference or approximate reasoning on networks; it also provides a simple algebraic means to compute diffusion processes on networks. Based on these results, we argue that choosing different distance closures can lead to different conclusions about indirect associations on network data, as well as the structure of complex networks, and are thus important to consider.
\end{abstract}

\tableofcontents

\newtheorem{definition}{Definition}
\newtheorem{theorem}{Theorem}
\newtheorem{lemma}{Lemma}
\newtheorem{corollary}{Corollary}
\newtheorem{example}{Example}
\newtheorem{Algorithm}{Algorithm}

\section{Introduction}
\label{introduction}

The majority of research on complex networks treats interactions as
binary edges in graphs, even though interactions in real networks
exhibit a wide range of intensities or strengths.  The varying
strength nature of many, if not most, real networks has lead us
towards a more recent drive to study complex networks as weighted
graphs \cite{Newman2001_PREII,Barrat2004PRL,Wang2005a,Goh2005}.
Certainly this shift towards weighted graphs as models of complex
networks is welcomed. However, there is still much to do to bring
decades of research on weighted graphs to bear on the field of
complex networks. One field, in particular, that has accumulated
substantial knowledge about weighted graphs is the field of Fuzzy
Set Theory \cite{Klir1995}.

While the Fuzzy Set community has focused extensively on the
mathematical characteristics of weighted graphs and how to compute
with them \cite{Mordeson2000}, it has not focused much on
developing models of the general principles that explain the
structure and dynamics of complex networks obtained from empirical
data.
Conversely, the complex networks community has paid relatively little attention to the mathematics of weighted
graphs.

We argue that the field of complex networks can particularly profit from learning more about
the algebraic characteristics of various
ways to compute the \emph{transitive closure} of weighted graphs obtained from
real data.
The concept of transitive closure is important because it
allows us to identify not only transitive cliques in a network, but also
indirectly related items; that is, those for which we do not possess direct co-occurrence data, but which may be strongly related via short indirect paths.
Extraction of indirectly related items is important for automatic inference in many problems such as recommender systems, text mining, information retrieval, and prediction of online social behavior. In particular, for these problems, we have previously shown that pairs of items for which we do not have direct co-occurrence data, but which are strongly related via indirect paths possess a
higher probability of direct co-occurrence in the future \cite{Rocha2002,Rocha2005,Simas2012}.
Interestingly, unlike standard binary or crisp graphs, in weighted
graphs there is an infinite number of ways to compute transitive
closure, and therefore, to compute indirect associations in the data.
This means that we should be aware of the effects of different forms of \emph{transitivity} of complex networks modeled as weighted graphs.

Our analysis is based on an \emph{isomorphism} between fuzzy (proximity/strength) and distance graphs, whereby transitive closure is isomorphic to the concept of \emph{distance closure}, out of which many alternative measures of indirect association in network data, including (shortest) path length, ensue.
For instance, Dijkstra's  algorithm
\cite{dijkstra} is ubiquitously used in the field of complex networks to compute shortest paths.
As we show below, this algorithm leads to the very intuitive \emph{metric closure} of a distance graph.
However, via the isomorphism, we show that it is equivalent to a transitive closure based on a pair of logical operations that does not satisfy De Morgan's laws for any involutive complement.
These are undesirable axiomatic features if one is interested in reasoning logically about knowledge represented in complex networks (see
below).

The isomorphism allows us to study how alternative distance/transitive closures impose a different \emph{distortion} of the topology of the original network---e.g. the metric closure (Dijkstra algorithm) enforces a metric topology on a distance graph.
Moreover, it allows us to obtain alternative closures and thus alternative ways to compute indirect associations in complex networks with ideal axiomatic features.
Indeed, different distance closures lead to different ways of computing path length, which is a fundamental building block of the network science methodology, used to compute shortest paths, community structure, etc.
Here, in addition to the metric closure, we study the \emph{ultra-metric} and a \emph{diffusion distance} closure which, unlike the former, possess desirable axiomatic features for reasoning about knowledge stored in networks.
While the ultra-metric closure distorts the original network topology more than the metric closure, the diffusion distance closure is defined such that items in communities are brought closer together, and items in bridges are put relatively further apart as closure is computed.
Therefore, our algebraic algorithm to compute the distance closure of weighted graphs, also provides an alternative and promising means to study diffusion processes on networks.

\section{Background}

\subsection{Complex networks}

In the last few years, much work has been done to understand the
general mechanisms that influence the growth and dynamics of
\emph{complex networks}, understood as systems of variables that are related to one another via some mechanism.
Examples of such relations are: interactions between physical objects (e.g. suppliers and consumers in an electrical grid), social ties (e.g. friendship and trust between people), associations and correlations in data (e.g. gene regulation and phenotypic traits), and many others.
While there are more sophisticated mathematical methods to model multivariate interactions (e.g. hypergraphs and relations \cite{Klamt:2009kx,Klir1995,Mordeson2000}),  the structure and dynamics of
\emph{complex networks} have been mostly studied using graph theory \cite{wasserman, WATTS1998, barabasi-1999-286, pastor_vesp, Dorogovtsev2003, Bornholdt2003}.
Indeed, graphs have been used to model the
Internet \cite{pastor_vesp}, the World Wide Web \cite{albert-2002-74},
collaboration networks \cite{barrat-2004-101,Newman2001},
biological networks \cite{Oltvai2002}, and many other types of multivariate interactions.

%


The majority of research on complex networks treats interactions as
binary or crisp edges in graphs, even though interactions in real networks
often exhibit a wide range of intensities or strengths. For instance, the
structure of web site access clearly depends on
heterogeneous amounts of traffic \cite{pastor_vesp}. The same
applies to air-transportation and scientific collaboration networks
\cite{barrat-2004-101,borner_complexity}. The intensity of
friendship (or familiarity) among people was also shown to be a
factor in the speed of epidemic spread \cite{yan-2005-22}.  The
varying strength nature of many, if not most, real networks have
lead towards a more recent drive to study complex networks as
weighted graphs
\cite{Newman2001_PREII,Barrat2004PRL,Wang2005a,Goh2005}.

Certainly this shift towards weighted graphs as models of complex
networks is welcomed. However, there is still much to do to bring
decades of research on weighted graphs to bear on the field of
complex networks. This is particularly true when it comes to
building informatics technology for the Web (e.g. recommender
systems and text mining \cite{Simas2012, verspoor05BMC, abihaidar_GB08}), or predicting social behavior online
\cite{monge&contractor}. Indeed, much work on weighted graphs has
been developed in the past decades in the context of database
research \cite{SHENOI1989}, information retrieval
\cite{Miyamoto1990}, filtering \cite{golbeck}, and social networking
\cite{Pujol}. Fuzzy
Set Theory \cite{Zadeh1965, Klir1995}, in particular, has accumulated
substantial knowledge about Fuzzy graphs \cite{Mordeson2000}, a type of weighted graph we summarize next.

\subsection{Fuzzy Graphs and Transitive Closure}
\label{background_fuzzy}

A $n$-ary relation, $R$, between $n$ sets $X_1, X_2, \cdots, X_n$,
assigns a value, $r$, to elements,  $\mathbf{x} = (x_1, x_2, \cdots,
x_n)$, of the Cartesian product of these sets: $X_1 \times X_2
\times \cdots \times X_n$. The value $r$ signifies how strongly the
elements (or variables) $x_i$ of the $n$-tuple $\mathbf{x}$ are related or
associated to one another (\cite{Klir1995} page 119).
When  $r
\in [0, 1]$, $R$ is known as a \emph{fuzzy relation}
\cite{Klir1995}, and when $n=2$ as a \emph{binary fuzzy relation}.
Binary fuzzy relations, $R(X, Y)$, can be easily represented by
adjacency matrices of dimension $n \times m$,  where $n$ and $m$ are the number
of elements of $X$ and $Y$ respectively. Examples of relevant binary
relations are: keywords $\times$ documents, users $\times$ web
pages, authors $\times$ citations,
etc.

Binary fuzzy relations defined on a single set of variables, $R(X, X)$, are also
known as \emph{fuzzy graphs}---a kind of weighted graph where the
edges weights are defined in the unit interval.
In other words, the network of interactions amongst a set of variables $X$ is conceptualized as a binary fuzzy relation of the set with itself.
In general, the weights are unconstrained, but they can also be constrained to accommodate a probability mass function or other restrictions.


A large edge weight between two elements in a fuzzy graph denotes a strong association or interaction between them. But what about a pair of elements that have weak links to one another, but have strong links with the same \emph{other}
elements? Should we infer that the pair of elements is strongly related via indirect associations, that is, from \emph{transitivity}?

To study the transitivity of a fuzzy graph, we need to compute the strength of interaction between any two nodes given all possible indirect paths between them.
There are, however, infinite ways to integrate numerically the weights in the indirect paths.
Menger \cite{Menger} first generalized transitivity criteria in the context of probabilistic metric spaces. To do this, \emph{triangular norm} (T-Norm) binary operations were introduced.
Later, Zadeh imported the concept of T-Norms to generalize logical operations in multi-valued logics such as Fuzzy logic \cite{Zadeh1965,Zadeh1999}.
A \emph{T-Norm} $\land:[0,1] \times [0,1] \rightarrow [0,1]$, is a binary operation with the properties of commutativity ($a \land b = b \land a$), associativity ($a \land (b \land c) = (a \land b) \land c$), and monotonicity ($a \land b \leq c \land  d$ iff $a\leq c$ and $b\leq d$). Moreover, $1$ is its identity element ($ a \land 1 = a$).
In other words, the algebraic structure $([0,1], \land)$ is a monoid \cite{Gondran2007}.
A T-Norm generalizes \emph{conjunction} in logic to deal with real values in the unit interval ($a,b \in [0,1]$), see details in \cite{Klir1995}.
Similarly, a \emph{T-Conorm} $\lor$ generalizes \emph{disjunction} and  has the same properties as a T-Norm, but $0$ is its identity element ($a \lor 0 = a$) \cite{Klir1995}.
Therefore, the algebraic structure $([0,1], \lor )$ is also a monoid \cite{Gondran2007}.
To obtain \emph{dual} T-Norm/T-Conorm pairs, we can derive a T-Conorm from a T-Norm via a generalization of De Morgan's laws: $a \lor b = 1 - ((1-a) \land (1-b))$.


To integrate all indirect paths between every pair of nodes in a fuzzy graph, we can now use the \emph{composition of fuzzy graphs}, based on a pair of T-Conorm and T-Norm binary operations, $\langle \lor, \land \rangle$, which form
the algebraic structure $([0,1], \land, \lor )$. Notice that this structure is not necessarily a semiring on the unit interval \cite{Gondran2007} and can be more or less constrained to obtain desirable properties (see below).
The composition of fuzzy graphs is done via the logical composition of the graph's adjacency
matrix with itself ($R \circ R$), in much the same way as the algebraic product of
matrices, except that summation and multiplication are substituted by the T-Conorm and T-Norm, respectively \cite{Klir1995,Klement2004}:
For any disjuction/conjunction (T-Conorm/T-Norm) pair $\langle \lor, \land \rangle$, the
general composition of fuzzy graphs is:

 $$R \circ R = \bigvee_{k}
\bigwedge (r_{ik},r_{kj})=r'_{ij}$$

\noindent where $r_{ij}$ denotes $R(x_i, x_j)$,
the weight of the edge between vertices $x_i$ and $x_j$ of fuzzy graph $R$.
The most commonly used operations for disjunction and conjunction
are the \texttt{maximum} and
\texttt{minimum}, respectively.
Thus, the standard composition of fuzzy graphs is referred to as the
\emph{max-min composition}:

 $$R \circ R = \max_{k}
\min (r_{ik},r_{kj})=r'_{ij}$$


The \emph{transitive closure} $R^T(X, X)$ of a fuzzy graph $R(X, X)$ can now be defined as:

\begin{equation}
\label{TC1}
R^{T} = \bigcup_{n=1}^{\kappa} R^n
\end{equation}

\noindent where $R^n = R \circ R^{n-1}$, for $n = 2, 3, ...$, and $R^1 = R$ \cite{Klir1995}.
Furthermore, the union of two graph adjacency matrices of the same size, $R \cup S$, is defined by the disjunction of their respective entries: $r_{ij} \lor s_{ij},\,\, \forall_{i,j}$, where $\lor$ denotes the same T-Conorm used in the composition.
In the most general case, $\kappa \rightarrow \infty$ \cite{Gondran2007}, but with reasonable constrains (see below), the transitive closure of finite graph converges for a finite $\kappa$.

Since different T-Conorm/T-Norm pairs can be employed in the composition of fuzzy graphs,
different criteria for transitivity can be established---a key concept in our
work.
Let us exemplify with the most commonly used form of
transitivity in fuzzy graphs, using the traditional
disjunction/conjunction (T-Conorm/T-Norm) pair $\langle \vee = \texttt{maximum}, \wedge = \texttt{minimum}\rangle$. A fuzzy graph $R(X, X)$ is \emph{max-min
transitive} iff:

 $$r_{ij} \geq \max_{\forall x_k \in X}
\min [r_{ik},r_{kj}], \forall_{x_i, x_j \in X}$$

This definition generalizes the transitive property of
crisp graphs, which requires that nodes $x_i$ and $x_j$ be linked
($r_{ij}=1$) if $x_i$ is linked to $x_k$ and $x_k$ to $x_j$
($r_{ik}=r_{kj}=1$).
In contrast, the (max-min) fuzzy transitivity requires that edge
$r_{ij}$ is at least as large as the maximum of the weakest links
(minimum edges) in each possible indirect path via some node $x_k$.
In other words, we compute the possible indirect paths between nodes
$x_i$ and $x_j$ via $x_k$, and identify the weakest edge in each
path. Then, from all these indirect paths, we choose the one with
the largest weakest edge.
Given eq. \ref{TC1}, this is done not just for a single intermediary node $x_k$, but for every indirect path of $\kappa-1$ intermediary nodes.

Notice that if the edge weights are not weighted, then all transitive closure criteria established by the possible T-Conorm/T-Norm pairs collapse to the standard transitive closure of crisp graphs. In other words, if $r \in \{0,1\}$, for any acceptable pair $\langle \vee, \wedge \rangle $, $R^T$ given by eq. \ref{TC1} yields a graph where $r^T_{ij} =1$ iff there is a path between $x_i$ and $x_j$ in graph $R$, and $r^T_{ij} =0$, otherwise; if $R$ is a connected graph, then $R^T$ is a complete graph.

When the transitive closure $R^T(X, X)$ uses the T-Conorm $\vee = \texttt{maximum}$, with any T-Norm $\wedge$,
then $\kappa$ in eq. \ref{TC1} is finite and not larger than $|X|-1$
\cite{Klir1995}.
In other words, the transitive closure converges in finite time and can be easily computed using Algorithm 1 defined in appendix A.
It has also been shown  that if the algebraic structure $([0,1], \land, \lor )$ is a dioid \cite{Gondran2007}, then $\kappa$ in eq. \ref{TC1} is also finite \cite{Han2004,Han2007} (see Appendix A).
In this case, the transitive closure can be computed in finite time using Algorithm 2 defined in appendix A.
Two of the main examples of transitive closure we develop here (metric and ultra-metric closure, see \S \ref{max_closures}) use the T-Conorm $\vee = \texttt{maximum}$, and therefore can be computed in finite time.
The third example we focus on, the diffusion closure (\S \ref{diffusion_section}), is based on an algebraic structure $([0,1], \land, \lor )$ which is not a dioid. However, as we show below,
%
%
the utility of this closure for complex networks resides in the first few $\kappa$ steps, and therefore finite-time convergence is not required.

%
%

We say that a fuzzy graph $R(X, X)$ is a \emph{similarity} graph if it is reflexive ($r_{ii} = 1$), symmetric ($r_{ij} = r_{ji}$), and transitive; $R(X, X)$ is
a \emph{proximity} graph if it is reflexive and symmetric \cite{Klir1995}. The
transitive closure of a proximity graph is a similarity graph, but
because there are many ways to define transitivity based on distinct
disjunction/conjunction pairs, there are also many ways to define
similarity.

\subsection{Representing and Fusing Knowledge in proximity networks}
\label{knowledge_networks_section}

To build complex networks from multivariate data we can use a number of measures of the strength of variable interaction.
For instance, we have previously derived proximity graphs from a co-occurrence measure that is a natural weighted extension \cite{Rocha1999} \cite{rochabollen2001} \cite{Popescu2006} of the Jaccard similarity measure \cite{Grefenstette94}, which has been used extensively in computational intelligence \cite{nakamura} \cite{Rocha2005}.
This co-occurrence measure yields proximity graphs which represent the closeness or strength of association of variables interacting in networks (e.g. terms extracted from
documents, or users of a social networking web site).
Proximity graphs can thus be seen as
\emph{associative knowledge networks} that represent how often
elements co-occur in some dataset \cite{Rocha2002,rocha03NATO}.
%
%
%
%
%

Other co-occurrence measures can be used to capture a degree of
association or closeness between elements of two sets in a binary
relation. In information retrieval, in addition to variations of the
Jaccard measure, it is common to use the cosine \cite{BaezaYates99},
Euclidean \cite{Strehl} and even mutual information measures
\cite{turney01}.
%
%
%
Nonetheless, all of the theoretical work we develop below applies to any proximity graph (as defined above), independently of the measure used to obtain it from specific data sets.
Notice also that proximity graphs are symmetrical (undirected). This is desirable because below we study their isomorphic distance graphs---distance is by definition symmetric. However, our work is directly applicable to acyclical directed graphs. It is also extendable to cyclical directed graphs, but transitive closure needs to be computed via an available efficient algorithm in that case (e.g. \cite{Nuutila1994}), because $\kappa$ in eq. \ref{TC1} is not necessarily finite with cyclical graphs (Algorithms 1 and 2 in Appendix A are not guaranteed to halt).

%
%
%

%

\begin{figure}[!th]
\centerline{\includegraphics[width=0.8\textwidth]{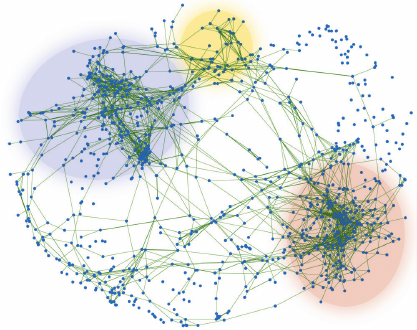}}
\caption{\small Social communities discovered in the proximity
network of journals accessed by users of the \emph{MyLibrary@LANL}
recommender system 
. In this proximity network,
journals are closer to one another, if they tend to co-occur in the
same user profile, and only in those. Drawn using the
Fruchterman-Reingold algorithm in Pajek 
.}
\label{myLib_IPP}
\end{figure}

Notice that a proximity graph allows us to capture network
associations rather than just pair-wise interactions. In other
words, we expect concepts or social communities to be organized in
more interconnected sub-graphs, modules, or clusters of items in the
proximity networks.
Figure \ref{myLib_IPP} depicts a
proximity network extracted from the recommender system we developed
for the MyLibrary service of the digital library at the Los Alamos
National Laboratory (LANL)---details in \cite{Rocha2005}. The elements in this
network are scientific journals, and the proximity edge weights were computed
from co-occurrence of journals in user profiles.
The Principal Component
Analysis (PCA) \cite{Wall2003} of this network revealed two main
clusters of journals. The first component (eigen-vector) refers to a set of journals
related to ``Chemistry, Materials science and Physics'' (left,
blue). The second component refers to a set of journals related to
``Computer Science and Applied Mathematics'' (right, orange).
A smaller third cluster in the figure refers to ``Bioinformatics and Computational
Biology'' (top, yellow) \cite{Rocha2005}.
The main clusters discovered in this network capture the research threads pursued at LANL. Being a nuclear weapons
laboratory, much of its research is concerned with Materials Science
and Physics on the one hand, and Simulation and Computer Science on
the other. Thus, the journal proximity network, produced from user
profiles, captured the main communities of scientists (the users of
MyLibray) at Los Alamos, as well as the knowledge associated with
these communities (characterized by the journals in the respective
components). Our user tests of the quality of recommendation
based on the community structure of this network were quite good \cite{Rocha2005}.
This exemplifies how proximity networks obtained from co-occurrence data capture the knowledge traded by social collectives.

Additionally, using proximity networks to capture and extract knowledge in the biomedical literature led to very high performance on various information extraction tasks \cite{verspoor05BMC, abihaidar_GB08, Kolchinsky2010} of the \emph{BioCreative} text mining competition \cite{biocreative_issue}.
We have also tested recommendation of movies based on the clusters of the proximity network of users obtained from the \emph{MovieLens} benchmark with very good results \cite{Simas2012}. This
exemplifies how proximity networks can be seen as effective,
knowledge and social structure representations.
Indeed, the clusters of similar items obtained using this approach are isomorphic to the recently proposed method of \emph{link communities} in complex networks, which were shown to be excellent at uncovering the natural hierarchical organization of networks \cite{Ahn2010}.

Since proximity networks capture knowledge entailed by multivariate data, it would be very useful to be able to ``fuse'' and logically combine networks obtained from distinct data sets or situations. For instance, given the journal network from LANL shown in Figure \ref{myLib_IPP}, we could compute how journals are related by the Los Alamos Community \emph{or} the community of institution $A$ \emph{and not} of institution $B$. In other words, it would be good to be able to make inferences on networks fused via logical expressions.
This \emph{network fusion} is a thread of research that the network science field has not dealt with, but which can be achieved via the \emph{approximate reasoning} methodology of Fuzzy logic and other many-valued logics \cite{Ying:1994}. However, in order to pursue such a \emph{network approximate reasoning}, we must constrain the algebraic structure $([0,1], \land, \lor )$ such that the pair $\langle \land, \lor \rangle$ obeys minimal axiomatic properties such as De Morgan's laws for a negation/complement operation. Below (\S \ref{axiomatics}) we show that the diffusion distance closure we propose is based on a pair $\langle \land, \lor \rangle$ from the Dombi family of T-Norms \cite{dombi82} which is closest to the metric closure (Dijkstra) but, unlike the latter, obeys De Morgan's laws for any involutive complement. Therefore, the T-Norm/T-Conorm pair used for the diffusion distance, is a good candidate to pursue approximate reasoning on networks---the development of which is outside of the scope of this article.

\subsection{Semi-metric behavior in distance networks}
\label{semimetric_background}

%

Here we study transitivity as a general topological
phenomenon of weighted graphs such as proximity networks---where it
can be computed in different ways.
While
%
%
the last decade witnessed
a tremendous amount of scientific production towards understanding
the structure of complex networks,
including the study of their topological features vis a vis the triangle inequality \cite{Serrano2008}, there is still much to be known about the effect of
various forms of transitivity on network structure.
%

To build up a more intuitive understanding of transitivity in
weighted graphs, and to be able to relate our results to the most common methods used in the complex network field, we convert our proximity graphs to distance graphs.
Distance can be seen intuitively as the opposite of proximity, and is the most common way to conceptualize (shortest) path length in complex networks, e.g. via the Dijkstra algorithm \cite{dijkstra} (see below).
Various functions can be used to convert one into the other. Perhaps
the most common way to convert a fuzzy proximity graph $R(X,X)$ to a distance graph $D(X,X)$
%
%
is to use the simplest proximity-to-distance conversion
function \cite{Rocha2002}\cite{Strehl}:


 \begin{equation}
\label{distance_measures4}
d_{ij} = \varphi (r_{ij})  = \frac{1}{r_{ij}} - 1,\quad \forall_{x_i,x_j \in X}
\end{equation}

%

\noindent where $d_{ij}$ are the entries of the adjacency matrix of the distance graph $D(X,X)$,
and $\varphi: [0,1] \rightarrow [0, \infty]$ is a distance function because it yields
nonnegative, symmetric ($d_{ij}=d_{ji}$), and anti-reflexive ($d_{ii}=0$) values \cite{galvin_shore91}.
%
A small
distance between elements implies a strong association between them.
%

In general, distance graphs obtained from  data, (e.g. via co-occurrence data) are not entirely metric because, for some pair
of elements $x_i$ and $x_j$, the \emph{triangle inequality} may be
violated: $d_{ij} \geq d_{ik} + d_{kj}$ for some
element $x_k$. This means that the shortest distance between two
elements in $D(X,X)$ is not necessarily the direct edge but rather an
indirect path. Distance functions that violate the triangle
inequality are referred to as \emph{semi-metrics}
\cite{galvin_shore91}.
We say that the edge between a pair of nodes $x_i$ and $x_j$ in a distance graph is semi-metric when there is at least one indirect path between the nodes whose distance is shorter
than the direct edge: $d_{ij} > d_{ik} + \cdots  +  d_{lm} + \cdots + d_{pj}$.
The intensity of \emph{semi-metric behavior} is computed by comparing (e.g. via a ratio) how much shorter the indirect path is in relation to the direct link \cite{Rocha2002}.
Pairs of elements with large semi-metric behavior denote a type of \emph{latent association}  \cite{Rocha2002}.
That is, an association which is not grounded on the direct evidence used to build the distance graph (e.g. co-occurrence data), but rather indirectly implied by the
overall network of associations captured by the graph.
%

%
%
%
%
%

\begin{figure}[!th]
\centerline{\includegraphics[width=130mm,height=110mm]{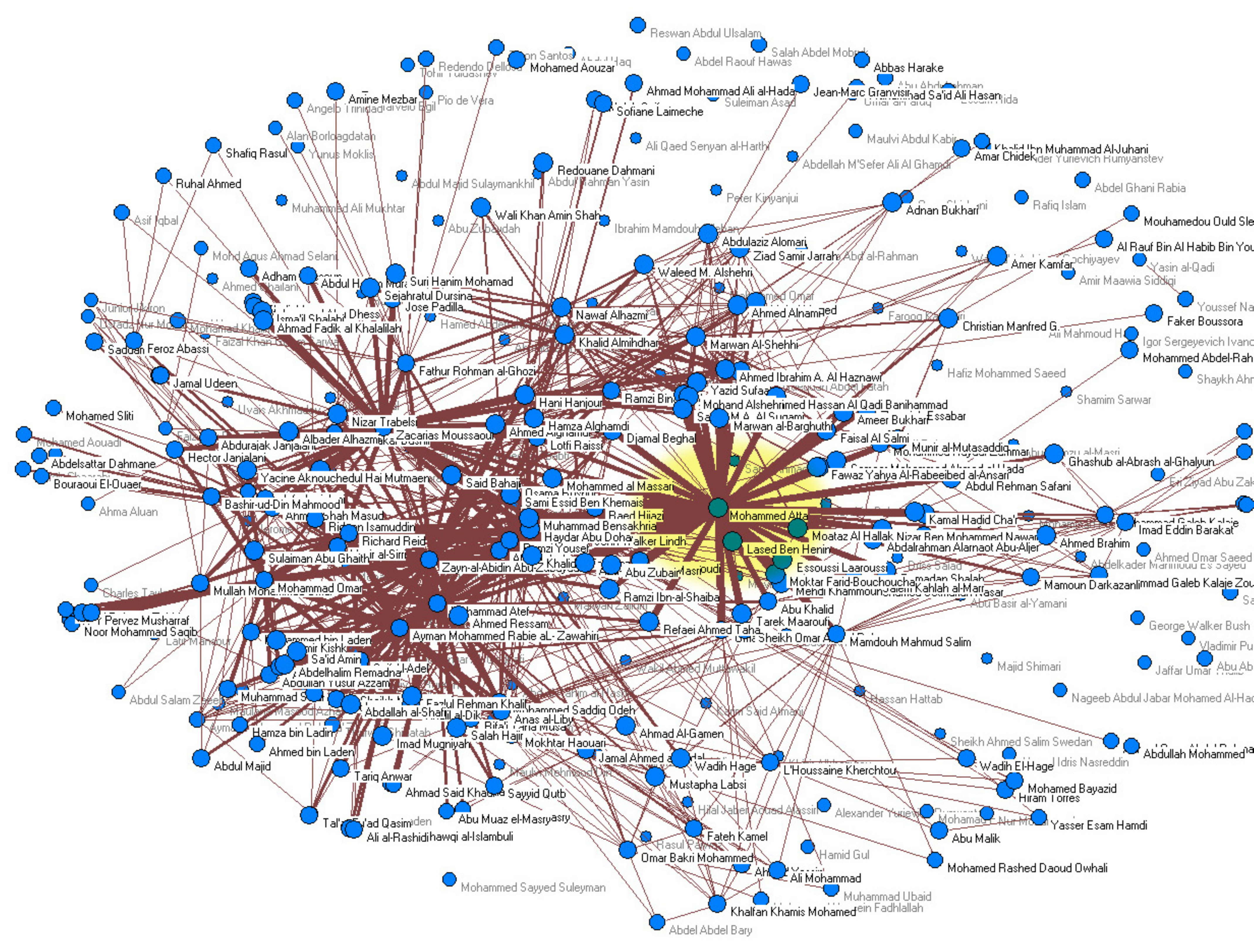}}
\caption{\small Terrorist proximity network obtained from
intelligence data related to the $9/11$ terrorist attacks on New York city and Washington DC; strongly semi-metric edges, shown with thicker lines. The node for Mohammed Atta is highlighted (yellow). The strong links out of this node, denote latent terrorist
associations not directly identified in the public-domain intelligence data, but highly
possible via indirect links picked by the semi-metric ratio. Drawn using the Fruchterman-Reingold algorithm in Pajek
 } \label{terr_net_semimetric}
\end{figure}

Rocha has proposed that in proximity graphs of keywords
extracted from documents, strong latent associations imply novelty in the temporal evolution of the network, and can thus be used to identify trends \cite{Rocha2002}. We have also
used and tested this idea, with good results, in a recommender
system that was implemented at LANL's digital library \cite{Rocha2005}.  In the
case of this service, a strong semi-metric association in the
journal network (figure \ref{myLib_IPP}) identifies a pair of
journals that hardly co-occur in user profiles, but which are
nonetheless very strongly implied via other journals which
co-occur with the pair.
The methodology also yielded competitive results in the \emph{MovieLens} benchmark \cite{Simas2012}, against the most common recommender system algorithms, and it has been used in the \texttt{givealink.org} project \cite{givealink05,givealink2006}.
We have also tested our method on social networks obtained from public-domain data about social interactions of terrorists associated with the September 11th attacks to the USA \cite{rocha02Terr}, showing that semi-metric information can identify valid, latent associations not directly observed in intelligence data (see Figure \ref{terr_net_semimetric}).



Clearly, semi-metric behavior intuitively captures a form of (geometric) transitivity, but in the distance realm.
Below, we show how it is one of many types of transitivity than can be usefully used in complex networks. But let us first discuss the computational aspects of characterizing the semi-metric behavior of a distance graph.
\subsection{Computing Semi-metric pairs: metric closure}
\label{semimetric_pairs_section}
%
%

The computation of all the shortest (indirect) paths between every pair of nodes in a distance graph is known as the \emph{All Pairs Shortest Paths} (APSP) problem,  one of the most fundamental algorithmic graph problems \cite{Zwick}.
The complexity of the fastest known algorithm for solving the APSP problem for weighted graphs is $O(mn+n^{2}log$ $n)$, where $n$ and $m$ are, respectively, the number of vertices and edges \cite{Brandes}. The most common approach to the APSP determines the distances of all pairs by calling the \emph{Single-Source Shortest-Path} (SSSP) Dijkstra algorithm $n$ times \cite{Brandes}\footnote{For directed cyclical graphs, before calling Dijkstra, this approach to the APSP uses the Bellman-Ford algorithm for removing all negative cycles and is known as Johnson's algorithm, which, for positive sparse weighted graphs, reduces to a time complexity $O(n^2log$ $n)$ \cite{Siek}. }. Here we refer to this algorithm as the APSP/Dijkstra algorithm. There are other approaches for solving the APSP problem, such as Floyd-Warshall algorithm \cite{Brandes}\cite{Siek}, but all of them fall in the $O(n^{3})$ complexity range \cite{Zwick}.

Notice that after computation of the APSP of a distance graph $D(X,X)$, we obtain its \emph{metric closure}.
In other words, we can construct a distance graph $D^{mc}(X,X)$, whose edges $d^{mc}_{ij}$ between any two elements $x_i$ and $x_j$ are defined by the shortest (direct or indirect) distance between them in $D(X,X)$.
An alternative way to compute the metric closure is to use the algorithm for transitive closure (Algorithm 1 in Appendix A), except that graph composition is done using the pair $(min, +)$  and instead of $\cup = max$ in step 1, we use $\cap = min$.
This method is also known as the \emph{distance product} which, after some simplifications, can reach a complexity of $O(n^{2.575})$, and is another approach to solving the APSP problem based on matrix operations \cite{Zwick}.

%
%
%
%
%

Using $D^{mc}$, we identify all semi-metric edges in
$D$, by collecting those edges for which $d_{ij} > d^{mc}_{ij}$ is true. An
%
example is shown with a network of terrorists in figure \ref{terr_net_semimetric}, where thickness of edges denotes intensity of semi-metric behavior.
Figure
\ref{computing_semimetric_behavior}, depicts the general process of computing semi-metric behavior given the
proximity-to-distance map \ref{distance_measures4}.

\begin{figure}[!th]
\centerline{\includegraphics[width=1\textwidth]{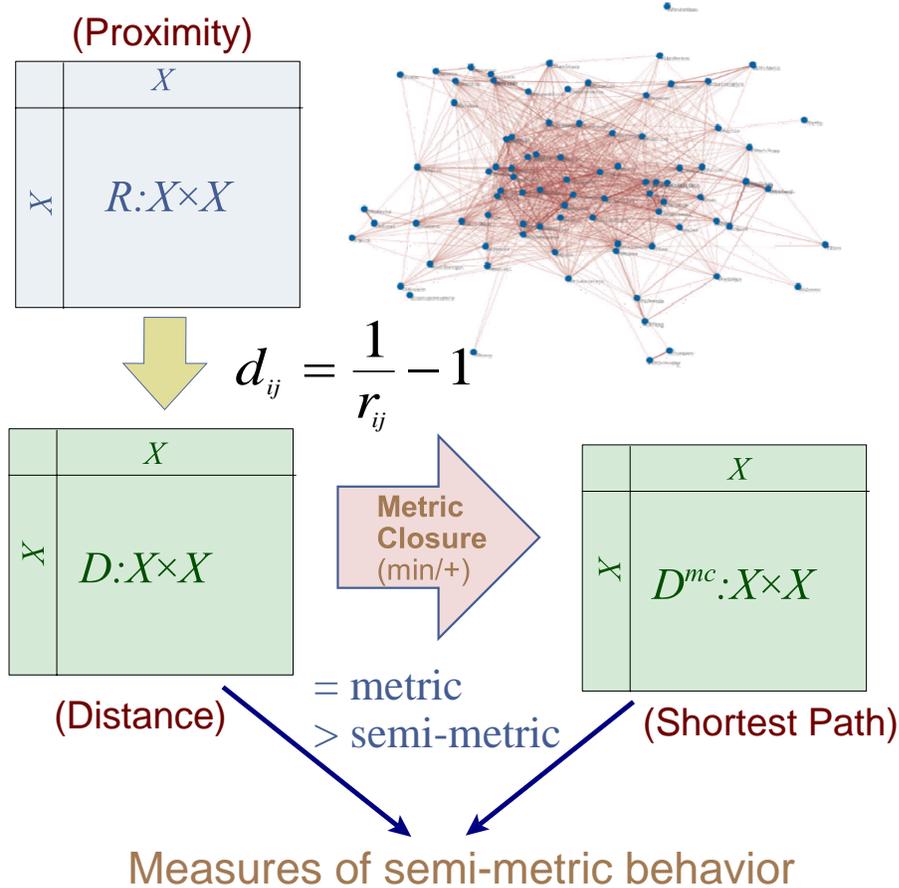}}
\caption{\small Computing semi-metric behavior. First,
a distance matrix/graph $D(X,X)$ is computed using eq.
\ref{distance_measures4}. Then, the metric closure of this matrix,
$D^{mc}$,  is computed using $(min, +)$ composition (distance product) or the APSP/Dijkstra algorithm. Semi-metric pairs are identified as: $d_{ij} > d^{mc}_{ij}$. }
\label{computing_semimetric_behavior}
\end{figure}

When we perform the metric closure, the geometry of the distance graph $D^{mc}$ is a \emph{distortion} of the geometry of the original graph $D$ obtained from data.
In other words, the original semi-metric topology extracted directly from data is forced to become metric (enforcement of the triangle inequality).
This is done by computing shortest paths in the most intuitive manner: summing edges in all paths and selecting the minimum.
However, there are many other possible ways to compute path length.
In the following sections, we define ageneral \emph{distance closure}, which includes the metric closure as a special case, and is shown to be isomorphic to the transitive closure in fuzzy (proximity) graphs.
This isomorphism allows us to use the formal edifice of (generalized) transitive closures of fuzzy graphs, on the theory and practice of complex networks modeled as weighted graphs.
Via this isomorphism we can use distinct transitive closures of fuzzy graphs to produce alternative measures of path length in distance graphs, which result in novel analytical possibilities for complex network models.
Each means of computing path length induces a distortion of the original relational data in a network, based on a specific transitivity criterion---e.g. the metric closure (APSP/Dijkstra algorithm) enforces a metric topology on a distance graph, where the transitivity criterion is the triangle inequality.
Additionally, some criteria are based on better axiomatic characteristics than others, as we discuss below.

The study of the geometry of complex networks has become increasingly relevant. For instance, there has been much interest in the assumption that the underlying geometry of complex networks is hyperbolic \cite{Krioukov2010}. This theory can explain their heterogeneous degree distributions and strong clustering, as simple reflections of the negative curvature of the underlying hyperbolic geometry, and can be useful to model biological \cite{Serrano2012} and technological networks \cite{Boguna2010}.
In our approach, we do not make claims about the underlying geometry of complex networks. Rather, we observe that most networks obtained directly from data via common measures (see \S \ref{knowledge_networks_section}) are strongly semi-metric, but are subsequently distorted via path length measures to become metric. Here, via the concept of transitive closure in fuzzy graphs, we want to study the various types of distortions one can impose on the original topology.

Notice further that our approach is not a generalization of shortest paths into \emph{fuzzy paths}, first introduced by Dubois and Prade \cite{dubois1980}, and extensively studied in the Fuzzy Sets community \cite{FSP1,FSP2,FSP3,FSP4}.
%
%
%
Rather than generalizing the concept of shortest path (e.g. assigning fuzzy numbers to graph edges or paths \cite{FSP3,FSP2}), we use algebraic path length measures on distance graphs, which we show to be isomorphic to the generalized transitivity criteria of fuzzy graphs.
%



\section{General distance closure}
\label{isomorph_section}

Transitive Closure is a well established algorithm in the theory of Fuzzy Graphs, used to calculate a similarity graph, whose edge weights are not weaker (by some transitivity criterion) than any indirect path between the same edge vertices.
%
In section \ref{semimetric_pairs_section} above, we defined the concept of metric closure, which is related to the APSP.
Metric closure is based on the very intuitive notions of Euclidean geometry, whereby path length is computed by summing constituent edge (distance) weights, and shortest paths are, in turn, picked by choosing the minimum path lengths---typically computed using the Dijkstra algorithm \cite{dijkstra} or the distance product \cite{Zwick,Rocha2002,Rocha2005}.
However, many other closures of distance graphs are possible, which we can easily formulate via an isomorphism $\varphi$ between proximity and distance graphs.
%

\subsection{Proximity to Distance Isomorphism}
\label{isomorphism_definition}

%
Henceforth, without loss of generality, let us define a weighted graph as $G = (X,E)$, where $X$ is the set of vertices (or variables) and $E$ is the set of edges, which can also be represented by an adjacency matrix $E$ whose entries denote the weights of edges $e_{ij}$ between vertices $x_i$ and $x_j$.
\emph{Proximity graphs}, are fuzzy graphs $G_P(X,P)$ represented by adjacency matrices $P$ whose edge weights $p_{ij} \in [0,1], \forall x_i, x_j \in X$, such that $p_{ij}=p_{ji}$ (symmetry) and $p_{ii}=1$ (reflexivity).
Moreover, the composition of proximity graphs used to compute their transitive closure utilizes the algebraic structure $I=\{[0,1],\lor,\land\}$ where $\lor,\land$ are, respectively, T-Conorm and T-Norm binary operations (see \S \ref{background_fuzzy}).
Similarly, \emph{distance graphs} $G_D(X,D)$ are represented by adjacency matrices $D$ defined by edge weights $d_{ij} \in [0,+\infty], \forall x_i, x_j \in X$, such that $d_{ij}=d_{ji}$ (symmetry) and $d_{ii}=0$ (anti-reflexivity).
An isomorphic composition of distance graphs, leading to a \emph{distance closure} utilizes the algebraic structure $II=\{[0,+\infty],f,g\}$ where $f,g: [0,+\infty] \times [0,+\infty] \rightarrow [0,+\infty]$ are two binary operations.


The map $\varphi: [0,1] \rightarrow [0,+\infty]$, which converts proximity ($p_{ij}$) to distance($d_{ij}$) weights, must satisfy the constraints imposed on $\langle \lor,\land,f,g \rangle$ and consequently on algebraic structures $I$ and $II$, such that the transitive closure of a proximity graph is isomorphic to the distance closure of a corresponding distance graph.
For instance, we show below that $\varphi$ is necessarily a generator function \cite{klement} of the \emph{T-Norm} $\land$ in algebraic structure $I$, when $\lor \equiv \max$, $f \equiv \min$, and $g \equiv +$ (see \S \ref{other_closures}).
There are no linear functions than can satisfy the necessary constraints, because it maps the unit interval $[0,1]$ into the positive real line $[0,+\infty]$. However, there is an infinity of non-linear functions that satisfy the necessary constraints, the simplest of which is the map of formula \ref{distance_measures4}.
As we show below, each non-linear map $\varphi$ that satisfies the isomorphism constraints enforces a particular topological distortion of the original proximity graph used to construct the distance graph, which ultimately determines the way we compute path length and shortest paths.
This poses us with a problem of degeneracy of solutions to computing the distance closure of weighted graphs. Therefore, if we want to understand and make appropriate inferences about path lengths in complex networks, since an infinity of distance closures are possible, we should better understand the space of non-linear functions that enable isomorphism $\varphi$, which we approach below.
Figure ~\ref{fig3} depicts the isomorphism between proximity and distance graphs and algebraic structures $I$ and $II$.

\begin{figure}[h]
\centering
\includegraphics[width=75mm,height=55mm]{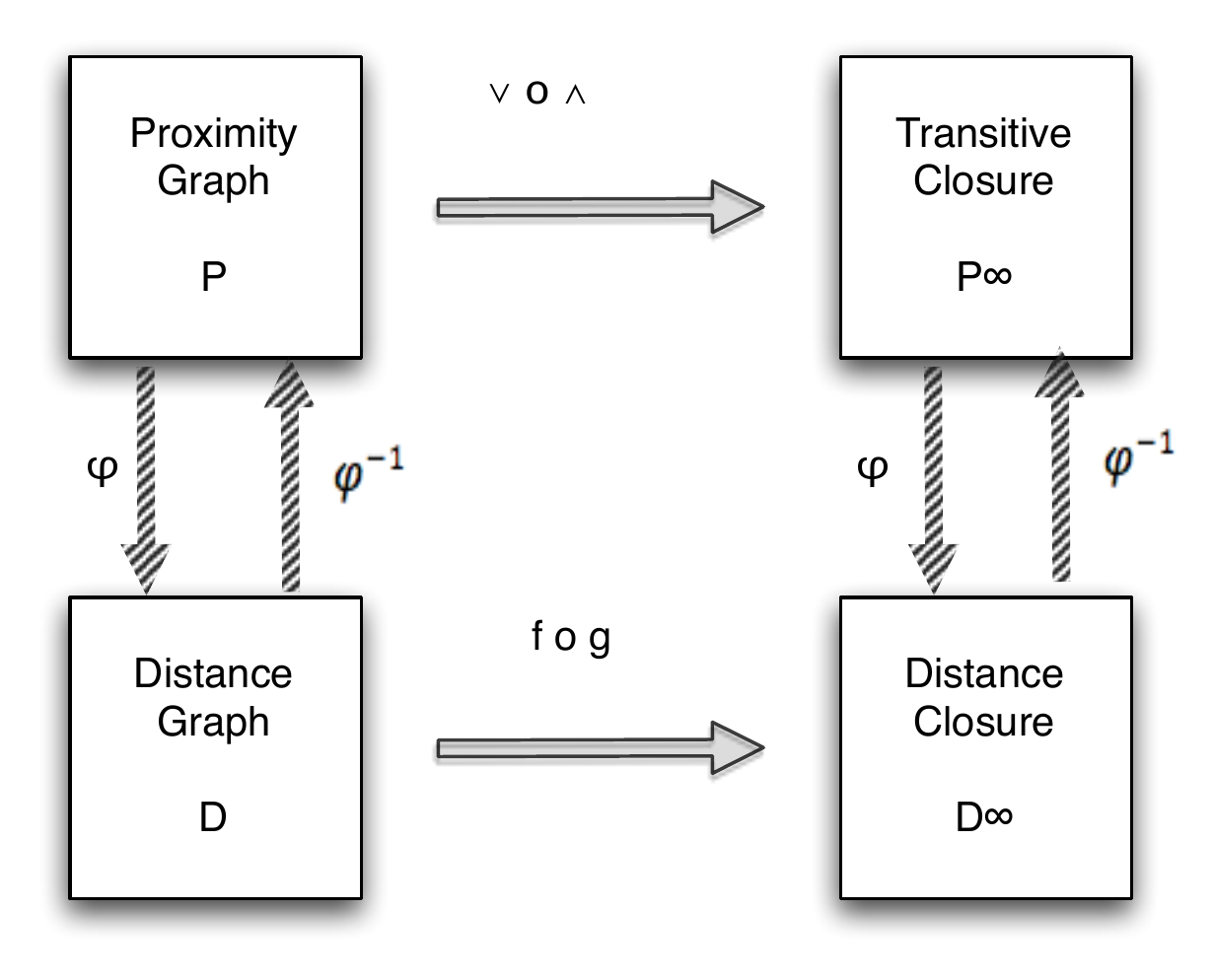}
\caption{Transitive and Distance Closure Isomorphism.\label{fig3}}
\end{figure}



%

%

%

\begin{definition}{\bf{(Graph Isomorphism)}}
\label{isomorphism}
 Two undirected weighted graphs $G_{1}=(X,E_{1})$ and $G_{2}=(X,E_{2})$ are isomorphic if there is a vertex-preserving bijective edge mapping $\varphi:E_{1} \to E_{2}$, i.e. a bijection $\varphi$ with  \[\forall x_i,x_j \in X : e_{ij} \in E_{1} \Leftrightarrow \varphi(e_{ij})\in E_{2}\]
\end{definition}

\begin{definition}{\bf{(Proximity to Distance Map)}}
\label{def1}
Let $\varphi:[0,1]\rightarrow [0,+\infty]$, $d_{ij}=\varphi(p_{ij})$, be a function that maps the edge weights $p_{ij}\in[0,1]$ of a fuzzy proximity graph $G_{P}=(X,P)$ into the edge weights $d_{ij}\in[0,+\infty]$ of a distance graph $G_{D}=(X,D)$, $\forall x_i,x_j \in X$. Let also $\Phi:[0,1]\times[0,1]\rightarrow[0,+\infty]\times[0,+\infty]$ be the graph function that maps the proximity adjacency matrix into the distance adjacency matrix, $D=\Phi(P)$. We define $\varphi$ and $\Phi$ in the following way:\\
\\(1) $\varphi$ is strictly monotonic decreasing, $\forall a,b \in [0,1]: a>b \Rightarrow \varphi(a) < \varphi(b)$;\\
(2)  $\varphi(0)=\infty$ and $\varphi(1)=0$;\\
(3) $\Phi(P)=[\varphi(p_{ij})]$, $\forall x_i,x_j \in X$ (It is a matrix function).
\end{definition}

Because $\varphi$ is a real valued function and it is strictly monotonic it is also bijective, therefore the graphs $G_{P}$ and $G_{D}$ are isomorphic via map $\Phi$, with the same set of vertices $X$.
The simplest example of such a function is the map of formula \ref{distance_measures4}.
To better understand the constraints of this isomorphism, below we provide a mathematical analysis with a few simple theorems---the proofs of which, unless otherwise specified, are included in appendix B of the supplementary materials.
Should the reader be interested exclusively on the results, the important formulae for the subsequent sections are eq. \ref{DC1} (distance closure), the isomorphism constraints of eq. \ref{eq_constraint_g}-\ref{eq_constraint_and}, and eq. \ref{distortion} (distortion).

\begin{theorem}
\label{theorem1}
 Let $G_{P}=(X,P)$ be a proximity (symmetric and reflexive) graph and $\Phi$ the graph distance function of definition ~\ref{def1}, then $G_{D}=(X,D)$, where $D=\Phi(P)$, is symmetric and anti-reflexive.
\end{theorem}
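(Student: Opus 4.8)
The plan is to verify each of the two defining properties of a distance graph---symmetry and anti-reflexivity---directly from the entry-wise action of $\Phi$, combined with the two structural hypotheses on $P$ (symmetry and reflexivity) and the relevant scalar properties of $\varphi$ from Definition~\ref{def1}. First I would unpack the matrix function: by property~(3) of that definition, $D=\Phi(P)$ has entries $d_{ij}=\varphi(p_{ij})$ for all $x_i,x_j\in X$. This immediately reduces the whole statement to a pointwise argument about the scalar map $\varphi$ applied to the entries of $P$.

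For symmetry, I would invoke that $P$ is symmetric, so $p_{ij}=p_{ji}$, and that $\varphi$ is a single-valued function, so equal inputs force equal outputs: $d_{ij}=\varphi(p_{ij})=\varphi(p_{ji})=d_{ji}$. For anti-reflexivity, I would use reflexivity of $P$, namely $p_{ii}=1$, together with property~(2) of $\varphi$, namely $\varphi(1)=0$, to conclude $d_{ii}=\varphi(p_{ii})=\varphi(1)=0$. For completeness I would also remark that since $p_{ij}\in[0,1]$ and $\varphi\colon[0,1]\to[0,+\infty]$, every entry $d_{ij}$ lies in $[0,+\infty]$, so $D$ is a well-formed distance adjacency matrix in the sense of the isomorphism setup.

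There is no genuine obstacle here: the claim follows immediately once the matrix function is expanded entry-wise, and in fact neither strict monotonicity nor $\varphi(0)=\infty$ is needed for this particular result---those hypotheses are reserved for establishing the bijectivity and hence the graph isomorphism addressed elsewhere. The only point requiring mild care is to keep the argument at the level of individual matrix entries and, for symmetry, to appeal solely to the well-definedness of $\varphi$ rather than to any of its stronger order-theoretic properties.
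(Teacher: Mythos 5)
Your proof is correct and follows essentially the same route as the paper's: anti-reflexivity from $p_{ii}=1$ and $\varphi(1)=0$, and symmetry from $p_{ij}=p_{ji}$ applied entry-wise through $\varphi$. Your side remark that only well-definedness of $\varphi$ is needed for symmetry is accurate (the paper's proof nominally invokes bijectivity there, which is stronger than required), but this does not change the substance of the argument.
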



Next we define the pair of binary operations $\langle f,g \rangle$ of algebraic structure $II$, which operate on distance graphs.

\begin{definition}{\bf{(TD-norms and TD-conorms)}}
 \label{def4a}
Let $f,g:[0,+\infty]\times [0,+\infty]\rightarrow [0,+\infty]$, such that for all $a,b,c \in [0,+\infty]$ the following four axioms are satisfied:\\(1)  $f(a,b)=f(b,a)$, $g(a,b)=g(b,a)$ (commutativity). \\ (2) $f(a,f(b,c))=f(f(a,b),c)$, $g(a,g(b,c))=g(g(a,b),c)$  (associativity).
\\(3) $f(a,b)\leq f(a,c)$, $g(a,b)\leq g(a,c)$, whenever $b \leq c$ (monotonicity).
\\(4) $f(a,\infty) = a$, $g(a,0)=a$, with $a\leq \infty$ (boundary conditions).

\noindent We refer to $g$ as a TD-norm and to $f$ as a TD-conorm.
\end{definition}

\begin{theorem}
\label{theorem2}
 If $\varphi$ is a distance function as in definition ~\ref{def1}. For every pair of T-Norm/T-Conorm operations $\langle \land,\lor \rangle$, there exists a pair of operations  $\langle f, g \rangle$ a TD-conorm/TD-norm (definition \ref{def4a}) and vice versa, obtained via the following constraints:\\
\\(1) $\varphi(a \land b) = g(\varphi(a),\varphi(b))$;
\\(2) $\varphi(a \lor b)= f(\varphi(a),\varphi(b))$.
\\
\\ Where $a,b \in [0,1]$.
\end{theorem}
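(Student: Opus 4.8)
The plan is to exploit the fact, noted in the remark following Definition~\ref{def1}, that $\varphi$ is a strictly monotonic decreasing bijection, so that its inverse $\varphi^{-1}:[0,+\infty]\to[0,1]$ exists, is also strictly decreasing, and satisfies $\varphi^{-1}(0)=1$ and $\varphi^{-1}(\infty)=0$. Given a T-Norm/T-Conorm pair $\langle\land,\lor\rangle$, I would simply \emph{define} the candidate operations by conjugation,
\[
g(x,y)=\varphi\bigl(\varphi^{-1}(x)\land\varphi^{-1}(y)\bigr),\qquad
f(x,y)=\varphi\bigl(\varphi^{-1}(x)\lor\varphi^{-1}(y)\bigr),
\]
for $x,y\in[0,+\infty]$. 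Substituting $x=\varphi(a)$ and $y=\varphi(b)$ and using $\varphi^{-1}\circ\varphi=\mathrm{id}$ makes constraints (1) and (2) hold by construction, so the only real work is to verify that $\langle f,g\rangle$ satisfies the four axioms of Definition~\ref{def4a}.

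First I would dispatch commutativity and associativity, which transfer directly from $\land$ and $\lor$; for associativity the key observation is that $\varphi^{-1}(g(y,z))=\varphi^{-1}(y)\land\varphi^{-1}(z)$, so the inner $\varphi\circ\varphi^{-1}$ cancels and the bracketing can be regrouped using associativity of $\land$ (and symmetrically for $f$). The boundary conditions then follow from the identity elements: since $\varphi^{-1}(0)=1$ is the unit of the T-Norm, $g(a,0)=\varphi(\varphi^{-1}(a)\land 1)=\varphi(\varphi^{-1}(a))=a$, and dually $\varphi^{-1}(\infty)=0$ is the unit of the T-Conorm, giving $f(a,\infty)=a$.

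The step I expect to require the most care is monotonicity, precisely because the decreasing character of $\varphi$ reverses inequalities \emph{twice}. If $b\le c$ then $\varphi^{-1}(b)\ge\varphi^{-1}(c)$; applying monotonicity of the T-Norm gives $\varphi^{-1}(a)\land\varphi^{-1}(b)\ge\varphi^{-1}(a)\land\varphi^{-1}(c)$; and a final application of the decreasing $\varphi$ yields $g(a,b)\le g(a,c)$, which is exactly the increasing monotonicity demanded of a TD-norm (and identically for $f$). The two sign reversals cancelling is the only place where the argument could go astray, so I would track it explicitly rather than leave it as routine.

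Finally, for the ``and vice versa'' direction I would run the identical construction with the roles of $\varphi$ and $\varphi^{-1}$ exchanged: given $\langle f,g\rangle$ obeying Definition~\ref{def4a}, set $a\land b=\varphi^{-1}(g(\varphi(a),\varphi(b)))$ and $a\lor b=\varphi^{-1}(f(\varphi(a),\varphi(b)))$, then verify the T-Norm/T-Conorm axioms (commutativity, associativity, monotonicity, and the identities $a\land 1=a$, $a\lor 0=a$) by the same conjugation bookkeeping, now invoking $\varphi(1)=0$ and $\varphi(0)=\infty$. Because $\varphi^{-1}$ is itself a strictly decreasing bijection with the analogous structural properties, this direction is formally symmetric to the first, and together the two constructions, being mutually inverse, establish the claimed bijective correspondence.
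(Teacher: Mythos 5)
Your proposal is correct, but it takes a genuinely different route from the paper. The paper's proof fixes a candidate pair $\langle f,g\rangle$ and argues by contradiction: it supposes $\varphi(a\land b) > g(\varphi(a),\varphi(b))$ (and then $<$), and rules each inequality out by comparing the extremal cases --- the drastic product versus the minimum for $\land$, and the boundary values $\varphi(b)=0$ or $a=0$ for $g$ --- concluding that equality must hold. You instead \emph{construct} the pair outright by conjugation, $g(x,y)=\varphi\bigl(\varphi^{-1}(x)\land\varphi^{-1}(y)\bigr)$ and $f(x,y)=\varphi\bigl(\varphi^{-1}(x)\lor\varphi^{-1}(y)\bigr)$, so that constraints (1) and (2) hold by definition, and then verify the four axioms of Definition~\ref{def4a} directly; the converse direction is the symmetric construction with $\varphi$ and $\varphi^{-1}$ exchanged. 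Your approach buys several things: it is constructive, it gives uniqueness for free (since $\varphi$ is a bijection the constraints determine $f$ and $g$ completely), and your explicit tracking of the double sign reversal in the monotonicity step is exactly the point that needs care and that the paper's extremal-case argument glosses over. Indeed, the paper's proof only compares the largest possible value of one side against the smallest possible value of the other over all choices of T-Norm and arguments, which does not by itself pin down equality for a \emph{fixed} pair; your argument avoids that looseness entirely. Note that the conjugation formulas you write down are precisely the constraint equations (\ref{eq_constraint_g})--(\ref{eq_constraint_and}) that the paper later extracts from Theorem~\ref{theorem11}, so your proof also makes the logical dependence between Theorems~\ref{theorem2} and~\ref{theorem11} more transparent.
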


\begin{definition} {\bf{(n-Power of Proximity Graph)}}
 \label{def2}
Let $G_{P}=(X,P)$ be a fuzzy proximity graph. We define the n-power of $P$ as \[P^{n}=\underbrace{P \circ P \circ \cdots \circ P}_{n},\] where the composition of proximity graphs is given by (see also \S \ref{background_fuzzy}):
$$P \circ P = \bigvee_{k} \bigwedge (p_{ik},p_{kj})=p^2_{ij}, \forall x_i,x_j,x_k \in X$$
\end{definition}

\begin{definition} {\bf{(Transitive Closure of Proximity Graph)}}
 \label{def3}
The transitive closure $G_{P}^T(X,P^T)$ of a proximity graph $G_P(X, P)$ is given by:
\[P^{T} = \bigcup_{n=1}^{\kappa} P^n \]
\noindent Where $\cup$ is defined by the same T-Conorm used to produce each n-power. In the most general case, $\kappa \rightarrow \infty$ \cite{Gondran2007}, but with reasonable constrains (see below), the transitive closure of a finite proximity graph converges for a finite $\kappa$ (see also \S \ref{background_fuzzy}).
\end{definition}

Next we focus on distance graphs and algebraic structure $II=\{[0,+\infty],f,g\}$.

\begin{definition}{\bf{(n-Power of Distance Graph)}}
 \label{def4}
Let $G_{D}=(X,D)$ be a distance graph. We define the n-power of $D$ as \[D^{n} =\underbrace{D\circ D\circ \cdots \circ D}_{n} \]  where the composition of distance graphs is given by:
 $$D \circ D = \mathop{f}\limits_{k} \, g (d_{ik},d_{kj})=d^2_{ij}, \forall x_i,x_j,x_k \in X$$
\noindent where  $\langle f, g \rangle$ are a TD-conorm/TD-norm pair per definition \ref{def4a}.
\end{definition}


\begin{definition}{\bf{(Distance Closure)}}
 \label{def6b}
The distance closure $G_{D}^T(X,D^T)$ of a distance graph $G_D(X, D)$ is given by:
\begin{equation}
\label{DC1}
D^{T} = \dot \cap_{n=1}^{\kappa \rightarrow \infty} D^n
\end{equation}
\noindent where $\dot \cap$ is defined by the same TD-Conorm $f$ used to produce each n-power of the distance graph.
\end{definition}


\begin{theorem}
\label{theorem5}
If $G_{P}=(X,P)$ is a fuzzy proximity graph and $G_{D}=(X,D)$ is the distance graph obtained from $G_P$ via $D=\Phi(P)$, where $\Phi$ is the isomorphism (distance function) in definition ~\ref{def1}, then the following statements are true:

1) $\Phi (P)\dot \supseteq \Phi (P^{2} )\dot \supseteq \Phi (P^{3} )\dot \supseteq \cdots \supseteq \Phi (P^{\infty } )$ ;

2) $D\dot \supseteq D^{2} \dot \supseteq D^{3} \dot \supseteq \cdots \dot \supseteq D^{\infty } $.
\end{theorem}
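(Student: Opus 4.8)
The plan is to build the monotone chain first on the proximity side, where reflexivity makes it transparent, and then transport it verbatim to the distance side through the isomorphism. The single fact that drives everything is that proximity graphs are reflexive ($p_{ii}=1$) and that $1$ is the identity of the T-Norm $\land$; all else is bookkeeping. Throughout I read $A\dot\supseteq B$ as the entrywise relation $a_{ij}\ge b_{ij}$ for all $x_i,x_j\in X$ (i.e. $A$ is everywhere at least as distant as $B$), which is the reading that makes the asserted chain point from $D$ down to $D^\infty$.

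First I would show that the powers of $P$ form an increasing chain, $P\subseteq P^2\subseteq P^3\subseteq\cdots$, i.e. $p^n_{ij}\le p^{n+1}_{ij}$ for every $n$ and every pair $x_i,x_j$. Writing $P^{n+1}=P^n\circ P$ and isolating the $k=j$ term of the disjunction, one gets $p^{n+1}_{ij}=\bigvee_k (p^n_{ik}\land p_{kj})\ge p^n_{ij}\land p_{jj}=p^n_{ij}\land 1=p^n_{ij}$, where the inequality uses that a T-Conorm is monotone with identity $0$ (so $a\lor b\ge a$), and the final equalities use reflexivity together with the T-Norm boundary condition $a\land 1=a$. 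This is the crux of the argument.

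Next I would push this chain through $\varphi$. Since $\varphi$ is strictly decreasing (Definition \ref{def1}), $p^n_{ij}\le p^{n+1}_{ij}$ gives $\varphi(p^n_{ij})\ge \varphi(p^{n+1}_{ij})$, so the distances do not increase, which is exactly $\Phi(P^n)\dot\supseteq \Phi(P^{n+1})$ and proves statement (1). To obtain statement (2) I would verify $D^n=\Phi(P^n)$ by induction on $n$: the base case is $D=\Phi(P)$ by construction, and the inductive step uses Theorem \ref{theorem2}, since $g(d^{n}_{ik},d_{kj})=g(\varphi(p^n_{ik}),\varphi(p_{kj}))=\varphi(p^n_{ik}\land p_{kj})$ while $f$ transports the disjunction through $\varphi$, giving $d^{n+1}_{ij}=\varphi\bigl(\bigvee_k(p^n_{ik}\land p_{kj})\bigr)=\varphi(p^{n+1}_{ij})$. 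With $D^n=\Phi(P^n)$ in hand, statements (1) and (2) coincide. As a consistency check I would also record the intrinsic distance-side argument, which invokes Theorem \ref{theorem1} (anti-reflexivity $d_{jj}=0$) and the boundary conditions $g(a,0)=a$, $f(a,\infty)=a$ of Definition \ref{def4a}: since $f$ is monotone with $f(a,\infty)=a$ we have $f(x,y)\le\min(x,y)$, whence $d^{n+1}_{ij}\le g(d^n_{ij},d_{jj})=g(d^n_{ij},0)=d^n_{ij}$.

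Finally, for the terminal terms $\Phi(P^\infty)$ and $D^\infty$ I would argue convergence. The proximity powers are nondecreasing and bounded above by $1$, so they converge (and for a finite graph stabilize) to $P^\infty=\bigcup_n P^n=P^T$ as in Definition \ref{def3}; correspondingly the distances are nonincreasing and bounded below by $0$, converging to $D^\infty$. Passing to the limit in each pairwise inequality yields $\Phi(P^n)\dot\supseteq\Phi(P^\infty)$ and $D^n\dot\supseteq D^\infty$ for all $n$, completing both chains. The only delicate point is the limit, but since the sequences are monotone and bounded it is immediate; the genuine mathematical content lies entirely in the reflexivity step of the first paragraph.
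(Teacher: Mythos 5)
Your proposal is correct and follows essentially the same route as the paper: establish the increasing chain of proximity powers from reflexivity and the T-Norm/T-Conorm boundary conditions (a fact the paper simply cites from the fuzzy-graph literature), transport it through the strictly decreasing $\varphi$, and verify the distance-side chain intrinsically using $d_{jj}=0$, $g(a,0)=a$ and $f(a,\infty)=a$. Your direct inequality $f(x,y)\le\min(x,y)$ is a cleaner rendering of the paper's proof by contradiction for statement (2), but the underlying ideas are identical.
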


\noindent where $\Phi(P^{n})\dot \supseteq \Phi(P^{n+1})$ means that: $\forall x_i,x_j \in X: \varphi(p^{n}_{ij}) \geq \varphi(p^{n+1}_{ij})$, and $D^{n} \dot \supseteq D^{n+1}$ means that: $\forall x_i,x_j \in X: d^{n}_{ij} \geq d^{n+1}_{ij}$.

Proof in appendix B.

\begin{theorem}
\label{theorem11}
Given a proximity graph $G_{P}=(X,P)$, a distance graph $G_{D}=(X,D)$, and the isomorphism $\varphi$ and $\Phi$ of definition  \ref{def1},  for any algebraic structure $I = ([0,1], \land, \lor )$ with a T-Conorm/T-Norm pair $\langle \land, \lor \rangle$ used to compute the transitive closure of $P$, there exists an algebraic structure $II = ([0,+\infty], f, g )$ with a TD-conorm/TD-norm pair $\langle f, g \rangle$ to compute the isomorphic distance closure of $D$, $D^{T} = \Phi(P^{T})$, which obeys the condition:

\[\forall x_i,x_j,x_k \in X:\mathop{f}\limits_{k} (g(\varphi (p_{ik}),\varphi (p_{kj})) ) = \varphi(\mathop{\lor}\limits_{k} ( (p_{ik}\land p_{kj})))\]

\noindent and vice-versa if we fix $\langle f, g \rangle$ (TD-norm/TD-Conorm) and isomorphism $\varphi$, to obtain $\langle \lor, \land \rangle$:

\[ \forall x_i,x_j,x_k \in X: \mathop{\lor}\limits_{k}( \varphi^{-1} (d_{ik}) \land \varphi^{-1} (d_{kj})) = \varphi^{-1} ( \mathop{f}\limits_{k} (g(d_{ik},d_{kj})) ) \]

\noindent where $\varphi ^{-1}$ is the inverse function of $\varphi$.
\end{theorem}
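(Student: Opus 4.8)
The plan is to reduce the claimed identity on the graph-composition operations to the pointwise correspondence already established in Theorem~\ref{theorem2}, and then lift that correspondence from single binary operations to the iterated aggregations $\mathop{f}\limits_{k}$ and $\mathop{\lor}\limits_{k}$ by induction on the (finite) number of intermediate vertices $x_k \in X$. Throughout I would use that $X$ is finite, so each aggregation over $k$ is a finite fold of an associative, commutative binary operation and is therefore well defined independently of the bracketing and ordering. Note that the existence of a matched pair $\langle f,g\rangle$ is handed to us directly by Theorem~\ref{theorem2}; the real content is verifying that the single-step composition commutes with $\varphi$.

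For the forward direction, I would fix $i,j$ and rewrite each inner term using constraint (1) of Theorem~\ref{theorem2}, namely $g(\varphi(p_{ik}),\varphi(p_{kj})) = \varphi(p_{ik}\land p_{kj})$. This turns the left-hand side into $\mathop{f}\limits_{k}\varphi(p_{ik}\land p_{kj})$, so it remains to prove $\mathop{f}\limits_{k}\varphi(q_k) = \varphi(\mathop{\lor}\limits_{k} q_k)$, where I abbreviate $q_k = p_{ik}\land p_{kj}$. The base case of two terms is exactly constraint (2) of Theorem~\ref{theorem2}, $f(\varphi(a),\varphi(b)) = \varphi(a\lor b)$. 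For the inductive step I would peel off one term and use associativity of $f$ (axiom (2) of Definition~\ref{def4a}) together with associativity of $\lor$ (the T-Conorm property from \S\ref{background_fuzzy}): the fold over the first $n$ terms equals $\varphi(\mathop{\lor}\limits_{k\le n} q_k)$ by the induction hypothesis, and one further application of constraint (2) collapses the outer $f$ into $\varphi(\mathop{\lor}\limits_{k\le n+1} q_k)$. Commutativity of both operations guarantees the result does not depend on the enumeration of the vertices.

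The converse direction is symmetric. Since $\varphi$ is strictly monotonic and hence bijective by Definition~\ref{def1}, the inverse $\varphi^{-1}$ exists; inverting the two constraints of Theorem~\ref{theorem2} yields $\varphi^{-1}(g(u,v)) = \varphi^{-1}(u)\land \varphi^{-1}(v)$ and $\varphi^{-1}(f(u,v)) = \varphi^{-1}(u)\lor \varphi^{-1}(v)$. Substituting $u = d_{ik}$, $v = d_{kj}$ and running the same induction over $k$ gives the stated identity $\mathop{\lor}\limits_{k}(\varphi^{-1}(d_{ik})\land \varphi^{-1}(d_{kj})) = \varphi^{-1}(\mathop{f}\limits_{k} g(d_{ik},d_{kj}))$. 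The assertion $D^{T} = \Phi(P^{T})$ then follows by iterating this single-step equality across the $n$-powers (Definitions~\ref{def2} and \ref{def4}) and the union/intersection defining the two closures (Definitions~\ref{def3} and \ref{def6b}), invoking Theorem~\ref{theorem5} to keep both descending chains compatible.

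The main obstacle I anticipate is bookkeeping rather than conceptual: I must make sure the iterated operators $\mathop{f}\limits_{k}$ and $\mathop{\lor}\limits_{k}$ are genuinely well defined before the induction, which requires commutativity and associativity of $f,g,\lor,\land$ so that the fold is order-independent; and I must check that the pairing supplied by Theorem~\ref{theorem2} is exactly the matched $\langle f,g\rangle$ versus $\langle \lor,\land\rangle$ demanded here, so that no compatibility condition is silently assumed. A secondary point is the treatment of the boundary values $\varphi(0)=\infty$ and $\varphi(1)=0$ (Definition~\ref{def1}) together with the boundary conditions of Definition~\ref{def4a}, to confirm the identities remain valid at the extreme edge weights, \eg absent or self-loop edges.
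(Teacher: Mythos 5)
Your proposal is correct and rests on the same foundation as the paper's proof: both reduce the theorem to the pointwise binary correspondence of Theorem~\ref{theorem2} and then lift it by induction. The difference is where each argument spends its inductive effort. The paper's proof inducts on the composition power $n$ — showing $\Phi^{n+1}(P)=\Phi(P^{n+1})$ from $\Phi^{n}(P)=\Phi(P^{n})$ and the single-step condition — but it essentially takes for granted the step $\mathop{f}\limits_{k}\varphi(q_k)=\varphi(\mathop{\lor}\limits_{k}q_k)$, writing it as an immediate consequence of ``the condition in this theorem,'' which reads as circular since that condition is precisely what is being established. You, by contrast, make that step the centerpiece: an explicit induction on the number of intermediate vertices $x_k$, using associativity and commutativity of $f$ and $\lor$ to justify the well-definedness of the folds and Theorem~\ref{theorem2}(2) as the base case. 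That is the more careful treatment of the one genuinely delicate point, and it buys a non-circular derivation of the displayed identity. Conversely, you compress the part the paper does in detail — the iteration across $n$-powers and the closure union/intersection to conclude $D^{T}=\Phi(P^{T})$ — into a final sentence; for a complete write-up you would want to spell out that induction on $n$ (as the paper does), including the observation that the union $\cup$ and intersection $\dot\cap$ in Definitions~\ref{def3} and \ref{def6b} are themselves instances of $\lor$ and $f$ and hence also commute with $\varphi$ by Theorem~\ref{theorem2}(2). Your closing concerns about boundary values ($\varphi(0)=\infty$, $\varphi(1)=0$) are legitimate and are not addressed in the paper's proof either.
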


The conditions of this theorem lead to the following constraint equations that isomorphism $\varphi$ enforces on algebraic structures $I$ and $II$ (as shown in the proof for theorem \ref{theorem11} in appendix B):

\begin{equation}
\label{eq_constraint_g}
g(d_{ik},d_{kj})=\varphi(\varphi^{-1}(d_{ik})\land \varphi^{-1}(d_{kj}))
\end{equation}

\begin{equation}
\label{eq_constraint_f}
f(d_{ik},d_{kj}) \equiv \varphi ( \varphi^{-1}(d_{ik})\lor \varphi^{-1}(d_{kj}))
\end{equation}


\begin{equation}
\label{eq_constraint_or}
p_{ik}\lor p_{ki}=\varphi^{-1}(f(\varphi(p_{ik}),\varphi(p_{ki})))
\end{equation}
\begin{equation}
\label{eq_constraint_and}
p_{ik}\land p_{kj}=\varphi^{-1}(g(\varphi(p_{ik}),\varphi(p_{ki})))
\end{equation}


Since many possible transitive (distance) closures are possible, it is important to measure how much a closure defined by a given T-Norm/T-Conorm pair $\langle \land, \lor \rangle$ (or TD-conorm/TD-norm pair $\langle f,g \rangle$) distorts the original proximity (distance) graph in the isomorphism space of Theorem \ref{theorem11}.
We define \emph{distortion}, $\Delta$, as the sum of the differences between the edges in the original graph and the edges obtained by a given closure.

\begin{equation}
\label{distortion}
\Delta(P)=\sum_i\sum_j|p_{ij}^T-p_{ij}|
\end{equation}


Theorem \ref{theorem11} specifies the isomorphism constraint on $\langle f, g \rangle$ given $\langle \lor, \land \rangle$, and $\varphi$, or, alternatively, the constraint on $\langle \lor, \land \rangle$ given $\langle f, g \rangle$, and $\varphi$. This allows us to study several closure scenarios, which lead to different distortions of the original graphs.
%
%
Given this space of possible transitivity criteria, it is reasonable
to ask several questions: for a given proximity-to-distance
isomorphism $\varphi$, what is the equivalent of the (fuzzy)
$(max,min)$ transitive closure for a distance graph? Perhaps more
interestingly, what is the proximity equivalent of the metric closure of
a distance graph, which is ubiquitous in network science as the APSP/Dijkstra algorithm? Which closures preserve important characteristics
of real complex networks and observe good axiomatic requirements?
These questions are important because all the applications of complex networks that use transitivity produce different results depending on the specific T-Norm/T-Conorm pair $\langle \lor, \land \rangle$ used.
Not only do we want intuitive connectives (e.g. a metric closure), we want those that lead to best results in
specific applications. In the following sections (\S \ref{max_closures}, \S \ref{diffusion_section}) we study in detail the specific closure cases that arise from constraining algebraic structures $I$ or $II$ in different ways.
But before that, in the next subsection we discuss additional constraints on algebraic structures $I$ and $II$ which allow the computation of closures in finite time.

\subsection{Convergence of Distance Closures}
\label{convergence}

As defined above (\S \ref{isomorphism_definition}), the transitive closure of proximity graphs utilizes the algebraic structure $I=\{[0,1],\lor,\land\}$ where $\lor,\land$ are, respectively, T-Conorm and T-Norm binary operations, whereas the distance closure of distance graphs utilizes the algebraic structure $II=\{[0,+\infty],f,g\}$ where $f,g: [0,+\infty] \times [0,+\infty] \rightarrow [0,+\infty]$ are TD-Norm and TD-Conorm binary operations.
It has been known for a while \cite{Klir1995} that if the T-Conorm in $I$ is $\lor = \texttt{maximum}$, with any T-Norm $\wedge$, then the transitive closure of a finite graph converges for a finite $\kappa$ in equation \ref{TC1} or Definition \ref{def3} (\S \ref{background_fuzzy} and \S \ref{isomorphism_definition}), moreover, $\kappa \leq |X|-1$ is the diameter of the graph.
In other words, the transitive closure converges in finite time and can be easily computed using Algorithm 1 defined in appendix A.

In the last decade, the convergence requirements of transitive closure using algebraic structure $I$ have received much attention \cite{Han2007,Gondran2007,Han2004,Dombi2013,Bertoluzza2004,Pang2003}.
It is now known that if $I$ is a dioid, then $\kappa$ is also finite \cite{Gondran2007}.
A diod is a special case of semiring, where, in addition to $\{[0,1], \land\}$ and $\{[0,1], \lor \}$ being monoids (see \S \ref{background_fuzzy}), the T-Conorm/T-Norm pair in $I$ also needs to satisfy the distributive property (in addition to the monotonicity requirements of the T-Norm and T-Conorm monoids).
Not all pairs of T-Conorms/T-Norms satisfy the distributive property \cite{Han2007,Gondran2007,Han2004,Dombi2013,Bertoluzza2004,Pang2003}. However, there is an infinite variety of dioids that do (see \cite{Gondran2007} for an overview), and therefore, an infinite variety of distinct transitive closures that can be computed in finite time.

\begin{theorem}
\label{theorem3a}
Given a finite proximity graph $G_P(X,P)$, and an algebraic structure $I=\{[0,1], \lor, \land \}$, with a T-Conorm/T-Norm pair $\langle \land, \lor \rangle$ used to compute the transitive closure of $G_P$, if $I$ is a dioid, then the transitive closure $G_P^T(X,P^T)$ can be computed by equation \ref{TC1} for a finite $\kappa$.
\end{theorem}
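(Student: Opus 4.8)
The plan is to use the one feature a dioid has beyond the two monoid structures $\{[0,1],\lor\}$ and $\{[0,1],\land\}$, namely \emph{distributivity}, together with the elementary fact that a T-Norm is bounded above by its arguments. Distributivity will let me collapse the nested $\bigvee_k\bigwedge$ compositions of Definition \ref{def2} into a single disjunction of conjunctive walk weights, and the T-Norm bound will then show that walks longer than $|X|-1$ contribute nothing new, forcing the union in equation \ref{TC1} to stabilize.

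First I would establish the \emph{walk-sum representation}: for all $n\ge 1$,
\[ p^n_{ij}=\bigvee_{(k_0=i,\,k_1,\dots,\,k_n=j)}\big(p_{k_0k_1}\land p_{k_1k_2}\land\cdots\land p_{k_{n-1}k_n}\big), \]
where the disjunction ranges over all walks of length $n$ from $x_i$ to $x_j$. This is proved by induction on $n$ from Definition \ref{def2}. The inductive step requires pushing an inner disjunction through an outer conjunction, i.e. $(\bigvee_\ell a_\ell)\land b=\bigvee_\ell(a_\ell\land b)$, which is exactly the distributive law that the dioid hypothesis on $I$ supplies; associativity and commutativity of $\land$ (the T-Norm axioms) then let me regroup the conjunctive products along each walk. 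On a finite graph each disjunction is over finitely many walks, so it is well defined and attained.

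Next I would show that no walk need be longer than $N-1$, where $N=|X|$. From the T-Norm axioms, $a\land b\le a\land 1=a$, so a conjunction only decreases or preserves its arguments. Take any walk $w$ from $x_i$ to $x_j$ with $i\ne j$ of length $\ge N$: it repeats a vertex and hence contains a closed sub-walk. Writing its weight as $u\land c\land v$, where $c\le 1$ is the conjunctive product along the cycle and $u,v$ are the products along the remaining segments, associativity and commutativity give $\mathrm{weight}(w)=u\land c\land v\le u\land 1\land v=\mathrm{weight}(w')$, with $w'$ the strictly shorter walk obtained by excising the cycle. Iterating, every walk weight is dominated by the weight of a \emph{simple} path between the same endpoints, of length at most $N-1$; since simple-path weights are a subset of all walk weights, the two disjunctions coincide and $\bigvee_{n\ge 1}p^n_{ij}=\bigvee_{n=1}^{N-1}p^n_{ij}$ for $i\ne j$. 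The diagonal is immediate, since reflexivity gives $p^1_{ii}=p_{ii}=1$, already saturating the bound. Because $a\lor b\ge a$ makes the partial unions non-decreasing, $P^T=\bigcup_{n=1}^{\kappa}P^n$ therefore stabilizes at the finite $\kappa\le N-1=|X|-1$, which is what equation \ref{TC1} (equivalently Definition \ref{def3}) requires.

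I expect the main obstacle to be the first step: making the walk-sum expansion fully rigorous, because it is precisely there that the dioid hypothesis is indispensable. Distributivity is what converts the iterated compositions into a clean disjunction over walks; for a generic T-Conorm/T-Norm pair that is not a dioid (such as the Dombi pair underlying the diffusion closure) this representation fails and finite-time convergence need not hold, consistent with the distinction drawn in the text. A secondary point of care is the cycle-excision bookkeeping, where commutativity and associativity of $\land$ must be invoked to isolate the cycle's sub-product before bounding it by $1$.
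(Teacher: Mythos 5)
The paper does not actually prove this theorem---it defers entirely to Gondran and Minoux---so your self-contained walk-expansion argument is a genuinely different and more informative route. Your first two steps are sound: distributivity (the dioid hypothesis) is exactly what is needed to unfold $p^n_{ij}$ into a disjunction over length-$n$ walks, and the bound $a\land b\leq a\land 1=a$ correctly shows that excising a closed sub-walk can only increase a walk's conjunctive weight, so every walk weight is dominated by that of a simple path of length at most $|X|-1$.

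The gap is in the concluding inference ``each walk weight is dominated by some simple-path weight, hence the two disjunctions coincide and the partial unions stabilize.'' That step silently uses idempotency of the T-Conorm, which is not among the axioms of Definition~\ref{def4a}'s proximity-space counterpart: for a non-idempotent $\lor$ such as the probabilistic sum $a\lor b=a+b-ab$, a disjunction of many terms each bounded by $b$ can strictly exceed $b$ (e.g.\ $0.5\lor 0.5=0.75$), and the non-decreasing sequence $\bigvee_{n=1}^{\kappa}p^n_{ij}$ can keep strictly increasing for every $\kappa$, so domination of the individual summands does not by itself force stabilization. The repair is short but must be stated: setting $b=c=1$ in the distributive law $a\land(b\lor c)=(a\land b)\lor(a\land c)$ gives $a\land 1=(a\land 1)\lor(a\land 1)$, i.e.\ $a=a\lor a$ for all $a\in[0,1]$, so the dioid hypothesis forces $\lor$ to be idempotent and hence $\lor=\max$ (the unique idempotent T-Conorm). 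With $\lor=\max$ in hand, your ``disjunctions coincide'' step is valid and the union in equation~\ref{TC1} stabilizes at $\kappa\leq|X|-1$. Note that this observation also shows the theorem collapses onto the $\lor=\max$ case already quoted from Klir and Yuan in \S\ref{background_fuzzy}, which is worth remarking since the paper presents the dioid condition as if it were a strictly broader class within algebraic structure $I$.
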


See \cite{Gondran2007} for proof; further discussion and examples also see \cite{Han2004, Han2007,Pang2003,Klir1995}.

\begin{theorem}
\label{theorem3c}
Given a finite distance graph $G_D(X,D)$, and an algebraic structure $II=\{[0,+\infty],f,g \}$, with a TD-Conorm/TD-Norm pair $\langle f, g \rangle$ used to compute the distance closure of $G_D$, if $II$ is a dioid, then the distance closure $G_D^T(X,D^T)$ can be computed in finite time via the transitive closure of isomorphic graph $G_P(X,P)$ with algebraic structure $I$ obtained by an isomorphism satisfying Theorem \ref{theorem11}. In other words, if $II$ is a dioid, via an isomorphism satisfying Theorem \ref{theorem11} we obtain an algebraic structure $I$ which is also a dioid.
\end{theorem}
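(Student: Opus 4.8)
The plan is to reduce the statement to a single structural fact: that the order-reversing bijection $\varphi$ of Definition \ref{def1} carries the algebraic structure $II$ onto $I$ as an isomorphism, and that the defining axioms of a dioid are invariant under such an isomorphism. Once we know that $I$ is a dioid whenever $II$ is, Theorem \ref{theorem3a} immediately yields a finite $\kappa$ for the transitive closure $P^T$ of the isomorphic proximity graph $G_P = (X, \Phi^{-1}(D))$, and the isomorphism of compositions in Theorem \ref{theorem11} lets us transport this back to the distance side. So the whole proof rests on showing ``$II$ dioid $\Rightarrow$ $I$ dioid,'' with everything else being bookkeeping through $\varphi$.

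First I would make the isomorphism explicit. By the constraint equations \ref{eq_constraint_g}--\ref{eq_constraint_and} (equivalently, the identities of Theorem \ref{theorem11}), for all $a,b \in [0,1]$ we have $\varphi(a \land b) = g(\varphi(a), \varphi(b))$ and $\varphi(a \lor b) = f(\varphi(a), \varphi(b))$; since $\varphi$ is a strictly decreasing bijection with $\varphi(1)=0$ and $\varphi(0)=\infty$, it sends the $\land$-identity $1$ to the $g$-identity $0$ and the $\lor$-identity $0$ to the $f$-identity $\infty$. Thus $\varphi$ intertwines $(\lor,\land)$ with $(f,g)$ and matches up their identity elements. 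That $([0,1],\land)$ and $([0,1],\lor)$ are commutative monoids is already guaranteed by Theorem \ref{theorem2} (the T-norm/T-conorm axioms), and the absorbing elements (e.g. $0 \land a = 0$) follow from monotonicity alone; so the only genuinely new things to transport are distributivity together with the canonical-order property that upgrades a semiring to a dioid.

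Next I would transport distributivity by a direct push-through of $\varphi$. Writing $a \land (b \lor c) = \varphi^{-1}\big(g(\varphi(a), f(\varphi(b),\varphi(c)))\big)$ and using that $g$ distributes over $f$ in the dioid $II$, this equals $\varphi^{-1}\big(f(g(\varphi(a),\varphi(b)), g(\varphi(a),\varphi(c)))\big) = \varphi^{-1}\big(f(\varphi(a\land b), \varphi(a\land c))\big) = (a\land b)\lor(a\land c)$, which is exactly distributivity in $I$. The canonical order transfers for free: the canonical preorder of a dioid is definable from the additive operation alone ($a \preceq b \iff \exists c,\ b = a \oplus c$), and since $\varphi$ carries $\lor$ bijectively to $f$, it is automatically an isomorphism of the canonically ordered additive monoids, so antisymmetry of the preorder on $II$ descends to $I$. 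Hence $I$ is a dioid.

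Finally I would close the loop. With $I$ a dioid, Theorem \ref{theorem3a} gives a finite $\kappa$ with $P^T = \bigcup_{n=1}^{\kappa} P^n$. The composition identity of Theorem \ref{theorem11} gives $D^n = \Phi(P^n)$ and identifies $\dot\cap$ (built from $f$) with $\cup$ (built from $\lor$), so $D^T = \Phi(P^T)$ (cf.\ Definition \ref{def6b} and Theorem \ref{theorem5}); the distance closure is therefore obtained in finite time by computing $P = \Phi^{-1}(D)$ entrywise, running the finite-time transitive-closure algorithm on $P$, and applying $\Phi$ to the result. I expect the main obstacle to be the order-theoretic half of the dioid transfer rather than the equational half: the semiring equations move across $\varphi$ mechanically, but because $\varphi$ reverses the ambient orders of $[0,1]$ and $[0,+\infty]$ one must track the order carefully to confirm that the canonical order---the condition distinguishing a dioid from a bare semiring---is genuinely preserved; the clean resolution is to note that each canonical order is defined from the additive operation that $\varphi$ already respects, so the reversal of the ambient orders is immaterial.
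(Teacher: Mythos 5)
Your proof is correct and follows the same route the paper intends: transport the dioid structure of $II$ to $I$ through the bijection $\varphi$, then invoke Theorem \ref{theorem3a} together with the composition identity of Theorem \ref{theorem11} to pull the finite-time closure back to distance space. The paper itself offers only the one-line assertion that the result ``is easily proven from theorems \ref{theorem5}, \ref{theorem11} and \ref{theorem3a}, by evoking the isomorphism to proximity space,'' so your explicit verification that distributivity and the canonical order survive the order-reversing map---the two points the paper silently omits---is a faithful and more complete elaboration of the same argument.
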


This theorem is easily proven from theorems \ref{theorem5}, \ref{theorem11} and \ref{theorem3a}, by evoking the isomorphism to proximity space.

\section{Shortest-path ($\lor = \texttt{maximum}$) closures}
\label{max_closures}

Of particular interest to current work on complex networks, is the relationship between the metric closure of distance graphs computed via the APSP/Dijkstra algorithm (see \S \ref{semimetric_pairs_section}), and some transitive closure of (fuzzy) proximity graphs, which has not been previously identified.
Furthermore, it is also very worthwhile to understand what other forms of closure exist and are meaningful for complex network analysis.
The general isomorphism (theorem \ref{theorem11} and the constraints of formulae \ref{eq_constraint_g} to \ref{eq_constraint_and}) presented in section \ref{isomorph_section} gives us the ability to identify all these forms of distance closure, and thus, the distinct measures of path length that ensue. We can also study their convergence and axiomatic characteristics.

\subsection{Metric Closure}
\label{metric_example}

Let us start with the metric closure. As described in section \ref{semimetric_pairs_section}, this distance closure can be computed  with pair $\langle f = min, g=+ \rangle$, using eq. \ref{DC1} from definition \ref{def6b} (\S \ref{isomorphism_definition}).
It yields a distance graph $D^{mc}(X,X)$, whose edges $d^{mc}_{ij}$ between any two elements $x_i$ and $x_j$ are defined by the shortest (direct or indirect) distance between them in the original distance graph $D(X,X)$.
In other words, it computes the shortest ($f=\texttt{minimum}$) paths between any pair of elements in the original graph, where path length is computed by summing ($g = +$) the distance weights of every edge in path.

\vspace{5 mm}

\emph{\bf{Example 1 (Metric Closure)}}
\label{ex_metric_closure}
Let $\varphi:[0,1]\rightarrow [0,+\infty]$, $d_{ij} = \varphi(p_{ij}) = \frac{1}{p_{ij}}-1$ (as in equation \ref{distance_measures4}, \S \ref{semimetric_background}). Let also $f(x,y) = \min(x,y)$  and $g(x,y) = x + y$, where $x, y \in [0, +\infty]$ represent distance weights from algebraic structure $II$ (see \S \ref{isomorph_section}).
From theorem \ref{theorem11}, eq. \ref{eq_constraint_or}:

\[a\lor b=\varphi^{-1}(f(\varphi(a),\varphi(b)))\]

\noindent where $a,b \in [0,1]$ represent proximity weights from semi-ring $I$ (see $\S$ \ref{isomorph_section}), $a=\varphi^{-1}(x)$ and $b=\varphi^{-1}(y)$.
If $a\leq b$, without loss of generality, then

\[a\lor b=\varphi^{-1}(min(\varphi(a),\varphi(b)))=b=max(a,b)\]

\noindent since $\varphi$ is strictly monotonic decreasing. Therefore,

\[a\lor b= \max(a,b)\]

\noindent To obtain $\land$ we use eq. \ref{eq_constraint_and}:

\[a\land b=\varphi{-1}(g(\varphi(a),\varphi(b)))\]

\[a\land b=\varphi^{-1}(\varphi(a)+\varphi(b))=\varphi^{-1}(\frac{a+b-2ab}{ab})\]
\noindent and since $\varphi^{-1}(x)=\frac{1}{x+1}$ we obtain,

\[a\land b= \left\{ \begin{array}{ccc}
0 & for & a = b = 0\\
\frac{ab}{a+b-ab} & for & a,b \in ]0,1] \end{array} \right.
\]

This conjunction is very well-known in fuzzy graph theory; it is the Dombi family of T-Norms for $\lambda=1$ \cite{dombi82}, which we denote by $DT_{\land}^{1}$---this makes sense since the isomorphism map $\varphi$ used in example 1 (equation \ref{distance_measures4}, \S \ref{semimetric_background}) is also the Dombi T-Norm generator with $\lambda=1$
(see \cite{Klir1995} and also section \ref{other_closures}).
Therefore, the distance closure of a distance graph with algebraic structure $II$ where  $\langle f,g \rangle \equiv \langle \min, + \rangle$, is isomorphic to the transitive closure with algebraic structure $I$ where $\langle \lor, \land \rangle \equiv \langle \max, DT_{\land}^{1} \rangle$ in the proximity space.
Moreover, because $\lor = \max$, the closure (in both spaces) converges in finite time (\S \ref{convergence})---the same is true for all examples covered in this section.
Indeed, this is the metric closure of distance graphs, also known as the APSP and typically computed using the Dijkstra algorithm \cite{dijkstra} or the distance product \cite{Zwick,Rocha2002,Rocha2005}.
%

\subsection{Generalized Metric Closure and Shortest Path Length with APSP/Dijkstra}
\label{other_closures}

One way to explore the isomorphism space is to fix $f\equiv \min$ and $g \equiv +$, and let the isomorphism function $\varphi$ vary.
In the proximity space, this means that $\lor \equiv \max$, as shown in example 1 (\S \ref{metric_example}), which guarantees convergence of the transitive closure in finite time.
%
However, because we vary the isomorphism map $\varphi$, as we show below, we are effectively sweeping the space of possible T-Norm ($\land$) operations, since $\varphi$ is their generator function.
In other words, by varying $\varphi$, we can use the canonical metric closure (computed via APSP/Dijkstra) to sweep an infinite space of possible distance closures.

\begin{definition}
\label{def8}
The pseudo-inverse of a decreasing generator $\varphi$ is defined by
\[
\varphi^{(-1)}(a)= \left\{ \begin{array}{ccc}
1 & for & a \in (-\infty,0) \\
\varphi^{-1}(a) & for & a \in [0,\varphi(0)] \\
0 & for & a \in (\varphi(0),\infty) \end{array} \right.
\]
\end{definition}

\begin{theorem}
\emph{(Characterization Theorem of T-Norms)}
\label{characterization}
Let $\land$ be a binary operation on the unit interval. Then, $\land$ is an Archimedean T-Norm iff there exist a decreasing generator $\varphi$ such \[a\land b=\varphi^{(-1)}(\varphi(a)+\varphi(b))\] for all $a,b \in [0,1]$.
\end{theorem}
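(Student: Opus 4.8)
The plan is to prove the two implications of the biconditional separately, treating the ``if'' (sufficiency) direction as a routine verification and the ``only if'' (necessity) direction as the substantial construction. Throughout I take a \emph{decreasing generator} $\varphi$ to be a continuous, strictly decreasing function $\varphi:[0,1]\to[0,+\infty]$ with $\varphi(1)=0$ (so that $\varphi(0)$ may be finite or $+\infty$), and I use the standard meaning of \emph{Archimedean T-Norm}: a continuous T-Norm $\land$ with $a\land a<a$ for every $a\in(0,1)$. The pseudo-inverse $\varphi^{(-1)}$ of Definition~\ref{def8} satisfies the two identities I will lean on, namely $\varphi^{(-1)}(\varphi(a))=a$ for all $a\in[0,1]$, and the truncation identity $\varphi(\varphi^{(-1)}(x))=\min(x,\varphi(0))$ for all $x\geq 0$.

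For the sufficiency direction I would fix such a $\varphi$, set $a\land b=\varphi^{(-1)}(\varphi(a)+\varphi(b))$, and check the four T-Norm axioms of \S\ref{background_fuzzy} together with the Archimedean property. Commutativity is immediate from the symmetry of $+$; the boundary condition $a\land 1=\varphi^{(-1)}(\varphi(a)+\varphi(1))=\varphi^{(-1)}(\varphi(a))=a$ follows from $\varphi(1)=0$ and the first pseudo-inverse identity; monotonicity holds because $\land$ is a composition of the monotone maps $\varphi$, $+$, and $\varphi^{(-1)}$. Associativity is the only axiom needing care: using the truncation identity I would show that both $(a\land b)\land c$ and $a\land(b\land c)$ collapse to the symmetric expression $\varphi^{(-1)}\big(\min(\varphi(a)+\varphi(b)+\varphi(c),\,\varphi(0))\big)$, so the two groupings agree. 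Finally $a\land a=\varphi^{(-1)}(2\varphi(a))<a$ for $a\in(0,1)$, since $2\varphi(a)>\varphi(a)>0$ and $\varphi^{(-1)}$ is strictly decreasing on $[0,\varphi(0)]$ and vanishes beyond $\varphi(0)$; hence $\land$ is Archimedean.

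For the necessity direction I would start from an Archimedean T-Norm $\land$ and \emph{build} its generator, the idea being to linearise the associativity of $\land$ into the additivity of $+$. Concretely, I would introduce the iterated powers $a^{(n)}=a\land\cdots\land a$ ($n$ factors) and use continuity to define fractional powers $a^{(m/2^k)}$ by repeatedly extracting a $\land$-``square root'': since $h(x)=x\land x$ is continuous with $h(a_0)<a_0$ and $h(1)=1$, the intermediate value theorem supplies an $x$ with $x\land x=a_0$. The Archimedean hypothesis forces $a^{(n)}\to 0$ and excludes idempotents in $(0,1)$, pinning down the behaviour of $\varphi$ near $0$. Fixing one interior element $a_0$ as a unit, I would \emph{define} $\varphi$ on the dense set of dyadic powers by $\varphi\big(a_0^{(m/2^k)}\big)=m/2^k$, show this assignment is consistent and order-reversing, and then extend $\varphi$ continuously and monotonically to all of $[0,1]$. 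By construction $\varphi(a\land b)=\varphi(a)+\varphi(b)$ wherever the right-hand side stays in the range of $\varphi$, which rearranges to $a\land b=\varphi^{(-1)}(\varphi(a)+\varphi(b))$ once the pseudo-inverse absorbs the out-of-range (nilpotent) case.

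The hard part will be this necessity direction, specifically proving that the dyadic-power assignment is well defined and extends to a genuinely continuous, strictly decreasing $\varphi$; this is precisely the Acz\'el--Ling representation theorem for continuous Archimedean operations on an interval. I expect two delicate points: (i) showing the square-root elements are \emph{unique}, so that $\varphi$ is single-valued on the dyadic powers, which uses strict monotonicity of $a\mapsto a\land a$ on the region where it is positive (a consequence of the Archimedean condition); and (ii) handling the nilpotent case $\varphi(0)<\infty$, where $\varphi(a)+\varphi(b)$ can exceed $\varphi(0)$ and the pseudo-inverse must clip the value to $0$. Rather than reproduce the full topological-semigroup argument, I would cite the classical representation theorem (as in \cite{Klir1995,klement}) for the extension step, and reserve the original verification for the sufficiency direction and the two pseudo-inverse identities, which is exactly where Definition~\ref{def8} does the real work.
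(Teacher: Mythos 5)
Your outline is sound, but note that the paper does not actually prove this theorem: immediately after stating it, the authors write that both Definition~\ref{def8} and the proof are provided in \cite{Klir1995}, so the ``paper's own proof'' is a citation to the classical literature. Your proposal therefore does strictly more work than the source. The sufficiency direction is verified correctly and completely in your sketch: the two pseudo-inverse identities $\varphi^{(-1)}(\varphi(a))=a$ and $\varphi(\varphi^{(-1)}(x))=\min(x,\varphi(0))$ are exactly what is needed, the reduction of both groupings of associativity to $\varphi^{(-1)}(\min(\varphi(a)+\varphi(b)+\varphi(c),\varphi(0)))$ handles the nilpotent case correctly, and $a\land a=\varphi^{(-1)}(2\varphi(a))<a$ gives the Archimedean property (you should also remark that continuity of $\land$ follows from continuity of $\varphi$ and $\varphi^{(-1)}$, since the standard definition of Archimedean T-Norm you adopt presupposes it). For the necessity direction you correctly identify the content as the Acz\'el--Ling representation theorem and defer the topological-semigroup extension argument to \cite{Klir1995,klement}; since that is precisely what the paper itself does for the entire theorem, this is an acceptable division of labour, and your two flagged delicate points (uniqueness of $\land$-square roots, and clipping in the nilpotent case $\varphi(0)<\infty$) are indeed where the classical proof spends its effort. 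No gap, just a different (and more explicit) allocation of what is proved versus cited.
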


Both definition \ref{def8} and the proof of theorem \ref{characterization} are provided in \cite{Klir1995}. The next corollary (proof in appendix B) follows from theorem \ref{theorem11}.

\begin{corollary}
\label{theorem12}
Given the isomorphism constraint on the T-Norm from algebraic structure $I$ (eq. \ref{eq_constraint_and}) from theorem \ref{theorem11}, let  $f \equiv \min$,  $g \equiv +$  and  $\varphi$  a distance function per Definition  \ref{def1} (\S \ref{isomorphism_definition}). If  $\lor \equiv \max$  as T-Conorm, then the T-Norm operator $\land$ exists and $\varphi$ is its generator function.
\end{corollary}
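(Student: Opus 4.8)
The plan is to reduce the corollary directly to the Characterization Theorem of T-Norms (Theorem \ref{characterization}), using the isomorphism constraint of Theorem \ref{theorem11} to supply the generator formula. First I would specialize the constraint equation \ref{eq_constraint_and} to the hypotheses at hand. Setting $g \equiv +$ turns that equation into
\[
a \land b = \varphi^{-1}\bigl(\varphi(a)+\varphi(b)\bigr), \qquad a,b \in [0,1].
\]
This is already the shape required by Theorem \ref{characterization}, the only discrepancy being that the Characterization Theorem is phrased with the pseudo-inverse $\varphi^{(-1)}$ of Definition \ref{def8} rather than the ordinary inverse $\varphi^{-1}$. The companion hypothesis $\lor \equiv \max$ is the consistent partner of $f \equiv \min$, exactly as derived in Example 1, and fixes algebraic structure $I$; it plays no further role, since the conclusion concerns only $\land$.

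Second, I would verify that $\varphi$ is a legitimate decreasing generator. By Definition \ref{def1}, $\varphi$ is strictly monotonic decreasing with $\varphi(1)=0$, and, as a strictly monotone bijection between the intervals $[0,1]$ and $[0,+\infty]$, it is automatically continuous with range equal to all of $[0,+\infty]$. This surjectivity already settles the existence claim: since $\varphi(a),\varphi(b)\ge 0$, the sum $\varphi(a)+\varphi(b)$ always lies in $[0,+\infty]$, which is precisely the domain of $\varphi^{-1}$, so $a \land b$ is well-defined for every $a,b$.

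The key step, and the only real subtlety, is identifying the ordinary inverse in the constraint with the pseudo-inverse in the Characterization Theorem. Here the boundary condition $\varphi(0)=\infty$ of Definition \ref{def1} does the work. Because the argument $\varphi(a)+\varphi(b)$ never leaves $[0,\infty]=[0,\varphi(0)]$, it always falls in the middle branch of Definition \ref{def8}, so the clipping branches (which would force the output to $1$ or $0$) are never activated and $\varphi^{(-1)}\bigl(\varphi(a)+\varphi(b)\bigr)=\varphi^{-1}\bigl(\varphi(a)+\varphi(b)\bigr)$. The displayed formula for $a\land b$ therefore matches the generator expression of Theorem \ref{characterization} verbatim. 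Invoking the ``if'' direction of that theorem, $\land$ is an Archimedean T-Norm and $\varphi$ is its generator, which is the assertion of the corollary. I expect the inverse/pseudo-inverse match to be the one point requiring care; the role of $\varphi(0)=\infty$ is precisely to make the generated T-Norm strict, so that the clipping built into Definition \ref{def8} is never reached.
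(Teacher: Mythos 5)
Your proposal is correct and follows essentially the same route as the paper's proof: specialize the isomorphism constraint with $g \equiv +$ to obtain $a \land b = \varphi^{-1}(\varphi(a)+\varphi(b))$ and then invoke the Characterization Theorem of T-Norms to conclude that $\land$ exists and $\varphi$ is its decreasing generator. The only difference is cosmetic: the paper derives the pointwise identity by stripping the $\min_z/\max_z$ aggregation from the condition of Theorem \ref{theorem11} using the fact that $\varphi^{-1}$ is decreasing, whereas you start directly from eq. \ref{eq_constraint_and}; your explicit check that $\varphi(0)=\infty$ keeps the argument inside the middle branch of the pseudo-inverse is a welcome extra detail the paper leaves implicit.
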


Corollary ~\ref{theorem12} states that when we fix the T-Conorm $\lor=\max$ and $\langle f,g \rangle = \langle \min, + \rangle$, there exists a T-Norm $\land$, which preserves the isomorphism between proximity and distance graphs, as well as their closures with the respective operators. Moreover, the isomorphism function $\varphi$ is in fact the T-Norm generator.
Thus, as we fix the TD-Norm and TD-Conorm $\langle f, g \rangle = \langle \min, + \rangle$ which define the metric closure of distance graphs, we can vary the isomorphism $\varphi$ yielding distinct transitive closures in the proximity space which is thus constrained to the $\lor=\max$ T-Conorm, and the T-Norms $\land$ generated by $\varphi$ (using theorem \ref{characterization}).
This generalizes the metric closure as we are free to sweep the space of T-Norm generator functions $\varphi$ that satisfy definition \ref{def1}.
Importantly, we can do this using the very common algorithms developed for APSP, such as the Dijkstra algorithm \cite{dijkstra} or the distance product \cite{Zwick,Rocha2002,Rocha2005}, because the operators $\langle f,g \rangle = \langle \min, + \rangle$ are fixed in distance space.

We can think of this space of generalized metric closures as the different ways we have to compute (shortest) path length in distance graphs. The canonical metric closure, obtained via the simplest map $\varphi$ given by eq. \ref{distance_measures4}, computes path length as the sum ($g \equiv +$) of all edges in the path. As we vary $\varphi$, we can compute an infinite set of different measures of path length (e.g. the ultra-metric closure in subsection \ref{ultra_metric_example} below). %
Still, because $f \equiv min$ for all these cases, we are always computing the \emph{shortest} of some kind of path length---choosing the minimum path. Every possible closure results from choosing a shortest path; what changes is how path length is computed.
Thus, we refer to this class of generalized metric closures as \emph{shortest-path distance closures}. Moreover, since via the isomorphism we obtain $\lor=\max$, these distance closures are guaranteed to converge in finite time, just like their isomorphic transitive closures in proximity space (\S \ref{convergence}).
Notice that closures which do not fix $f \equiv min$ and $\lor=\max$, integrate path lengths in other ways other than the shortest path. Indeed, we study the different case of diffusion distance closure in section \ref{diffusion_section}

Since different measures of path length can be computed via the generalized metric closure, we can investigate, for instance, the desirable variation of shortest path length.
For the empirical analysis of complex networks it is desirable that properties such as \emph{average shortest path} be simultaneously characteristic in both spaces (proximity and distance), for each distance closure chosen. That is, the fluctuations of the mean should behave similarly in both spaces (average shortest path length in distance graphs and average strongest path in proximity graphs).
We estimated the variation of the shortest path distribution when the isomorphism $\varphi$ is parameterized by the Dombi family of T-Norm generators, controlled by a parameter $\lambda$ \cite{dombi82}.
The details of this estimation are provided in Appendix C of the supplementary materials (see also section \ref{diffusion_section} for the Dombi family T-Norm/T-Conorm formulae).
We concluded that when we assume a small variation of the mean for the distribution of shortest path length in distance graphs, the optimal distance closure to preserve small variations of path strength in the proximity space is the metric closure (example 1, $\S$ \ref{metric_example}), where $\lambda = 1$, thus $\land = DT^{1}_{\land}$ .
%
%
However, when variation is allowed to increase, the optimal value closure occurs for other closures with T-Norms with $\lambda > 1$.
This suggests that the metric closure typically computed in network science (using the APSP/Dijkstra) is very appropriate if we assume small variation in the distribution of shortest paths. If, instead, we observe larger fluctuations of that distribution, it may be more appropriate to employ distance closures isomorphic to the transitive closure obtained via a Dombi T-Norm with $\lambda > 1$.

\subsection{Ultra-Metric Closure}
\label{ultra_metric_example}

In Fuzzy logic/set theory T-Conorm/T-Norm which obey a generalization of De Morgan's laws with an involutive complement are called \emph{dual} (\S \ref{background_fuzzy}).
See appendix A for more details about T-Conorm/T-Norm pairs and the dual property; also we develop this concept further in section \ref{axiomatics}.
Within the entire space of shortest-path distance closures, where $\lor \equiv max$, the only dual pair of T-Conorm/T-Norms
is $\langle \lor \equiv max, \land \equiv min \rangle$ \cite{klement}.
In other words, the only shortest-path distance closure which is based on a conjunction/disjunction pair that establishes a logic with the reasonable and expected logical axioms of De Morgan's laws is the ultra-metric closure we describe next.
Thus, the metric closure (example 1) is based on an algebraic structure that is too poor to define a reasonable logic.
\vspace{5 mm}

\emph{\bf{Example 2 (Ultra-Metric Closure)}}
\label{ex4.2}
Let $\varphi:[0,1]\rightarrow [0,+\infty]$, $d_{ij} = \varphi(p_{ij})$ be any function that obeys the axioms of definition \ref{def1}. Let also $\land (a,b) = \min (a,b)$  and $\lor (a,b) = \max (a,b)$, where $a,b \in [0,1]$ represent proximity weights from semi-ring $I$  (\S \ref{isomorph_section}).
Following the same reasoning as with example 1, via the constraints of the isomorphism (theorem \ref{theorem11} and the fact that $\varphi$ is monotonic decreasing per definition \ref{def1}), it is easy to show that:

\[f(x, y) \equiv \min(x,y)\]

\noindent and

\[g(x,y) \equiv \max(x, y)\]

\noindent where $x,y \in [0, +\infty]$ represent distance weights from semi-ring $II$ (\S \ref{isomorph_section}).
Therefore, the distance closure of a distance graph with algebraic structure $II$ where  $\langle f,g \rangle \equiv \langle \min, \max \rangle$, is isomorphic to the transitive closure with algebraic structure $I$ where $\langle \lor, \land \rangle \equiv \langle \max, \min \rangle$ in the proximity space---the most common transitive closure in fuzzy graphs (\S \ref{background_fuzzy}), based on a dual T-Conorm/T-Norm pair.

The $\langle \lor, \land \rangle \equiv \langle \max, \min \rangle$ closure of a fuzzy graph is equivalent to the \emph{ultra-metric} closure of a distance graph, where instead of the triangle inequality, a stronger inequality is enforced: $d_{ij} \leq \max (d_{ik}, d_{kj}), \forall k$.
Ding et al \cite{ding} have previously shown this simple relationship, which derives easily for any $\varphi$ (per definition \ref{def1}) in our framework. %
Ding et al further used this closure to compute cliques in protein interaction networks---a complex network problem relevant for computational Biology.

Because in this case $\lor = \max$, the ultra-metric closure is still a shortest-path distance closure (\S \ref{other_closures}),  and therefore converges (in both spaces) in finite time (\S \ref{convergence}).
However, in the ultra-metric closure, instead of path length being computed by summing the edges in a path (as the canonical metric closure, \S \ref{metric_example}), path length is measured exclusively by the ``weakest link'' in the path: the largest distance edge-weight or the smallest proximity edge-weight, of distance or proximity graphs, respectively.

\begin{figure}[!th]
\centerline{\includegraphics[width=1\textwidth]{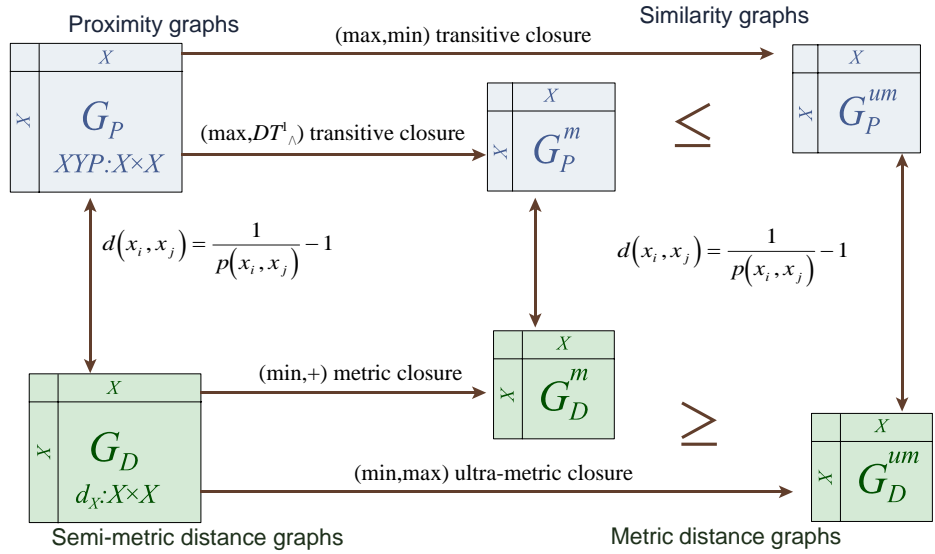}}
\caption{\small Metric and ultra-metric distance closures, and their
fuzzy proximity graph counterparts for $\varphi:
distance=\frac{1}{proximity}-1$. The ultra-metric distance closure
is equivalent to the $\langle \max, \min \rangle$ transitive closure of a fuzzy graph. The
metric closure is equivalent to the $\langle \max,DT_{\wedge}^{1} \rangle$ closure of a
fuzzy graph, where $DT_{\wedge}^{1}$ is the Dombi conjunction for $\lambda=1$
.} \label{semimetric_closure_space}
\end{figure}

Figure \ref{semimetric_closure_space} depicts the closures of examples 1 and 2 above, for the
proximity-to-distance isomorphism $\varphi$ of formulae \ref{distance_measures4}.
The or ultra-metric closure
 of distance graphs (or $\langle \max, \min \rangle$ closure of proximity graphs) imposes quite a strong distortion of the original graph.
After closure, every item tends to become highly
related to every other indirectly linked item, however many edges far away.
When using it to infer indirect relationships (shortest paths, cliques, clusters), the assumption is that the strength or proximity of connection between any two items is equal only to the weakest edge on the path between both items---irrespectively of how many edges that path may be comprised of.
For instance, in a social network, any two people are as strongly connected as the weakest social connection in the chain of indirect social connections that links them, with no penalty for the number of indirect connections that exist.
A catholic who is very close to a priest who is close to a bishop who is close to a cardinal who is close to the Pope, becomes automatically close to the Pope---a scenario that runs against our perception of the reality of that social connection.

This intuition is also validated in more testable scenarios in information retrieval applications.
We have observed in our work with recommender systems \cite{rocha02WCCI,Rocha2005,Simas2012}, as well as in our analysis of social and knowledge
networks \cite{Rocha2002,rocha02Terr,verspoor05BMC,abihaidar_GB08}, that the metric closure, $\langle \lor, \land \rangle \equiv \langle \max, DT_{\land}^{1} \rangle$, produces better and more intuitive results than the ultra-metric closure, $\langle \lor, \land \rangle \equiv \langle \max, \min \rangle$---insofar as the search for relevant indirect associations is concerned.

In the metric closure case, because we sum the distance weight of every edge in a path ($g \equiv +$), there is a built-in penalty for the number of indirect edges in the path.
This matches our intuition that, in reality, the catholic in our example will have a harder time influencing the Pope if the communication chain involves a hierarchy of many levels, no matter how strong each connection between levels is.
This means that the metric closure results in significantly fewer edges being altered in the original graph; only those indirect paths comprised of a few edges, with every distance edge-weight relatively small, may provide a shorter indirect connection than the original direct connection.
In other words, the metric closure imposes a weaker distortion of the original graph.
Theorem \ref{theorem9} below (proof in appendix B) shows that the ultra-metric distance closure always leads to a larger distortion of the original graph, than what we get from the metric closure of the same graph: $\Delta_{um} \geq \Delta_{mc}$. These results are also depicted in Figure \ref{semimetric_closure_space}.

\begin{theorem}
\label{theorem9}
Given the isomorphism $\varphi $ from definition \ref{def1} and theorem \ref{theorem11}, if $D^{mc}$ is the metric closure with $f \equiv \min$ and $g_{1}\equiv +$, and $D^{um}$ is the ultra-metric closure with $f\equiv \min$ and $g_{2}\equiv \max$ then $D^{mc}\dot \supseteq D^{um}$ is equivalent to $P^{mc}\subseteq P^{um}$, where $D^{mc}=\Phi(P^{mc})$ and $D^{um}=\Phi(P^{um})$. Therefore,  $\Delta(P^{um}) \geq \Delta(P^{mc})$, where distortion is computed using eq. \ref{distortion}.
\end{theorem}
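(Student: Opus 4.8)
The plan is to prove the statement in three stages --- the equivalence first, then the underlying distance inequality $D^{mc}\dot\supseteq D^{um}$, and finally the distortion bound. The equivalence itself falls out of the isomorphism with no computation: since $D^{mc}=\Phi(P^{mc})$ and $D^{um}=\Phi(P^{um})$, Definition \ref{def1} gives $d^{mc}_{ij}=\varphi(p^{mc}_{ij})$ and $d^{um}_{ij}=\varphi(p^{um}_{ij})$ for every pair, and because $\varphi$ is strictly monotonic decreasing, the entrywise inequality $d^{mc}_{ij}\ge d^{um}_{ij}$ holds if and only if $p^{mc}_{ij}\le p^{um}_{ij}$. Hence $D^{mc}\dot\supseteq D^{um}$ is equivalent to $P^{mc}\subseteq P^{um}$.

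Next I would establish $D^{mc}\dot\supseteq D^{um}$ itself by comparing the two closures at the level of the $n$-powers of the common distance graph $D$ (Definition \ref{def4}). Both use the TD-conorm $f\equiv\min$ but different TD-norms, $g_1\equiv +$ and $g_2\equiv\max$, and on $[0,+\infty]$ one has $\max(a,b)\le a+b$. I would then show by induction on $n$ that the $n$-power computed with $\langle\min,\max\rangle$ is entrywise $\le$ the $n$-power computed with $\langle\min,+\rangle$: the base case $n=1$ is the shared graph $D$, and the inductive step combines the inductive hypothesis, the monotonicity of $\max$, and $\max(a,b)\le a+b$ inside the $\min_k$ composition. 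Passing to the closures of Definition \ref{def6b}, which are the entrywise $\min$ over all powers (since $\dot\cap$ is $f\equiv\min$), preserves the inequality, so $d^{um}_{ij}\le d^{mc}_{ij}$. By the previous paragraph this already gives $p^{mc}_{ij}\le p^{um}_{ij}$ for all $i,j$.

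For the distortion I would use that both $P^{mc}$ and $P^{um}$ are closures of the same original proximity graph $P$, computed with the T-conorm $\lor\equiv\max$ (Examples 1 and 2). By Theorem \ref{theorem5} the powers are entrywise nondecreasing in proximity, so each closure, being the $\max$-union of its powers, satisfies $p^{T}_{ij}\ge p_{ij}$. Every summand in eq. \ref{distortion} is therefore nonnegative and the absolute values may be dropped, giving $\Delta(P^{mc})=\sum_{i,j}(p^{mc}_{ij}-p_{ij})$ and $\Delta(P^{um})=\sum_{i,j}(p^{um}_{ij}-p_{ij})$. Subtracting, $\Delta(P^{um})-\Delta(P^{mc})=\sum_{i,j}(p^{um}_{ij}-p^{mc}_{ij})\ge 0$ by the entrywise inequality just obtained, which yields $\Delta(P^{um})\ge\Delta(P^{mc})$.

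I expect the only real obstacle to be the middle stage: propagating the one-step operator inequality $\max\le +$ through the iterated compositions up to the limiting closures. Care is needed to confirm, using anti-reflexivity $d_{ii}=0$ and the boundary condition $g(a,0)=a$ of a TD-norm, that each power is dominated by the previous one (as in Theorem \ref{theorem5}), so that identifying the closure with the pointwise $\min$ over powers is legitimate for both $g_1$ and $g_2$ simultaneously. The distortion step is then routine, once it is noted that the shared $\lor\equiv\max$ guarantees $p^{T}\ge p$; with a different T-conorm the sign argument that removes the absolute value would have to be revisited.
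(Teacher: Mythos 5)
Your proposal is correct and follows essentially the same route as the paper: the core of both arguments is an induction on the $n$-powers that propagates the one-step inequality $\max(a,b)\le a+b$ through the $\min_k$ composition, yielding entrywise domination of the ultra-metric powers by the metric ones. In fact your write-up is slightly more complete than the paper's, which stops at the entrywise domination and leaves the monotonicity-based equivalence and the dropping of absolute values in the distortion sum implicit.
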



We can see from theorem ~\ref{theorem11} and corollary ~\ref{theorem12} that the transitive closure of proximity graphs and the isomorphic distance closure of distance graphs, entails a very wide space of possibilities, which include the metric and ultra-metric closures. In the generalized metric closure case, each variant implies a distinct way of computing shortest path lengths---as well as assumptions about constraints on the variation of the distribution of shortest paths (\S \ref{other_closures}).
Consequently these closures are not unique as already known in the theory of fuzzy graphs, but not so well-known in the field of network science. For a given application, it is important to pay attention to the distortion created by the distance or transitive closure computed on the original relational information extracted from data.
Next we look at forms of distance closure which step outside the notion of shortest path, and search for distance closures with good axiomatic characteristics from a logical point of view, but which are intuitively close to the canonical metric closure.

\section{Diffusion Distance (Dombi Transitive) Closure}
\label{diffusion_section}

\subsection{Axiomatics of Distance Closure and Network Approximate Reasoning}
\label{axiomatics}

In the Fuzzy logic community, considerable work has been done to
identify pairs of operations and complements that satisfy
desirable axiomatic characteristics (e.g. De Morgan's laws
\cite{dombi82}). These pairs of general (fuzzy) logic conjunction
and disjunction operations are known as conjugate or \emph{dual} T-Norms and
T-Conorms \cite{Klir1995,Klement2004}. As
discussed above, each distinct
conjunction/disjunction pair leads to a specific transitive closure
of an initial proximity graph---with isomorphic distance closures.  However,
only some of these entail intuitive logical operations.
To pursue logical reasoning, it is reasonable to expect a complement
to be \emph{involutive}, so that $\bar{\bar{x}}=x$. It is also reasonable
that disjunction, conjunction and complement follow De Morgan's
laws: $\overline{a \vee b} = \bar{a} \wedge \bar{b}, \, \overline{a \wedge b} =
\bar{a} \vee \bar{b}$.
For instance, the $\langle \lor, \land \rangle \equiv \langle \max, \min \rangle$ operations, with
the standard fuzzy complement ($\bar{x}=1-x$), follow De Morgan's
laws. So do many other operations and complements, see
\cite{Klir1995} for a good overview.

Such desirable logical axiomatic constraints are also
important for graphs, especially when we use them to model
knowledge networks.
Since, as we have shown (\S \ref{knowledge_networks_section}),
proximity networks are good knowledge representations for many
applications, it is desirable to be able to combine or fuse networks obtained from different data sources and compute compound logical statements from the knowledge they store.
%
%
For instance, in the recommender system
developed for MyLibrary@LANL \cite{Rocha2005}, it may be useful
to issue recommendations on an aggregate journal network built from a
conjunction of two constituent networks (e.g. journal proximity
obtained from user access data \emph{and} journal proximity obtained
from citation data).
This calculus of fuzzy graphs \cite{Zadeh1999} allows us to perform a \emph{network approximate reasoning}, the development of which is beyond the scope of this article, but necessarily requires that algebraic structures $I$ and $II$ in our isomorphism (\S \ref{isomorphism_definition}) possess the reasonable (duality) constraints outlined above.

The metric closure of a distance graph corresponds to the transitive closure with algebraic structure $I$ where $\langle \lor, \land \rangle \equiv \langle \max, DT_{\land}^{1} \rangle$ in the proximity space (\S \ref{metric_example}).
As shown in example 1, the T-Norm $\land$ associated with this closure is a special case of the Dombi \cite{dombi82} conjunction \cite{Klir1995}:

\begin{equation}
\label{eq_dombi_and_lambda}
DT^{\lambda}_{\wedge}(a,b)=\frac{1}{1+\left[\left(\frac{1}{a}-1\right)^{\lambda}
+ \left(\frac{1}{b}-1\right)^{\lambda}\right]^{\frac{1}{\lambda}}} \quad \forall a, b \in [0,1], \; \lambda \in [0, +\infty]
\end{equation}

\noindent which, when $\lambda = 1$ becomes:

\begin{equation}
\label{eq_dombi_and_1}
DT^{1}_{\wedge}(a,b)=\frac{ab}{a+b-ab} \quad \forall a, b \in [0,1]
\end{equation}

Unfortunately, the T-Conorm/T-Norm pair $\langle \max, DT_{\land}^{1} \rangle$ used on the metric closure leads to an algebraic structure $I$ with very poor
axiomatic characteristics for logical reasoning. It can be easily shown that this pair of
operations does not satisfy De Morgan's laws for any involutive complement (see theorem \ref{complement} in appendix B).
Since no involutive complement exists that can
satisfy De Morgan's laws for this pair, we now ask what
is the $\langle \lor,\land \rangle$ pair closest to it, that with an involutive
complement obeys De Morgan's laws?

In section \ref{max_closures},  we fixed the T-Conorm $\lor \equiv \max$, and varied the isomorphism map $\varphi$, which is the same as varying the space of possible T-Norms $\land$ in proximity space. This led to the generalized metric closure, the entire space of which (shortest-path distance closures) contains a single dual T-Conorm/T-Norm pair: the ultra-metric closure, $\langle \lor, \land \rangle \equiv \langle \max, \min \rangle$.
In distance space, this means that we fixed the concept of shortest path ($f \equiv \min$), generalizing the computation of path length (via different $g$ binary operations).

Here, also starting from the metric closure $\langle \lor, \land \rangle \equiv \langle \max, DT_{\land}^{1} \rangle$, we fix the T-Norm $\land \equiv DT_{\land}^{1}$ and let the T-Conorm $\lor$ vary instead.
In this case, we also fix the isomorphism to the simplest, intuitive and most used $\varphi$ function of equation \ref{distance_measures4}.
In distance space, this means that we preserve the computation of path length to the summation of every edge weight in a path---because with this $\varphi$, fixing $\land \equiv DT_{\land}^{1}$ in proximity space, is equivalent to fixing $g \equiv +$ in distance space (\S \ref{metric_example}).
In the present work, we restrict the search of T-Conorms to the same Dombi family \cite{dombi82,Klir1995}:

\begin{equation}
\label{eq_dombi_or_lambda}
DT^{\lambda}_{\vee}(a,b)=\frac{1}{1+\left[\left(\frac{1}{a}-1\right)^{-\lambda}
+ \left(\frac{1}{b}-1\right)^{-\lambda}\right]^{-\frac{1}{\lambda}}} \quad \forall a, b \in [0,1], \; \lambda \in [0, +\infty]
\end{equation}

\noindent which, when $\lambda = 1$ becomes:

\begin{equation}
\label{eq_dombi_or_1}
DT^{1}_{\vee}(a,b)=\frac{a+b-2ab}{1-ab} \quad \forall a, b \in [0,1]
\end{equation}

\noindent Therefore, in distance space, we are no longer computing the shortest path ($f \equiv \min$), but something else, where $f$ is given by eq. \ref{eq_constraint_f} using $\varphi$ from eq. \ref{distance_measures4} and $\lor$ from eq. \ref{eq_dombi_or_lambda}.

Using the general Dombi T-Conorm equation (\ref{eq_dombi_or_lambda}) we can find a
$\lambda$ which satisfies De-Morgan's laws when paired with the Dombi T-Norm used in the metric closure: $\langle DT^{\lambda}_{\lor},DT^{1}_{\wedge} \rangle$.
We perform this search using the Dombi T-Conorm $DT^{\lambda}_{\lor}$ because it is one of the most well-known parametric T-Norm/T-Conorm families which includes the widest range possible of such operations \cite{dombi82,Klir1995}.
It includes the T-Conorm used in the metric closure (\S \ref{metric_example}), since $DT^{\lambda \rightarrow \infty}_{\lor} (a,b) = \max (a,b)$. Therefore, we can investigate the properties of metric closure in this formulation.

Because the Dombi family of T-Norms and T-Conorms is dual when the same $\lambda$ is used for the T-Norm and T-Conorm \cite{dombi82, Klir1995}, $\lambda = 1$ obviously yields a dual pair $\langle DT^{1}_{\lor},DT^{1}_{\wedge} \rangle$ with the characteristics we seek. This pair is used to compute a \emph{diffusion distance closure} studied in detail below (\S \ref{diffusion_closure_subsection}).
However, is this pair the closest to the metric closure in this formulation? Can we find those values of $\lambda$ near satisfying the requisite laws of logic? How far is the metric closure from satisfying these laws?

Notice that the T-Conorm/T-Norm pair used by the ultra-metric closure (\S \ref{ultra_metric_example}) is not included in this search because it does not share the T-Norm  $\land = DT_{\land}^{1}$.
In any case, because the ultra-metric closure is based on the dual T-Norm/T-Conorm pair $\langle \lor,\land \rangle = \langle \max,\min \rangle$, it is already based on an algebraic structure $I$ with all the desirable axiomatic characteristics---the only one in the set of shortest-path distance closures (\S \ref{other_closures}).
%

Let us then investigate if the De-Morgan's Laws, with the standard complement $C_{1}(x)=1-x$, are satisfied for the pair $\langle DT^{\lambda}_{\lor},DT^{1}_{\wedge} \rangle$;

\[ \bar a \lor \bar b | \lor \equiv D_{\lor}^{\lambda}(\overline a, \overline b) = \frac{1}{1+\left[ \left( \frac{1}{a}-1\right)^{\lambda}+\left(\frac{1}{b}-1\right)^{\lambda}\right]^{-\frac{1}{\lambda}}}\]

\[ \overline{a \land b} | \land \equiv \overline D_{\land}^{1}(a,b)  =1 - \frac{ab}{a+b-ab}=\frac{a+b-2ab}{a+b-ab}\]

For De-Morgan's Law to hold,  $\overline{a \land b}=\overline a \lor \overline b$:
\[ -ab\left[\left(\frac{1}{a}-1\right)^{\lambda}
+ \left(\frac{1}{b}-1\right)^{\lambda}\right]^{\frac{1}{\lambda}}+a+b-2ab=0\]

This equation has $\lambda =1$ as a straightforward unique solution which
is not at all surprising;
%
%
this merely shows that in the Dombi family the unique dual T-Conorm for the $DT_{\land}^{1}$ T-Norm is $DT_{\lor}^{1}$.
While the unique error-free solution is trivial, we can use this equation to understand how far from desirable axiomatic characteristics the pair used in metric closure is.
We can think of the left side of the equation as the \emph{error} or \emph{deviation} from a T-Norm/T-Conorm pair that obeys De-Morgan's laws with standard complement. An
integral of the left side of the above equation yields an estimate of
the total deviation $F(\lambda)$ from ideal axiomatic characteristics over the entire domain of
the operations:

\[F(\lambda)= \int_0^1{\int_0^1 {\left(-xy\left[\left(\frac{1}{x}-1\right)^{\lambda} + \left(\frac{1}{y}-1\right)^{\lambda}\right]^{\frac{1}{\lambda}} + x +y -2xy\right)}dx}dy\]

\begin{figure}[!th]
\centerline{\includegraphics[width=0.8\textwidth]{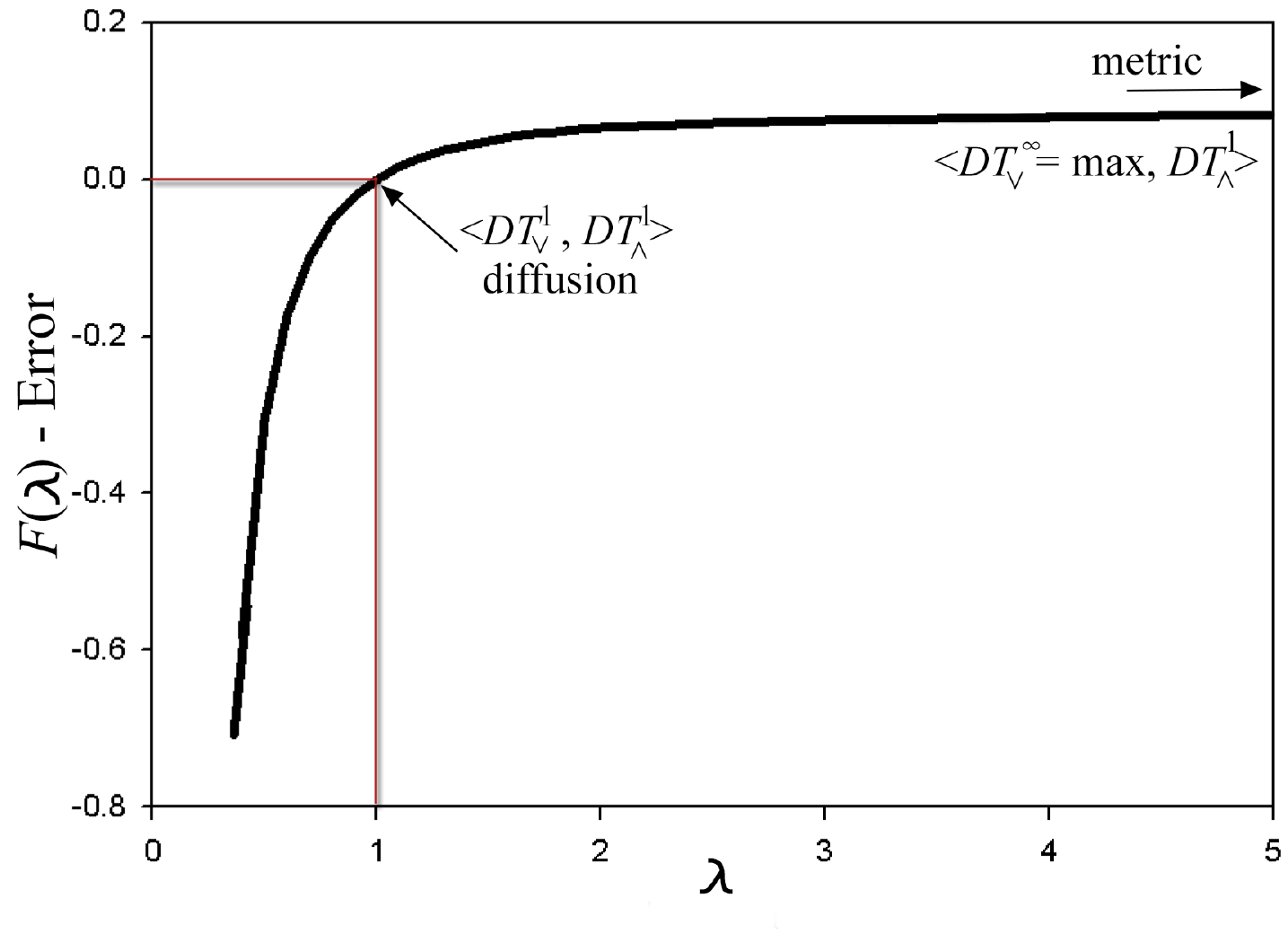}}
\caption{\small Error  between the surface established by the
desired axiomatic constraints (De-Morgan's laws with standard complement), and $\langle DT^{\lambda}_{\lor},DT^{1}_{\wedge} \rangle$ as $\lambda$ varies. }
\label{dombi_error}
\end{figure}

Figure \ref{dombi_error} shows the error from ideal axiomatic characteristics  (computed as the double
integral, above) that ensues from using the pair $\langle \lor,\land \rangle = \langle DT_{\lor}^{\lambda},DT_{\land}^{1} \rangle$ for a given $\lambda$.
The unique, error-free solution exists for $\lambda=1$; $\langle DT^{1}_{\lor},DT^{1}_{\wedge} \rangle$ is the T-Conorm/T-Norm pair used in the diffusion distance closure (\S \ref{diffusion_closure_subsection}).
This pair allows De Morgan's and involution rules to be systematically applied without error when logically combining graphs (in network approximate reasoning).
As $\lambda\rightarrow +\infty$, we reach the T-Conorm/T-Norm pair $\langle \max, DT^{1}_{\land} \rangle$ used in the metric closure.
In contrast, logically combining graphs with this pair will result in the systematic accumulation of errors, meaning that we cannot recover the original values of a graph by involution or by applying De Morgan's laws.
When $\lambda\rightarrow 0$, the Dombi T-Conorm approaches the drastic disjunction\footnote{$u(a,b)=1$, except when $a=0$ or $b=0$, where $u(a,b)$ is $b$ or $a$, respectively \cite{Klir1995}.}, which is revealed to be very far from any desirable characteristics, with unbounded error as $\lambda\rightarrow 0$.

Interestingly, while the pair $\langle \max,DT^{1}_{\land} \rangle$ employed by the metric closure does not possess perfect axiomatic characteristics, its error is bunded, as the curve in Figure
\ref{dombi_error} asymptotically approaches 0.1 when
$\lambda\rightarrow +\infty$.
The relatively small error of this pair may be
acceptable if we do not intend to \emph{frequently} combine our graphs using logical expressions---using approximate reasoning on networks based on T-Norms, T-Conorms, and the complement.
%

There are thus two solutions available if we are interested in algebraic structures $I$ and $II$ (\S \ref{isomorph_section}) capable of logical reasoning with an involutive complement---when we intend to use proximity or distance graphs as knowledge representations and manipulate them with network approximate reasoning.
We can either preserve the notion of shortest path ($f \equiv \min$) or the notion of path length as the sum of edges ($g \equiv +$), but not both simultaneously, because there is no isomorphic T-Norm/T-Conorm pair that obeys De Morgan's laws and simultaneously satisfies those two notions---which define the metric closure.
When we preserve the notion of shortest path ($f \equiv \min$) the \underline{only} alternative is the ultra-metric T-Conorm/T-Norm pair $\langle \lor \equiv \max, \land \equiv \min \rangle$ (\S \ref{ultra_metric_example}).
When we preserve the notion of path length as the sum of edges ($g \equiv +$), then \underline{only} the diffusion T-Norm/T-Conorm pair $\langle \lor \equiv DT^{1}_{\lor}, \land \equiv DT^{1}_{\wedge} \rangle$ obeys De Morgan's laws with the most intuitive and simple isomorphism (eq. \ref{distance_measures4}).
Next we study this second alternative in more detail and show that it leads to a notion of distance closure potentially useful for network science in its own rightç that is, even if we are not interested in network approximate reasoning.

\subsection{Diffusion Closure}
\label{diffusion_closure_subsection}

The T-Conorm/T-Norm pair $\langle \lor \equiv DT^{1}_{\lor}, \land \equiv DT^{1}_{\wedge} \rangle$ obtained above, obeys De Morgan's laws and preserves the notion of path length as the sum of edges ($g \equiv +$). However, when used to compute a distance closure via our isomorphism (\S \ref{isomorphism_definition}), it relaxes the notion of shortest path because: $\lor \neq \max \Leftrightarrow f \neq \min$. We now study what ensues from this pair when used in algebraic structure $II$ to compute a \emph{diffusion distance closure}. Moreover, because $\lor \neq \max$, convergence in finite time is no longer guaranteed (\S \ref{convergence}), and so we also need to understand how to use this diffusion closure computationally.

\vspace{5mm}

\emph{\bf{Example 3 (Diffusion Closure)}}
\label{ex_diff_clos}
Let $\varphi:[0,1]\rightarrow [0,+\infty]$, $d_{ij} = \varphi(p_{ij}) = \frac{1}{p_{ij}}-1$ (as in equation \ref{distance_measures4}, \S \ref{semimetric_background}).
Let also $\land \equiv DT^{1}_{\land}(a,b)$ (eq. \ref{eq_dombi_and_1}), and $\lor \equiv DT^{1}_{\lor}(a,b)$ (eq. \ref{eq_dombi_or_1}), where $a,b \in [0, 1]$ represent proximity weights from semi-ring $I$ (\S \ref{isomorph_section}).
We know from theorem \ref{theorem11}, eq. \ref{eq_constraint_g}:

\[g(x,y)=\varphi(\land(\varphi^{-1}(x),\varphi^{-1}(y)))\]

\noindent where $x,y \in [0, +\infty)$ represent distance weights from semi-ring $II$ (\S \ref{isomorph_section}), $x=\varphi(a)$ and $y=\varphi(b)$.
Since $\varphi^{-1}(x)=\frac{1}{x+1}$, by substitution of $\varphi, \varphi^{-1}$, and $\land = DT^{1}_{\land}$, we obtain:

\[g(x,y)\equiv x+y\]

We apply the same reasoning to $f$, using eq. \ref{eq_constraint_f}:

\[f(x,y) \equiv \varphi (\lor( \varphi^{-1}(x),\varphi^{-1}(y)))\]

\[f(x,y)=\frac{1-DT^{1}_{\lor}(\varphi^{-1}(x),\varphi^{-1}(y))}{DT^{1}_{\lor}(\varphi^{-1}(x),\varphi^{-1}(y))}\]
\[f(x,y)= \frac{1-DT^{1}_{\lor}(\frac{1}{x+1},\frac{1}{y+1})}{DT^{1}_{\lor}(\frac{1}{x+1},\frac{1}{y+1})}\]
\noindent yielding,

\[f(x,y)= \left\{ \begin{array}{ccc}
y & for & x=+\infty \\
x & for & y=+\infty \\
\frac{xy}{x+y} & for & x,y \in ]0,+\infty[ \\
0 & for & x = y =0 \end{array} \right.
\]

Therefore, the transitive closure of a proximity graph with algebraic structure $I$ where $\langle \lor, \land \rangle \equiv \langle DT^{1}_{\lor}, DT^{1}_{\land} \rangle$, is isomorphic to the distance closure using algebraic structure $II$ where $f(x,y)=\frac{xy}{x+y}=\frac{1}{\frac{1}{x}+\frac{1}{y}}$ and $g(x,y)=x+y$ in the distance space. With this algebraic structure $II$, the composition of distance graphs $G_D = (X, D)$ (definition \ref{def4}, \S \ref{isomorphism_definition}) is given by this specific TD-Conorm/TD-Norm pair:

\[ D \circ D = \mathop{f}\limits_{k} \, (d_{ik} + d_{kj}) = \frac{1}{\displaystyle\sum_{k} \frac{1}{d_{ik} + d_{kj}}} = d^2_{ij}, \quad \forall x_i,x_j,x_k \in X \]

\noindent because $g(d_{ik},d_{kj})=d_{ik}+d_{kj}$.
Since $p_k = d_{ik} + d_{kj}$ is the length of the path between vertices $x_i$ and $x_j$, via vertex $x_k$, the distance between vertices after composition with $\langle f, g \rangle$ becomes:

\begin{equation}
\label{diffusio_dist_eq}
d^2_{ij} = \frac{1}{\frac{1}{p_1}+\dots+\frac{1}{p_{\nu}}}=\frac{HM(p_1,\dots,p_n)}{\nu}
\end{equation}

\noindent where $\nu$ is the number of distinct paths $p_k$ that exist between $x_i$ and $x_j$, via some vertex $x_k$, and $HM$ is the harmonic mean of the lengths of such paths.
This means that the operations $\langle \lor, \land \rangle \equiv \langle DT_{\lor}^{1}, DT_{\land}^{1} \rangle$ of proximity graphs yield a \emph{diffusion distance} \cite{Coifman_etal_2005} in distance space; thus, the transitive closure with the Dombi operators (with $\lambda = 1$) yield a \emph{diffusion closure} of distance graphs.



The algebraic structure $I=([0,1],DT_{\lor}^{1}, DT_{\land}^{1})$ does not form a dioid (it is a pre-ordered bounded lattice \cite{Han2004}). This means that the conditions of theorems \ref{theorem3a} and \ref{theorem3c} are not met, and therefore the transitive and distance closures are not guaranteed to converge in a finite time (\S \ref{convergence}).
Indeed, the transitive closure $G_P^T (X, P^T)$ with $I=([0,1],DT_{\lor}^{1}, DT_{\land}^{1})$ converges asymptotically to the T-Norm neutral element $e=1$ as $\kappa \rightarrow +\infty$ in Definition \ref{def3}, but not in finite time.
Likewise, via the isomorphism, the diffusion distance closure $G_D^T (X, D^T)$ with algebraic structure $II=([0,+\infty], f(x,y)=\frac{xy}{x+y}, g(x,y)=x+y)$ converges asymptotically to the TD-Conorm neutral element $\dot e=0$ as $\kappa \rightarrow +\infty$ in Definition \ref{def6b}.

From eq. \ref{diffusio_dist_eq}, it is easy to see that the diffusion closure of a connected distance graph converges to a fully connected distance graph where every edge weight is near zero. As $\kappa$ increases, the distance between every pair of vertices is computed over and over as the harmonic mean divided by the (growing) number of paths of $\kappa$ (repeating) edges, leading to a quick convergence to zero.
Indeed, while this diffusion distance closure does not converge in finite time, it does quickly converge to arbitrarily near its limit ($\dot e=0$) in just a few (composition) steps of $\kappa$ in Definition \ref{def6b}.

It is precisely what happens to the original distance graph $G_D (X, D)$ in the first few $\kappa$ steps of the diffusion closure computation that makes it interesting for network science. In other words, the limit to which the diffusion distance closure converges (all edges with zero distance weight) is trivial---information is in the limit completely diffused to all connected vertices. But as we compute $D^n$, the $n$-Power of distance graph $G_D$ (Definition \ref{def4}), for small values of $n$, we can study \emph{diffusion processes} of $n$ steps on networks.
The indirect distances computed via diffusion, offer an altogether different quantification of indirect distances on networks from what we can obtain via the shortest-path distance closures (\S \ref{max_closures}), such as the metric closure (via APSP/Dijkstra).
Moreover, this approach to studying diffusion distances naturally derives from the algebraic formulation we have outlined here, rather than via stochastic algorithms such as random walks---commonly used in network science to study diffusion processes \cite{noh2004random,fronczak2009biased}.

$D^n$ (and isomorphically $P^n$) yields a graph whose edges measure the diffusion distance between vertices of the original graph $G_D(X,D)$.
More specifically, the edges between vertices $x_i$ and $x_j$ are computed as  the harmonic mean of the lengths of all paths of $n$ edges (computed as the sum of constituent edge weights) between $x_i$ and $x_j$, divided by the total number of distinct such paths ($\nu$, eq. \ref{diffusio_dist_eq}).
We can think of this as a measurement of how near are (indirectly connected) vertices $x_i$ and $x_j$ to each other, if information is allowed to traverse all paths of $n$ edges between them---where the same edge can repeat in the formation of a path.
Therefore, we can think of the $n$-Power of a distance graph, $D^n$, as a $n$-\emph{diffusion} process. Rather than computing a distance closure (as $\kappa \rightarrow +\infty$, Definition \ref{def6b}), we are thus interested in such contained diffusion processes.

An interesting feature of $D^n$ is that the distance edge weights, in addition to being semimetric (breaking the triangle inequality, \S \ref{semimetric_background}), can also break the symmetry axiom of a metric function when $n > 2$. In other words, the distance function of graphs $D^{n > 2}$ only obeys the nonnegative and anti-reflexive axioms (\S \ref{semimetric_background}) and is therefore a \emph{premetric} (inducing a \emph{pretopology} \cite{STADLER2001241}).
This means that $G_D^n(X,D^n)$ can be a directed, weighted graph, even though $G_D (X,D)$ is undirected.
The symmetry breaking occurs in the computation of the $n$-Power of Distance Graph via Definition \ref{def4} because of the composition of adjacency matrices of graphs with different powers. At $n=3$, the symmetry breaking point, the edge weights between nodes $x_i$ and $x_j$ are obtained as:

\[ d_{ij}^3 = \mathop{f}\limits_{k} \, (d^2_{ik} + d_{kj}) = \frac{1}{\displaystyle\sum_{k} \frac{1}{d^2_{ik} + d_{kj}}}, \quad \forall x_k \in X \]

\[ d_{ji}^3 = \mathop{f}\limits_{k} \, (d^2_{jk} + d_{ki}) = \frac{1}{\displaystyle\sum_{k} \frac{1}{d^2_{jk} + d_{ki}}}, \quad \forall x_k \in X \]

Because $d^2_{ik}$ and $d^2_{jk}$ are computed via the harmonic mean of eq. \ref{diffusio_dist_eq}, and the degree of a node $x$ can be larger in graph $D^2$ than the degree of $x$ in graph $D$, we have $d_{ij}^3 \neq d_{ji}^3$ and the asymmetry appears\footnote{In the case of shortest-path closures (\S \ref{max_closures}), because $f = \min$, the asymmetry does not occur.}.
If we want to avoid the asymmetry, an alternative is to compute the composition only of a graph with itself. In this case, the $n$-diffusion is computed only for the powers of two: $D^{2^\eta} = D^{2^{\eta-1}} \circ D^{2^{\eta-1}}, \eta=1,2,3...$
Each $n=2^\eta$ power of $D$ represents the distance between vertices that arises from diffusion on paths of $2^\eta$ edges.
This approach to computing the $n-$diffusion is reasonable, also because as $\kappa \rightarrow +\infty$, the distance closure naturally converges to the trivial, undirected graph where all edges have zero distance weight. Therefore, in the limit, the transitive closure retains symmetry.
%
%
Edge-symmetry breaking is exemplified in more detail in the next subsection (\S \ref{diffusion_examples}), where the utility of $n-$diffusion to network science is also discussed further.

To highlight how different the measurement of indirect distances on networks is between shortest-path closures and $n$-diffusion processes, let us return to our social example.
The ultra-metric closure is based on the idea that the indirect distance between two vertices (in a distance graph) is a shortest-path computed as the smallest of the weakest links (largest edge distance weight) of all paths---the distance between the catholic and the Pope is only the weakest social link found in the strongest possible indirect social chain up the hierarchy.
The metric closure is based on the idea that there is a penalty for the number of edges in the strongest path up the hierarchy.
In contrast, a diffusion process assumes that the ability to ``influence'' a distant node depends (via the harmonic mean) on how many strong paths exist to that node.
Whereas the metric closure assumes the distance between two indirectly connected nodes is the single shortest path between them, the $n$-diffusion assumes that having more or fewer short indirect paths is important in computing indirect path distances.
Our catholic has a higher ability to influence the Pope if there are many alternative strong paths (measured by summing the edges just as in the metric case) up the hierarchy, than if there is only one strong path.

Because diffusion distances automatically account for number of indirect connections, they can be very useful for community detection in weighted graphs and segmentation of topological data \cite{goes_etal_2005, Coifman_etal_2005, Lafon_Lee2005}. As we can see in figure \ref{diffusion}, inside a community the diffusion distance is shorter (from vertex C to B) because there are many possible strong indirect paths. In contrast, from one community to the other the diffusion distance is larger (from vertex B to A) because there are only a few possible indirect paths (bridges).
In the next subsection (\S \ref{diffusion_examples}) we demonstrate with examples the utility of $n$-diffusion for community detection.

\begin{figure}[!th]
\centerline{\includegraphics[width=0.9\textwidth]{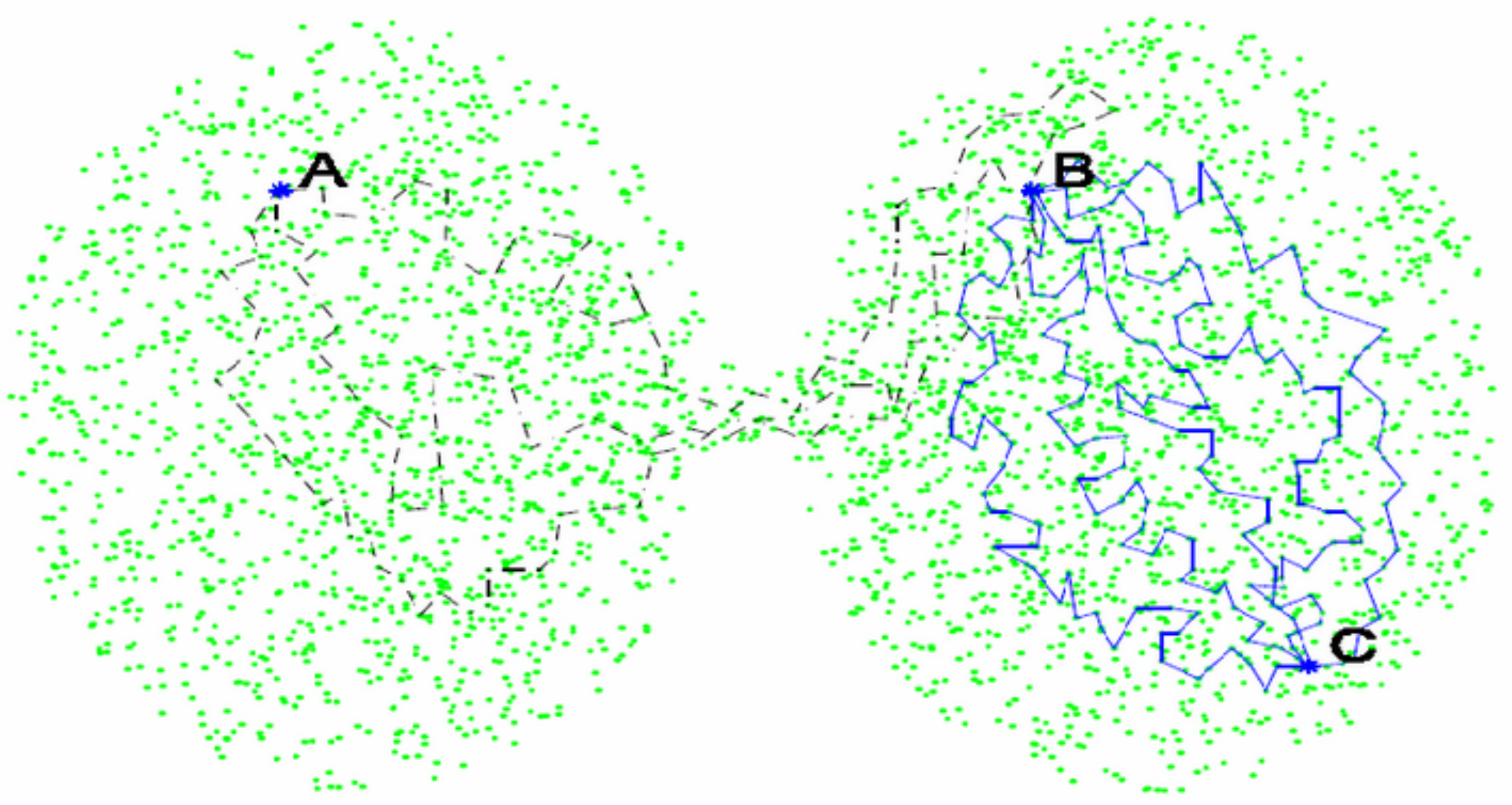}}
\caption{\small Diffusion distance in community detection. From http://www.math.duke.edu/~mauro/diffusiongeometries.html.}
\label{diffusion}
\end{figure}

%

Theorem \ref{theorem9} (\S \ref{ultra_metric_example}) shows that the ultra-metric closure always leads to smaller distances than the metric closure---larger distortion of the original graph. But how do $n$-diffusion processes relate to the metric and ultra-metric closures?
It is trivial to see that

\[ \min(p_1,\dots,p_{\nu}) \leq HM(p_1,\dots,p_{\nu}) \]

\noindent therefore the metric closure, which is the shortest path between every pair of nodes, always leads to a smaller or equal distance than the harmonic mean of the lengths of all possible paths between a pair of nodes.
However, the $n$-diffusion computes the distance between vertices according to equation \ref{diffusio_dist_eq}, which is the harmonic mean of the lengths of all paths, divided by the number $\nu$ of such paths.
When $\nu=1$, which happens only in the rare case of a single path $p_1$ (a bridge) between two vertices $x_i$ and $x_j$, the metric closure ($d^{mc}$) and $n$-diffusion ($d^n$) closure yield the same value:

\[ d_{ij}^{mc} = \min(p_1) = p_1 = HM(p_1) = d_{ij}^n. \]

\begin{figure}[!th]
\centerline{\includegraphics[width=1\textwidth]{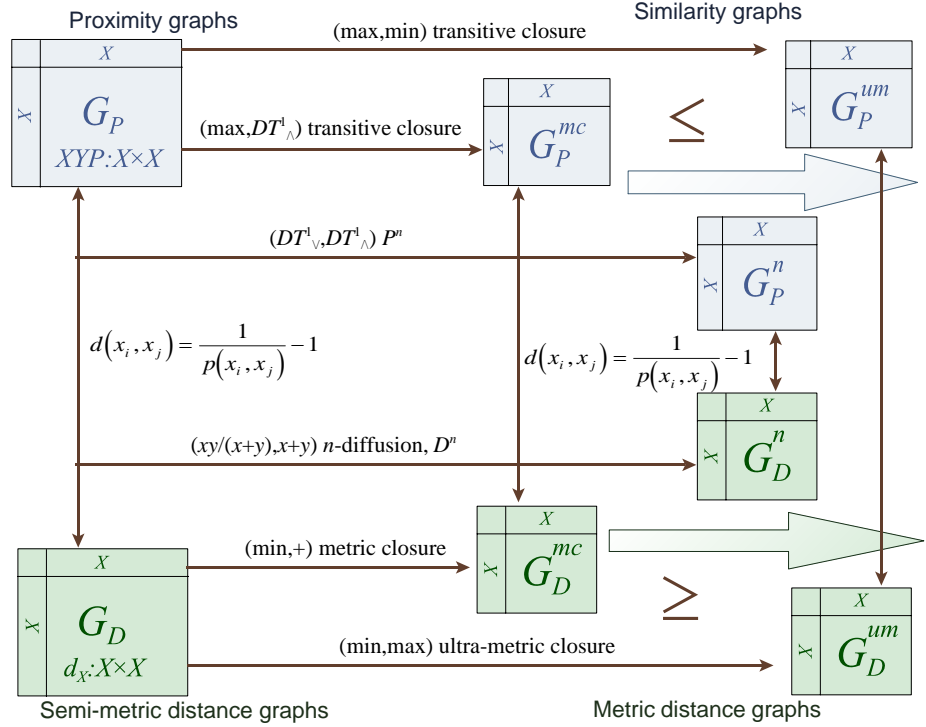}}
\caption{\small $n$-diffusion is isomorphic to the $n$-power of proximity graphs based on the Dombi T-Norm and T-Conorm for $\lambda=1$. The $n$-diffusion is shown with the metric and ultra-metric distance closures, and their fuzzy proximity graph counterparts for $\varphi: distance=\frac{1}{proximity}-1$. It can yield values larger or equal to the metric closure.} \label{diffusion_space}
\end{figure}

\noindent But as $\nu \rightarrow \infty$, we have $d_{ij}^n \rightarrow 0$. Depending on the number of paths that exist between a pair of nodes, the $n$-diffusion leads to a distance value always smaller or equal than the metric closure; and a value that can be larger or smaller than the ultra-metric closure. In other words, the $n$-diffusion distance varies in the interval $(0, \min(p_1,\dots,p_{\nu})]$.
Thus, it is always guaranteed to be metric, but the distance between some vertices (those with many paths between them) can be smaller than ultra-metric, while the distance between others (such as bridges or with very few paths between them) can be very close to metric, a space of variation depicted in Figure \ref{diffusion_space}.
%
Indeed, the fact that the $n$-diffusion depends on the number of indirect paths between any two nodes, is what makes it a natural candidate for community detection as we exemplify next.

\subsection{Applying $n$-Diffusion}
\label{diffusion_examples}

In section \ref{diffusion_closure_subsection} we showed how the concept of distance closure allows us to study diffusion processes on networks using an algebraic formulation---rather than stochastic simulations.
Furthermore, the $n$-\emph{diffusion} process is based on an algebraic structure with good axiomatic characteristics to pursue logical or approximate reasoning on networks (\S \ref{axiomatics}).
In proximity space the algebraic structure $I=([0,1],DT_{\lor}^{1}, DT_{\land}^{1})$ (Dombi T-Conorm/T-Norm pair for $\lambda=1$) is employed, whereas in distance space we have the isomorphic $II=([0,+\infty], f(x,y)=\frac{xy}{x+y}, g(x,y)=x+y)$.

Similarly to what was pursued in section \ref{other_closures} to define generalized metric closures, we can fix the T-Conorm ($DT_{\lor}^{1}$) and vary map $\varphi$ in the isomorphism of theorem \ref{theorem11}, in effect varying all possible T-Norms $\land$. This would result on sweeping the space of possible \emph{generalized diffusion processes}, whereby the length of paths would be computed differently as $g$ would change in the algebraic structure $II$ of distance space. For instance, in the $n$-diffusion case we present here, we have $g(x,y)=x+y$ because path length is computed by summing edges, but if we use T-Norm $DT_{\land}^{\rightarrow +\infty} \rightarrow \min$ (eq. \ref{eq_dombi_and_lambda}), path length would be computed by the weakest link, the largest distance edge weight in a path (because we would obtain $g(x,y)=\max(x,y)$).
Thus, we could compute all variations of diffusion distances as the harmonic mean of path length computed by some measure (divided by number of paths).
The exploration of the space of such generalized diffusion closures and processes is left for future work. Here we want to emphasize the utility of $n$-diffusion processes computed via the algebraic distance closure of sections \ref{axiomatics} and \ref{diffusion_closure_subsection} on two network examples we pursue next.

\subsubsection*{Toy Network}

\begin{figure}[!th]
\centerline{\includegraphics[width=1\textwidth]{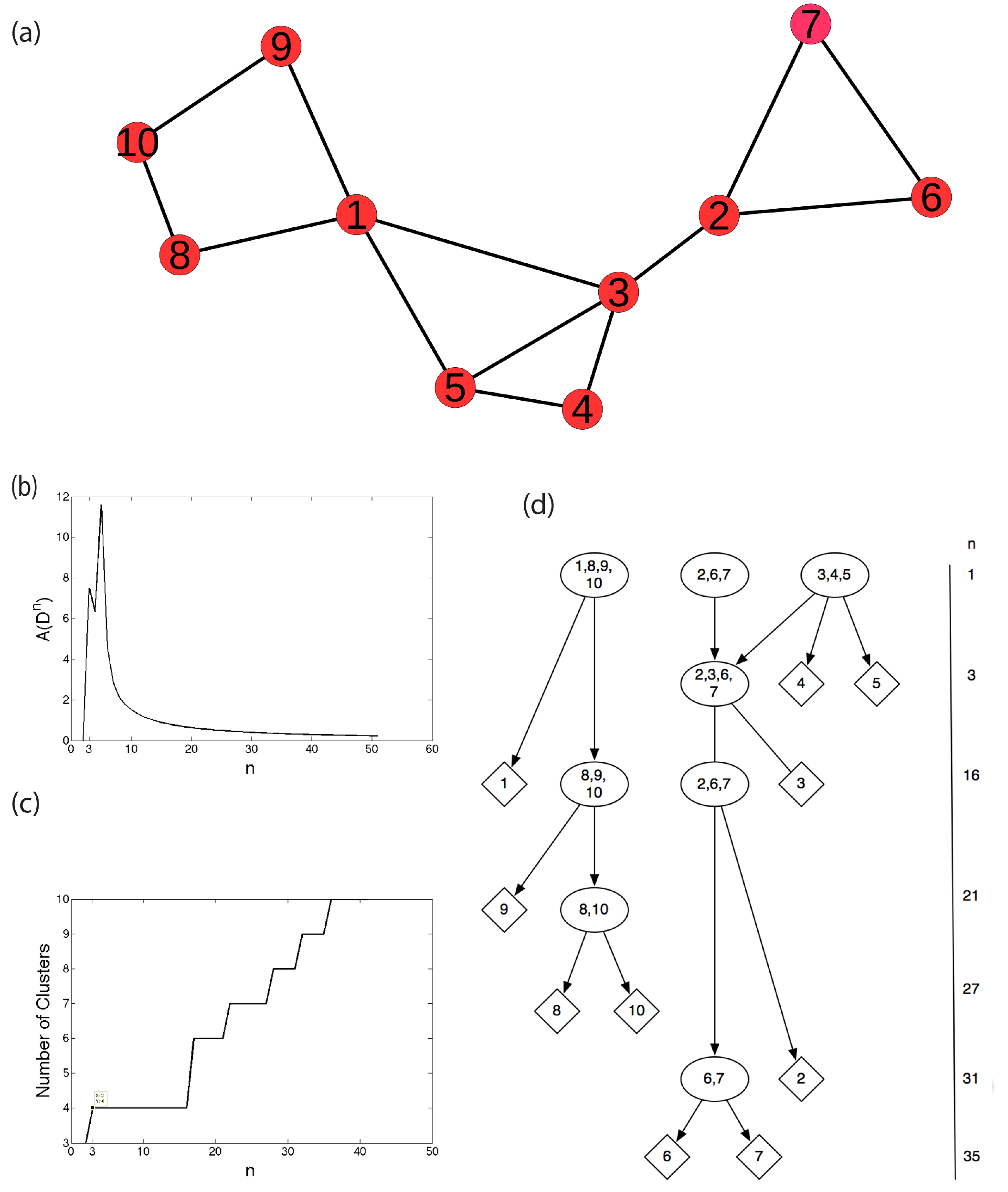}}
\caption{\small $n$-diffusion in Toy Network. (a) the network structure with 3 communities. (b) Asymmetry as defined in equation \ref{asymmetry}. (c) Number of communities in each $D^n$, as $n$ increases. (d) Hierarchy of communities as $n$ increases.  See text for details.
}
\label{fig_example1}
\end{figure}

Figure \ref{fig_example1} (a) depicts a toy network, defined by a simple graph where edge weights are as follows: $d_{ii}=0, d_{ij}=1$. When an edge does not exist between $x_i$ and $x_j$, we have $d_{ij}=+\infty$.
This network is designed to display three communities with two types of bridges: a node (1) and an edge (between nodes 2 and 3). Furthermore, one of the communities ($\{1,3,4,5\}$) is a ``bridge community'' as it sits between the other two peripheral communities ($\{1,8,9,10\}$ and $\{2,6,7\}$).

Let us now compute the $n$-diffusion of this graph for $n=1,2,\cdots$, which is given by the respective $n$-power of the distance graph $D^n$ (Definition \ref{def4}, \S \ref{isomorphism_definition}).
Naturally, $n=1$ refers to the original distance network with connectivity matrix $D$.
For each $n$ we compute the community structure of the $n$-diffusion graph $D^n$ using the Louvain method adapted for directed graphs as described in \cite{rubinov2010}. We use a method for directed graphs due to the asymmetry that arises in $n$-diffusion (\S \ref{diffusion_closure_subsection}).
%
%
%
Other algorithms can be employed, but this one suffices to provide the reader with an intuitive understanding of the effect $n$-diffusion processes.
The $n$-diffusion yields the distance graph $D^n$ whose edges $d_{ij}^n$ denote the indirect distance between nodes $x_i$ and $x_j$ measured by the diffusion that occurs via all paths of $n$ edges between those nodes (including repetition of edges in paths). Diffusion here is measured by the harmonic mean of the length of all these paths (sum of all edge weights), divided by the number of such paths (\S \ref{diffusion_closure_subsection}).
%
%
%
To measure the total amount of asymmetry of the network at a given $n$-diffusion process (\S \ref{diffusion_closure_subsection}), we sum the difference between the upper and lower diagonals of the connectivity matrix of $D^n$:

\begin{equation}
\label{asymmetry}
A(D^n) = \sum_i \sum_{j=i+1} |d_{ij}^n - d_{ji}^n|
\end{equation}

In Figure \ref{fig_example1} (b) we can see how symmetry is broken for this network at $n=3$. Afterwards, asymmetry increases in the $n$-diffusion process until $n=5$, and then quickly decreases asymptotically as $n$ increases and the network converges to its diffusion closure.

Interestingly, the computation of the community structure of each $D^n$ as $n$ increases, produces a form of hierarchical clustering.
In Figure \ref{fig_example1} (d) the communities uncovered as $n$ increases are depicted in a tree with $n$ represented vertically. We can see, for instance, that when symmetry is broken at $n=3$, node 3 moves from a community with nodes 4 and 5, to the community with nodes 2,6, and 7.
In this case, the bridge community is broken, with node 3 of bridge edge $d_{23}$ joining community $\{2,6,7\}$, while the remaining nodes of the bridge community break into single-node communities $\{4\}$ and $\{5\}$.
Only later, at $n=16$, do bridge nodes $\{1\}$ and $\{3\}$ separate.
%
%
In summary, the bridge edge and the bridge community break apart well before the bridge nodes do.
As diffusion with longer paths progresses, eventually all communities break into the individual 10 nodes in the network by $n=35$.
This overall process can also be seen in \ref{fig_example1}(c), which depicts the number of communities detected in each $D^n$ network as $n$ increases.
The hierarchical breaking of the original community structure reflects how the diffusion via all possible paths up to a given $n$ connect the nodes in the network. In the limit, all nodes are near all other nodes, but in the diffusion process the nodes of some communities, for a while, remain closer to each other than to the rest of the network. The first nodes to be near all other nodes are 4 and 5, while the nodes in the most peripheral communities and edges remain nearer each other than the rest of the network for longer.


It is clear that $n$-diffusion peels of the community structure of this toy example in a reasonable way. It may also be useful to compare it to more traditional modularity detection algorithms. In Appendix D of supporting materials, the results of the analysis of this network using Newman's Fast algorithm \cite{NewmansFast_PhysRevE} and Hierarchical clustering \cite{day1984efficient} are depicted.
While the results are similar across all methods (e.g. there are clearly three main clusters), some differences are noteworthy. The $n$-diffusion handles the bridges and bridge communities differently. While the other two methods clearly locate nodes 1 and 3 as part of a cluster with nodes 4 and 4, the $n-$diffusion (plus Louvain community detection) distinguish such bridge nodes more emphatically: node 3 is first moved to the $\{2,6,7\}$ cluster and subsequently the first to be broken from this cluster, while 1 is first a part of cluster $\{8,9,10\}$ and then broken from it first.

These subtle differences make sense when we consider that $n-$diffusion is akin to a process of information diffusion on a network via all paths of $n$ edges. As paths of different $n$ are combined, the asymmetry arises and grows, because the diffusion distances computed for a given size path are distinct from those computed with another (\S \ref{diffusion_closure_subsection}). In a sense, the information about the network topology that is transmitted by paths of a certain size, is different from what is transmitted by paths of a different size. As longer paths are considered, information is transmitted across the whole network, the asymmetry disappears (see \ref{fig_example1}(c)) and every node is as near as possible to every other (connected) node.

It is also interesting to compare the results of applying $n$-diffusion to this network, with those obtained via the metric and ultra-metric closures.
Because the edges between distinct nodes of the original toy network all possess the same weight $d_{ij}=1$, the edge-eights that appear from indirect metric paths are all larger than the weights of original edges ($d \in \{2,3,4,5 \}$).
This means that the community structure of the metric closure $D^{m}$ is still best represented by the same 3 communities as the original network.
In contrast, the ultra-metric closure in this case produces a fully-connected graph of the same nodes with same edge-weights $d_{ij}=1$ (everything is connected to everything by the weakest link in a path).This means that the ultra-metric closure of this network $D^{um}$ contains a single community that includes all nodes.
Therefore, the $n$-diffusion analysis reveals a different community structure than the shortest-path metric and ultra-metric closures. In particular, it identifies the bridges and bridge communities more clearly.

\subsubsection*{Co-Activity Flu Network}


We repeated the same analysis for a network obtained from real-world data about Flu co-activity between countries.
This network was built from time-series data obtained from \texttt{Google Flu Trends}\footnote{Data Source: Google Flu Trends (http://www.google.org/flutrends)}.
The time-series represent the number of queries that contain the term ``flu'' for a set of 29 countries from both hemispheres. Such time-series have been shown to be correlated with seasonal flu pandemics \cite{Ginsberg:2009qv}.
More specifically, the data collected corresponds to all pandemic seasons in the period of 2004 to 2013.
The country network shown in Figure \ref{fig_example2} (a) was built by computing Pearson's correlation between the time-series of pairs of countries for each edge, similarly to what is commonly done in Neuroscience to compute functional Brain networks from fMRI time-series signals \cite{rubinov2010}.
The correlation weights can be directly interpreted as proximity weights and converted to distance weights via the isomorphism map $\varphi$ of eq. \ref{distance_measures4}.
Because of the isomorphism of theorem \ref{theorem11}, the $n$-diffusion analysis can be performed with either the $n$-power of the proximity graph, $P^n$ (Definition \ref{def2}) using algebraic structure $I=([0,1],DT_{\lor}^{1}, DT_{\land}^{1})$, or the $n$-power of the distance graph $P^n$ (Definition \ref{def4}) using algebraic structure $II=([0,+\infty], f(x,y)=\frac{xy}{x+y}, g(x,y)=x+y)$.

\begin{figure}[!th]
\centerline{\includegraphics[width=1\textwidth]{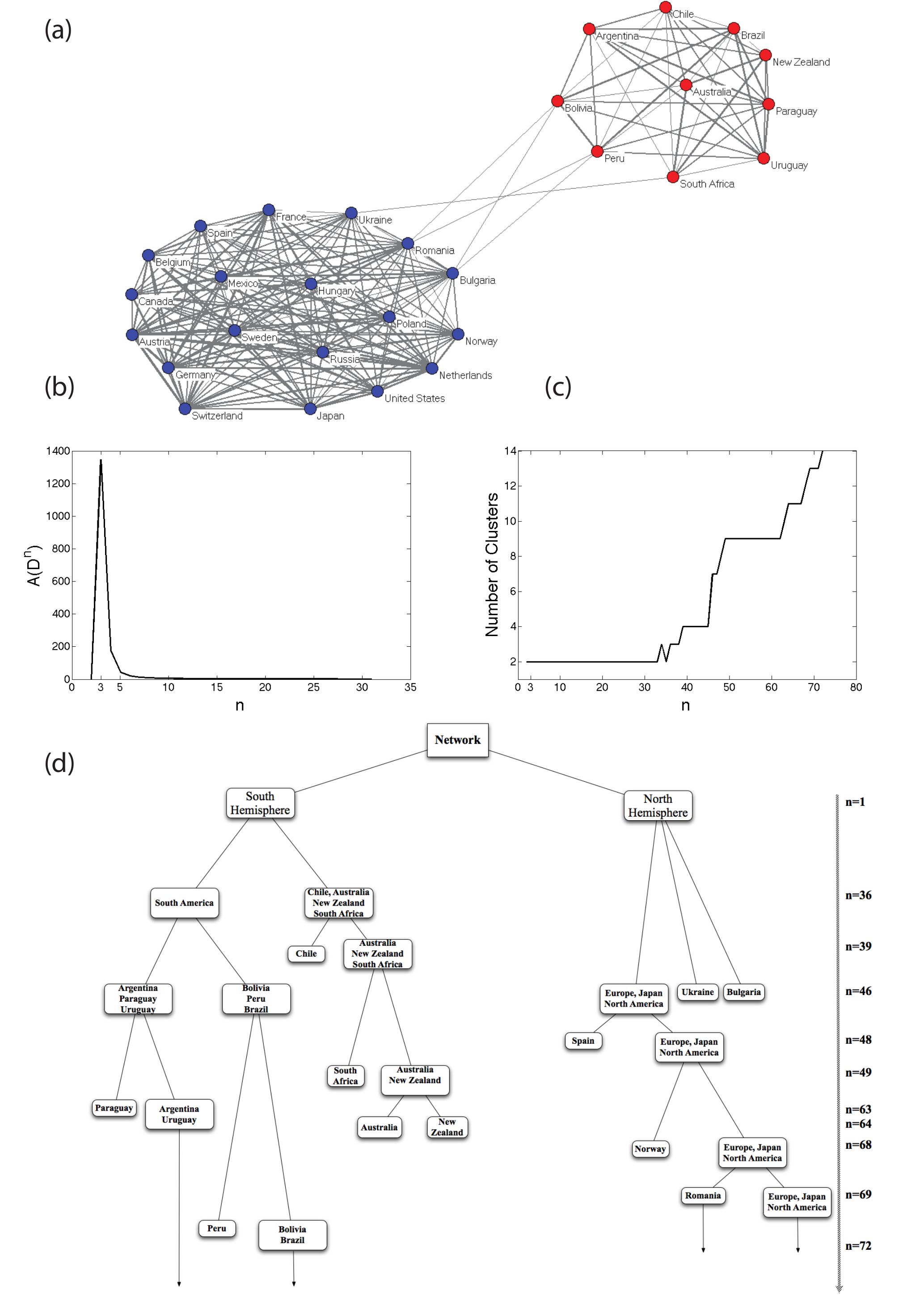}}
\caption{\small $n$-diffusion in flu co-activity network. (a) the 29 country network and its two main groups of countries dividing the north and south hemispheres. (b) Asymmetry as defined in equation \ref{asymmetry}. (c) Number of communities in each $D^n$, as $n$ increases.  (d) Hierarchy of communities as $n$ increases. See text for details.}
\label{fig_example2}
\end{figure}

The results of $n$-diffusion for this real network mirror what we observed for the Toy example.
Again, symmetry breaking occurs at $n=3$, decreasing rapidly after $n\geq5$, as seen in Figure \ref{fig_example2} (b).
In this case, the number of communities remains constant at 2, corresponding to the Northern and Southern hemisphere countries, for a long number of iterations until $n=36$, as seen in Figure \ref{fig_example2} (c) and (d).
This makes much sense because countries in each hemisphere are strongly correlated seasonally with one another.
As $n$ increases, the clusters break apart into constituent countries, peeling the bridges off first.
Indeed, the unfolding of communities in the $n$-diffusion analysis very much reflects the geographical and socio-economic nearness of the countries represented in the network.
This makes sense because we know that flu pandemics are in essence diffusive processes, whereby countries with greater geographical or socio-economic ties are more correlated. Our analysis nicely reproduces those ties.
For instance, notice how the southern hemisphere first breaks South America from the pacific nations, taking Chile (the southern Pacific nation from South America) first into that cluster. But being a bridge node, Chile is the first country node to be isolated into its own community (similarly to node 3 in the Toy example).
Later, south Africa is peeled off from the same South Pacific community, leaving Australia and New Zealand together.
Thus, like in the Toy example, we see bridge nodes and bridge communities in the graph breaking off first.

In Appendix D of supporting materials, the results of the analysis of this network using Newman's Fast algorithm \cite{NewmansFast_PhysRevE} and Hierarchical clustering \cite{day1984efficient} are depicted.
All methods extract a very similar community structure.
A more detailed analysis of $n$-diffusion in real-World networks such as the Flu co-activity network is beyond the scope of this paper and we leave it for forthcoming work.
However

\section{Conclusions}

We mapped and explored the isomorphism between distance and fuzzy (proximity or strength) graphs.
More specifically, we formalized the isomorphic constraints between transitive closure in fuzzy graphs and the distance closure in distance graphs.
In complex networks, the computation of path length and shortest paths is essential for structural analysis of graphs. However, given the isomorphism we explored, it is clear that there is an infinite number of ways to compute indirect distances on graphs, or distance closures, which are isomorphic to transitive closures in fuzzy graphs.
Therefore, the canonical shortest path (the metric closure typically computed via the APSP/Dijkstra algorithm) is just one way of looking at indirect associations in network data.
We have characterized the set of generalized metric closures, which includes all possible shortest path variations, where the length of each path can be computed in an infinite variety of ways---including the ultra-metric closure we also exemplified.

In addition to generalized shortest paths, there are many other ways to compute indirect distances or closures, leading to widely different properties useful for network science.
In particular, we identified a diffusion distance closure which is isomorphic to the transitive closure of fuzzy graphs based on the Dombi T-Conorm/T-Norm pair $\langle \lor, \land \rangle = \langle DT^{1}_{\land}$, $DT^{1}_{\lor} \rangle$.
While this distance closure, in the limit, is trivial, the intermediate steps towards closure, which we named $n$-diffusion, wre shown to be potentially useful for analysis of communities and diffusion processes on networks.
It also offers a simple algebraic means to compute diffusion processes on networks (via matrix products), rather than the traditional stochastic simulations commonly used in the literature.

Whereas the metric closure relates indirectly linked items via the length of the shortest path between them, the $n$-diffusion relates indirectly linked items via the harmonic mean of the length of all distinct paths between them. In other words, the number of available paths is a factor in discerning closeness, which we argued to be intuitively useful in network analysis.
Moreover, unlike the traditional shortest-path method (metric closure), it is based on desirable axioms for logical inference or approximate reasoning on networks.
This means that distance graphs (or their isomorphic fuzzy graphs) obtained from distinct data sources can be logically combined and manipulated with the diffusion distance T-Norm/T-Conorm pair, while obeying De Morgan's laws with any involutive complement.
In contrast, if we use the T-Norm/T-Conorm pair from the metric closure, no involutive complement exists that can satisfy De Morgan's laws, leading to information loss if we were to perform logical combination of graphs.
In summary, while the diffusion distance operators can be used for logical inference---a network approximate reasoning---the metric distance operators cannot.

The isomorphism allowed us to understand and relate other forms of closure, such as the ultra-metric closure and the infinite number of closures parameterized by the Dombi T-Norm/T-Conorm family. This further allowed us to propose a simple method to compute alternative distance closures using existing algorithms for the APSP, as well as estimate the optimal parameter ranges of the Dombi family to constrain variation of a graph's average shortest-path distribution. We showed that the metric closure assumes very small fluctuations in this distribution, therefore, when larger fluctuations exist we should consider alternative closures (namely those with higher values of the Dombi family $\lambda$ parameter).

Based on these results, we argue that different distance closures can lead to different conclusions about indirect associations in network data, as well as the (community) structure of complex networks. Therefore, our exploration of the isomorphism between fuzzy and distance graphs expands the toolbox available to understand complex networks.

\section*{Acknowledgements}

We would like to thank Bharat Dravid for help with the Mathematica calculations of section \ref{axiomatics}, and Joana Gon\c{c}alves S\'{a} for discussions pertaining to the Flu network used in section \ref{diffusion_examples}. We are also very thankful for the constructive comments we received from the reviewers of this article.
This work was partially supported by the Active Recommendation Project at the Los Alamos National Laboratory, National Science Foundation award "Dynamics of Information Flow and Decisions in Social Networks" BCS-0527249, and Funda\c{c}\~ao para a Ci\^{e}ncia e a Tecnologia, Portugal.


\bibliographystyle{apalike}
\bibliography{gbib}


\newpage


%


\newpage
\appendix
\section{Mathematical Background}
\label{appendix1}

\setcounter{definition}{0}
\setcounter{theorem}{0}
\setcounter{corollary}{0}

In this appendix we present definitions that will be useful for the understanding of the paper.

\subsection{A brief overview on Fuzzy Sets Theory}

First we introduce the definition of T-Norms and T-Conorms first introduced by Menger et al. in \cite{Menger,schweizer83}.

\begin{definition}[T-Norm]
\label{T-Norm}
A \textit{triangular norm (T-Norm for short)} is a binary operation $\land$ on the unit interval $[0,1]$, i.e., a function $\land:[0,1]^{2}\to[0,1]$, such that for all $x,y,z \in [0,1]$ the following four axioms are satisfied:

(T1) $x \land y=y \land x$.

(T2) $x \land (y \land z)= (x \land y) \land z$.

(T3) $x \land y \leq x \land z $ wherever $y \leq z$.

(T4) $x \land 1=x$.
\end{definition}

A \emph{T-Norm} is a generalisation of intersection in set theory and conjunction in logic. It was first defined in the context of probabilistic metric spaces \cite{schweizer83}.

\begin{definition}[T-Conorm]
\label{T-Conorm}
A \textit{triangular conorm (T-Conorm for short)} is a binary operation $\lor$ on the unit interval $[0,1]$, i.e., a function $\lor:[0,1]^{2}\to[0,1]$, such that for all $x,y,z \in [0,1]$, satisfies (T1)-(T3) and

(S4) $x \lor 0=x$.
\end{definition}

A \emph{T-Conorm} is a generalisation of union in set theory and disjunction in logic.

There is an innumerable number of T-Norms and T-Conorms. In the following examples \cite{klement} we present the four basic T-Norms and T-Conorms.
\\
\\
The following are the four basic T-Norms $\land_{M}, \land_{P}, \land_{L}$ and $\land_{D}$ given by, respectively:
\begin{example}[Basic T-Norms]
(1) $x \land_{M} y=min(x,y)$  (minimum),\\ \\
(2) $x \land_{P} y=x \cdot y$  (product),\\ \\
(3) $x \land_{L} y=max(x+y-1,0)$  (Lukasiewicz T-Norm),\\ \\
(4) $x \land_{D} y=\left\{\begin{array}{cc}
 0, & \mbox{if $(x,y) \in [0,1[^{2}$;} \\
 min(x,y), & \mbox{otherwise.} \\
\end{array} \right.$ (drastic product)
\end{example}

These T-Norms cover the range for T-Norms, from the strongest T-Norm $\land_{M}$ to the weakest T-Norm $\land_{D}$. There are other T-Norms, namely \emph{parametric T-Norms}, which range the spectrum of all possible T-Norms. Examples of these T-Norms are the \emph{Dombi} T-Norms.

\begin{definition}[Dombi T-Norm]
\label{Dombi-def}
The definition of Dombi \emph{T-Norm} is the following:

\[ DT^{\lambda}_{\wedge}(a,b)=\left\{ 1+\left[ \left( \frac{1}{a}-1 \right)^{\lambda} +\left( \frac{1}{b}-1 \right)^{\lambda}\right]^{\frac{1}{\lambda}} \right\}^{-1} \]

\noindent Where the parameter $\lambda \in \left]0,+\infty \right[$.

\end{definition}

\begin{example}[Basic T-Conorms]
The following are the four basic T-Conorms $\lor_{M}, \lor_{P}, \lor_{L}$ and $\lor_{D}$ given by, respectively:\\
(1) $x \lor_{M} y=max(x,y)$  (maximum), \\ \\
(2) $x \lor_{P} y=x+y-x \cdot y$  (probabilistic sum), \\ \\
(3) $x \lor_{L} y=min(x+y,1)$  (Lukasiewicz T-Conorm), \\ \\
(4) $x \lor_{D} y=\left\{\begin{array}{cc}
 1, & \mbox{if $(x,y) \in [0,1[^{2}$;} \\
 max(x,y), & \mbox{otherwise.} \\
\end{array} \right.$ (drastic sum)
\end{example}

These T-Conorms define the specific range of T-Conorms, from the strongest T-Conorm $\lor_{D}$ to the weakest T-Conorm $\lor_{M}$.

\begin{definition}[Dombi T-Conorm]
\label{Dombico-def}
The definition of Dombi \emph{T-Conorm} is the following:

\[ DT^{\lambda}_{\vee}(a,b)=\left\{ 1+\left[ \left( \frac{1}{a}-1 \right)^{\lambda} +\left( \frac{1}{b}-1 \right)^{\lambda}\right]^{-\frac{1}{\lambda}} \right\}^{-1} \]

\noindent Where the parameter $\lambda \in \left]0,+\infty \right[$.

\end{definition}

Now we are able to define the transitivity property of a fuzzy relation.

\begin{definition}[Transitivity]
\label{transitivity}
A fuzzy relation $R(X,X)$ is transitive if \[R(x,y) \geq \lor_z(R(x,z) \land R(z,y))\] is satisfied $\forall x,y,z \in X$.
\end{definition}

Definition ~\ref{transitivity} entails that transitivity depends on the pairs T-Conorm/T-Norm chosen.

\begin{definition}[Fuzzy Complement]
\label{complement}
A \textit{complement} c of a fuzzy set satisfies the following axioms:

(c1) $c(0)=1$  $c(1)=0$ (boundary conditions).

(c2) $\forall a,b \in [0,1]$ if $a \leq b$, then $c(a) \geq c(b)$ (monotonicity).

\end{definition}

The \emph{Complement} of a fuzzy set measures the degree to which a given element of the fuzzy set does not belong to the fuzzy set. Two most desirable requirements, which are usually among of fuzzy complements are:

\begin{definition}[Fuzzy Complement)(cont)]
\label{complement2}
A \textit{complement} c of a fuzzy set satisfies the following axioms:

(c3) $c$ is a continuous function.

(c4) $c$ is involutive, which means that $c(c(a))=a$ for each $a\in[0,1]$.

\end{definition}

In classical set theory, the operations of intersection and union are dual with respect to the complement in the sense that they satisfy the De Morgan laws. It is desirable that this duality be satisfied for fuzzy set as well. We say that a T-Norm $\land$ and a T-Conorm $\lor$ are dual with respect to a fuzzy complement $c$ if and only if \[c(a \land b)=c(a) \lor c(b)\] and \[c(a \lor b)=c(a) \land c(b).\]

Examples of dual T-Norms and T-Conorms with respect to the complement $c_{s}(a)=(1-a)^{s}$ are: \[<min(a,b),max(a,b),c_{s}>\] \[<DT^1(a,b),DT^1(a,b),c_{s}>\].
We can have weaker complements, which only obey to the first two axioms in definition \ref{complement} to allow other T-Norm and T-Conorm operators.

Next we follow with composition of fuzzy relations.

\begin{definition}[Relation Composition]
\label{composition}
Consider two binary fuzzy relations, $P(X,Z)$  and $Q(Z,Y)$ with a common set of $Z$. The standard composition of these relations, which is denoted by $P(X,Z) \circ Q(Z,Y)$ produces a binary fuzzy relation $R(X,Y)$ on $X \times Y$ defined by \[ R(X,Y)=[P \circ Q] = \lor_z(P(x,z) \land Q(z,y)),\]$ \forall x \in X$ and $\forall y \in Y$ and $\forall z \in Z$ .
\end{definition}

When the transitive closure $R^T(X, X)$ uses the T-Conorm $\vee = \texttt{maximum}$, with any T-Norm $\wedge$,
$\kappa$ in eq. \ref{TC1} is finite and not larger than $|X|-1$
\cite{Klir1995}.
In other words, the transitive closure converges in finite time and can be easily computed using Algorithm 1 \cite{Klir1995}:

\begin{Algorithm}
\label{tca2}
\begin{enumerate}
  \item $R' = R \cup (R \circ R)$
  \item If $R'\neq R$, make $R = R'$ and go back to step 1.
  \item  Stop: $R^T = R'$
\end{enumerate}
\end{Algorithm}

It has also been shown  that if the semiring formed by $\langle \lor, \land \rangle$ on the unit interval is a dioid or a bounded preordered lattice\cite{Gondran2007}, then $\kappa$ in eq. \ref{TC1} is also finite \cite{Han2004,Han2007} (see conditions below).
%
%
In this case, the transitive closure can be computed in finite time using Algorithm 2:

\begin{Algorithm}
\label{tca2_v2}
\begin{enumerate}
  \item $R' = R, R_p = R, p=1$
  \item $R_p = R \circ R_p, p=p+1$
  \item If $R' \neq (R' \bigcup R_p)$, make $R' = R' \bigcup R_p $ and go back to step 2.
  \item  Stop: $ R^T = R' $
\end{enumerate}
\end{Algorithm}


The union in step 1 must be in accordance with the T-Conorm defined in the relation composition. The resulting relation in step 3 is transitive with respect to the T-Norm, T-Conorm operations used. Moreover, given the last algorithm, a fuzzy graph is transitive if the algorithm stops at the first step.
A reflexive, symmetric and transitive fuzzy relation is denominated as a \textit{Similarity} or \textit{Equivalence} relation.

Next we give a more detailed description of T-Norms and T-Conorms.

The intersection of two fuzzy sets $A$ and $B$ is performed by a binary operation closed on the unit interval. There are an infinite number of T-Norms from definition \ref{T-Norm}. One important class is that of \emph{Archimedean T-Norms}, see \cite{Klir1995}. Before we introduce one of the fundamental theorems of T-Norms, which provides us a method for generating Archimedean T-Norms we introduce the following definitions:

\begin{definition}[Decreasing Generator]
\label{def-generator}
A decreasing generator $\varphi$ is a continuous decreasing function from the unit interval $[0,1]$ into the real extended interval $[-\infty,+\infty]$.
\end{definition}

\begin{definition}[Pseudo-Inverse of a decreasing generator]
 \label{def-pseudo-inverse}
The pseudo-inverse of a decreasing generator $\varphi$ is defined by
\[
\varphi^{(-1)}(a)= \left\{ \begin{array}{ccc}
1 & for & a \in (-\infty,0) \\
\varphi^{-1}(a) & for & a \in [0,\varphi(0)] \\
0 & for & a \in (\varphi(0),\infty) \end{array} \right.
\]

Where $\varphi^{-1}$ is the inverse function of $\varphi$.

\end{definition}

\begin{theorem}[Characterization Theorem of T-Norms]
\label{characterization1}
Let i be a binary operation closed on the unit interval. Then, i is an Archimidean T-Norm iff there exists a decreasing generator $\varphi$ such \[a \land b=\varphi^{(-1)}(\varphi(a)+\varphi(b))\] for all $a,b \in [0,1]$.
\end{theorem}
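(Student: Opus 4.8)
The plan is to prove the two implications of the equivalence separately, disposing of sufficiency by direct verification and handling necessity by a semigroup-representation argument.

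For sufficiency, I would assume a decreasing generator $\varphi$ (Definition \ref{def-generator}) is given, set $a \land b = \varphi^{(-1)}(\varphi(a) + \varphi(b))$ with $\varphi^{(-1)}$ as in Definition \ref{def-pseudo-inverse}, and check axioms (T1)--(T4) together with the Archimedean property. Commutativity (T1) is immediate from commutativity of $+$. The boundary condition (T4) reduces $a \land 1$ to $\varphi^{(-1)}(\varphi(a)+\varphi(1)) = a$, using the normalization $\varphi(1)=0$ carried by the generator and the fact that $\varphi^{(-1)}$ inverts $\varphi$ on $[0,\varphi(0)]$. Monotonicity (T3) follows because $\varphi$ and $\varphi^{(-1)}$ are monotone and $+$ preserves order. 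The one step needing genuine care is associativity (T2): here I would first establish the auxiliary identity $\varphi\big(\varphi^{(-1)}(u)\big) = \min(u,\varphi(0))$ for $u \ge 0$, which lets me rewrite $\varphi(a \land b)+\varphi(c)$ as $\min(\varphi(a)+\varphi(b)+\varphi(c),\varphi(0))$; symmetry of the triple sum then gives $(a \land b) \land c = a \land (b \land c)$. The Archimedean property drops out of the same identity, since for $a \in (0,1)$ we have $\varphi(a \land a) = \min(2\varphi(a),\varphi(0)) > \varphi(a)$, whence $a \land a < a$.

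For necessity, suppose $\land$ is a continuous Archimedean T-Norm. The task is to manufacture the generator. I would regard $([0,1],\land)$ as an abelian topological monoid with identity $1$ and annihilator $0$, the Archimedean hypothesis ruling out idempotents in the open interval $(0,1)$. Fixing a base point $a_0 \in (0,1)$, I would form its iterated $\land$-powers $a_0^{(n)}$, use the Archimedean property to show $a_0^{(n)} \to 0$, and combine this with the intermediate value theorem to extract ``roots'' of $a_0$, producing a dense dyadic scaffold in $[0,1]$. On this scaffold I would define $\varphi$ so that it turns $\land$ into addition---prescribing $\varphi(a_0^{(n)}) = n\,\varphi(a_0)$ and interpolating at the dyadic roots---verify the additive Cauchy relation $\varphi(x \land y) = \varphi(x)+\varphi(y)$ on the scaffold, and finally extend $\varphi$ to all of $[0,1]$ by monotonicity and continuity. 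The resulting $\varphi$ is a decreasing generator realising $a \land b = \varphi^{(-1)}(\varphi(a)+\varphi(b))$.

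I expect the necessity direction to be the main obstacle: the existence and continuity of the additive generator is precisely the representation theorem for continuous Archimedean (one-parameter) semigroups on $[0,1]$---essentially Aczél's solution of the associativity equation, or the Mostert--Shields structure theorem. The delicate points are proving that the dyadic scaffold is dense (which uses continuity together with the Archimedean property to exclude idempotent ``barriers'') and that the interpolating $\varphi$ is well defined and continuous in the limit. Since this classical representation is established in \cite{Klir1995}, I would cite it for the semigroup step and keep the self-contained work concentrated on the sufficiency direction and the pseudo-inverse identity above.
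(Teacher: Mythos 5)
Your proposal is sound, but note that the paper does not actually prove this theorem: immediately after stating it, the authors write that the proof is provided in \cite{Klir1995}, and the theorem is treated purely as imported background (it is the classical Ling/Acz\'el representation theorem for continuous Archimedean t-norms). So there is no in-paper argument to compare against; the relevant comparison is with the cited source, whose proof your sketch faithfully outlines. Your sufficiency direction is essentially complete and correct: the clamping identity $\varphi\bigl(\varphi^{(-1)}(u)\bigr)=\min\bigl(u,\varphi(0)\bigr)$ is exactly the right lemma for associativity, and the computation $\varphi(a\land a)=\min\bigl(2\varphi(a),\varphi(0)\bigr)>\varphi(a)$ correctly yields $a\land a<a$. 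Two small points you should make explicit there: the generator must be \emph{strictly} decreasing with $\varphi(1)=0$ (the paper's Definition of a decreasing generator omits strictness, but $\varphi^{-1}$ does not exist without it), and you should record that continuity of $\land$ follows from continuity of $\varphi$ and $\varphi^{(-1)}$, since $a\land a<a$ is equivalent to the Archimedean property only for continuous t-norms. Your necessity direction correctly identifies the hard content---the construction of the additive generator from iterated powers and dyadic roots, with the Archimedean hypothesis excluding interior idempotents---and your decision to cite \cite{Klir1995} for that semigroup-representation step is precisely what the paper itself does for the entire theorem, so nothing is lost relative to the source.
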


\noindent With this last theorem we can generate an infinite class of T-Norms. Among many decreasing generators is the \emph{Dombi T-Norm generator}, (see definition \ref{Dombi-def}): \[ \varphi(x)=\left(\frac{1-x}{x}\right)^{\lambda}\] Parameter $\lambda$ allow us to obtain the range from the $\land_{D}$ T-Norm $(\lambda \to 0)$ to the $\land_{M}$ T-Norm $(\lambda \to+\infty)$. For many other decreasing generators, see \cite{klement}.

Set unions are generalized by the T-Conorms in definition \ref{T-Conorm}. There are an infinite number of T-Conorms and ways to generate new T-Conorms. One important class of T-Conorms is the \emph{Archimedean T-Conorms}, see \cite{Klir1995}.

\begin{definition}[Increasing Generator]
 \label{def-generator2}
A increasing generator $\theta$ is a continuous increasing function from the unit interval $[0,1]$ into the real extended interval $[-\infty,+\infty]$.
\end{definition}

\begin{definition}[Pseudo-Inverse of a increasing generator]
 \label{def-pseudo-inverse_inc}
The pseudo-inverse of a increasing generator $\theta$ is defined by
\[
\theta^{(-1)}(a)= \left\{ \begin{array}{ccc}
0 & for & a \in (-\infty,0) \\
\theta^{-1}(a) & for & a \in [0,\theta(0)] \\
1 & for & a \in (\theta(0),\infty) \end{array} \right.
\]

Where $\theta^{-1}$ is the inverse function of $\theta$.
\end{definition}

\begin{theorem}[Characterization Theorem of T-Conorms]
\label{characterization1_TCO}
Let u be a binary operation closed on the unit interval. Then, u is an Archimidean T-Conorm iff there exists an increasing generator $\theta$ such \[a \lor b=\theta^{(-1)}(\theta(a)+\theta(b))\] for all $a,b \in [0,1]$.
\end{theorem}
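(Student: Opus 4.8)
The plan is to prove this theorem by \emph{dualizing} the Characterization Theorem of T-Norms (Theorem \ref{characterization1}), exploiting the fact that every T-Conorm is the De Morgan dual of a T-Norm with respect to the standard complement $c(x)=1-x$. Concretely, given a binary operation $u=\lor$ closed on $[0,1]$, I would define its dual $\land$ by $a \land b := 1-\big((1-a)\lor(1-b)\big)$ and show that $\lor$ is an Archimedean T-Conorm if and only if $\land$ is an Archimedean T-Norm. The axioms (T1)--(T3) transfer immediately because $c$ is an order-reversing involution, the boundary axiom (S4) $x\lor 0=x$ corresponds exactly to (T4) $x\land 1=x$, and the Archimedean condition for conorms (continuity together with $x\lor x>x$ on $(0,1)$) corresponds under $c$ to $x\land x<x$ on $(0,1)$.

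Having reduced to the T-Norm case, I would invoke Theorem \ref{characterization1}: the dual $\land$ is Archimedean iff there is a decreasing generator $\varphi$ (Definition \ref{def-generator}) with $a\land b=\varphi^{(-1)}(\varphi(a)+\varphi(b))$, where $\varphi^{(-1)}$ is the pseudo-inverse of Definition \ref{def-pseudo-inverse}. The key construction is then to set
\[
\theta(x) := \varphi(1-x) = \varphi(c(x)),
\]
and verify that $\theta$ is an increasing generator in the sense of Definition \ref{def-generator2}: it is continuous, and since $\varphi$ is decreasing while $c$ is decreasing, the composition is increasing; moreover $\theta(0)=\varphi(1)=0$ and $\theta(1)=\varphi(0)$, so the range of $\theta$ coincides with that of $\varphi$.

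The heart of the argument is to push the generator identity through the complement. First I would establish the pseudo-inverse relation $\theta^{(-1)}(y)=1-\varphi^{(-1)}(y)$ by checking the three pieces of Definition \ref{def-pseudo-inverse_inc} against Definition \ref{def-pseudo-inverse}, using $\theta^{-1}(y)=1-\varphi^{-1}(y)$ on the common domain $[0,\varphi(0)]=[0,\theta(1)]$ and matching the clamped tail values $0$ and $1$. The chain of identities
\[
a\lor b = 1-\big((1-a)\land(1-b)\big) = 1-\varphi^{(-1)}\big(\varphi(1-a)+\varphi(1-b)\big) = \theta^{(-1)}\big(\theta(a)+\theta(b)\big)
\]
then delivers the desired representation, and reading it in reverse (define $\land$ from a given increasing generator via $\varphi(x)=\theta(1-x)$, apply Theorem \ref{characterization1}, and dualize back) gives the converse direction.

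The main obstacle I anticipate is the careful bookkeeping at the boundaries of the piecewise pseudo-inverses, since $\varphi(0)$ may be finite or $+\infty$ and the clamping clauses must be matched exactly. In particular, one must confirm that the upper breakpoint of $\theta^{(-1)}$ is $\theta(1)=\varphi(0)$, so that the tail values $0$ and $1$ swap correctly under the map $y\mapsto 1-\varphi^{(-1)}(y)$ (the increasing-generator pseudo-inverse of Definition \ref{def-pseudo-inverse_inc} should be read with this upper limit $\theta(1)$). Once this boundary matching is verified, the functional-equation manipulation and the transfer of the algebraic axioms are routine.
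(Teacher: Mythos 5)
Your argument is correct, but note that the paper itself supplies no proof of this statement: the Characterization Theorem of T-Conorms is presented as background in Appendix A, and its proof (like that of the Characterization Theorem of T-Norms) is deferred to Klir and Yuan. The duality route you take --- transporting the problem through the standard complement $c(x)=1-x$, applying the T-Norm characterization to the dual operation $a\land b=1-\bigl((1-a)\lor(1-b)\bigr)$, and converting the decreasing generator $\varphi$ into the increasing generator $\theta(x)=\varphi(1-x)$ --- is the standard derivation of this result from the T-Norm version, and your chain of identities $a\lor b = 1-\varphi^{(-1)}\bigl(\varphi(1-a)+\varphi(1-b)\bigr)=\theta^{(-1)}\bigl(\theta(a)+\theta(b)\bigr)$ is valid once $\theta^{(-1)}(y)=1-\varphi^{(-1)}(y)$ is checked piecewise. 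Two remarks on the details. First, you are right that the paper's definition of the pseudo-inverse of an increasing generator has its upper breakpoint miswritten as $\theta(0)$; it must be $\theta(1)$, and with that correction the three pieces do match, since $\theta(1)=\varphi(0)$. Second, the paper's definitions of decreasing and increasing generators omit the strict monotonicity and the normalizations $\varphi(1)=0$ and $\theta(0)=0$ that are needed for $\varphi^{-1}$ and $\theta^{-1}$ to exist and for your boundary bookkeeping (in particular the assertion $\theta(0)=\varphi(1)=0$) to go through; you use both implicitly, so you should state that you are working with the standard Klir--Yuan form of these definitions. With those caveats made explicit, the reduction is complete and sound in both directions.
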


\noindent With this last theorem we can generate an infinite class of T-Conorms. Among many increasing generators is the Dombi T-Conorm generator: \[ \theta(x)=\left(\frac{x}{1-x}\right)^{\lambda}\] Parameter $\lambda$ allow us to obtain the range from the $\lor_{M}$ T-Conorm $(\lambda \to0)$ to $lor_{D}$ T-Conorm $(\lambda \to+\infty)$. For many other decreasing generators the reader can see, \cite{klement}.


\subsection{Algebraic Structures Basics}
Here we present the basic definitions on algebraic structures used in this work.

The whole class of semirings splits into main disjoint subclasses: (a) rings and (b) canonical ordered semirings or \emph{dioids}.
On the following, we consider algebraic structures consisting of a basic set $E$, with two internal operations $\oplus$ and $\otimes$. All these definitions can be found on \cite{Gondran2007}.

\begin{definition}[Semi-Ring]
\label{semiring}
Let consider the following algebraic structure $L=$($E$, $\oplus$, $\otimes$). $L$ is called a \emph{semiring} if the following properties hold:\\ \\
(i) ($E$, $\oplus$) is a commutative monoid with zero element $\epsilon$,\\
(ii) ($E$, $\otimes$) is a monoid with unit element $e$,\\
(iii) $\otimes$ is right and left distributive with respect to $\oplus$,\\
(iv) $\epsilon$ is absorbing, i.e. $\epsilon \otimes a = a \otimes \epsilon = \epsilon$, $\forall a \in E$.
\end{definition}

\begin{definition}[Canonical Order]
\label{canonicalorder}
$L=$($E$, $\oplus$) being a monoid, the binary relation $\leq$ on $E$ is defined as: $a\leq b$ iff $\exists c \in E$ such that $b=a\oplus c$, is a \emph{preorder relation} (reflexive and transitive) called the \emph{canonical preorder}. A monoid is called \emph{canonically ordered} iff the canonical preorder is order, or equivalently iff $\leq$ is antisymmetric ($a\leq b$ and $b\leq a$ $\Rightarrow$ $a=b$).
\end{definition}

\begin{definition}[Dioid]
\label{dioid}
A semiring ($E$, $\oplus$, $\otimes$) such that ($E$, $\oplus$) is canonically orderd is called a \emph{dioid}.
\end{definition}

The algebraic structure $I=$($[0, 1]$, $\lor$, $\land$), where $\lor$ and $\land$ are general T-Conorm/T-Norm, respectively, are not in general a dioid, since they fail property $(iii)$ (distributivity) of definition \ref{semiring} (semiring). However, there are subclasses of the algebraic structure $I=$($[0, 1]$, $\lor$, $\land$), which are dioids.

For more details about algebraic structures see for example \cite{Gondran2007, Han2004} or any book about Abstract Algebra.

\newpage
\section{Proofs to the theorems}
\label{appendix2}

In this section we provide the proofs to the theorems in the main text.

\setcounter{theorem}{0}
\setcounter{corollary}{0}

\begin{theorem}
\label{theorem1_proof}
 Let $G_{P}=(X,P)$ be a proximity (symmetric and reflexive) graph and $\Phi$ the graph distance function in definition ~\ref{def1}, then $G_{D}=(X,D)$, where $D=\Phi(P)$, is symmetric and anti-reflexive.
\end{theorem}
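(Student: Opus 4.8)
The plan is to verify directly the two defining properties of a distance graph stated in the theorem---symmetry ($d_{ij}=d_{ji}$) and anti-reflexivity ($d_{ii}=0$)---by exploiting the fact that $\Phi$ acts entrywise through the scalar map $\varphi$ (item (3) of Definition~\ref{def1}). Both properties of $D$ will then follow immediately from the corresponding properties of $P$, together with the boundary condition $\varphi(1)=0$.

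First I would establish symmetry. Since $\Phi(P)=[\varphi(p_{ij})]$, each entry of $D$ satisfies $d_{ij}=\varphi(p_{ij})$. Because $G_P$ is a proximity graph it is symmetric, so $p_{ij}=p_{ji}$ for all $x_i,x_j\in X$; applying $\varphi$ to both sides yields $d_{ij}=\varphi(p_{ij})=\varphi(p_{ji})=d_{ji}$, which is exactly the symmetry of $D$. Note that this step uses only the entrywise structure of $\Phi$ and the symmetry of $P$; no further property of $\varphi$ is required.

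Next I would establish anti-reflexivity. For any diagonal entry I again use the entrywise action of $\Phi$ to write $d_{ii}=\varphi(p_{ii})$. Reflexivity of $G_P$ gives $p_{ii}=1$, and item (2) of Definition~\ref{def1} fixes $\varphi(1)=0$; hence $d_{ii}=\varphi(1)=0$ for every $x_i\in X$, which is the anti-reflexivity of $D$.

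There is no substantive obstacle here: the statement is essentially a restatement of the defining boundary and symmetry conditions of the map $\Phi$. The only point requiring care is to invoke the correct clauses of Definition~\ref{def1}---symmetry relies solely on the entrywise definition of $\Phi$, whereas anti-reflexivity relies specifically on the boundary value $\varphi(1)=0$. Strict monotonicity (item (1)) is not needed for this result; it becomes relevant only afterwards, when one argues that $\varphi$ is bijective and hence that $\Phi$ is a genuine isomorphism.
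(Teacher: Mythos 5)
Your proof is correct and follows essentially the same route as the paper's: anti-reflexivity from $p_{ii}=1$ and $\varphi(1)=0$, and symmetry from $p_{ij}=p_{ji}$ by applying $\varphi$ entrywise. Your observation that symmetry needs only that $\varphi$ is a well-defined function (the paper invokes bijectivity there, which is unnecessary) is a minor but accurate refinement.
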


\begin{proof}
Since $G_{P}$ is reflexive then $p_{x,x}=1$ and from definition ~\ref{def1} we have $d_{x,x}=\varphi(p_{x,x})=\varphi(1)=0$, therefore $G_{D}$ is anti-reflexive. Let $x$ and $y$ be two vertices of $G_{P}$, because a proximity graph is symmetric we have $p_{x,y}=p_{y,x}$, since $\varphi$ is bijective $d_{x,y}=\varphi(p_{x,y})=\varphi(p_{y,x})=d_{y,x}$, therefore $G_{D}$ is symmetric.
\end{proof}

\begin{theorem}
\label{theorem2_proof}
 If $\varphi$ is a distance function as in definition ~\ref{def1}. For every pair of T-Norm/T-Conorm operations $\langle \land,\lor \rangle$, there exists a pair of operations  $\langle f, g \rangle$ a TD-conorm/TD-norm (definition \ref{def4a}) and vice versa, obtained via the following constraints:\\
\\(1) $\varphi(a \land b) = g(\varphi(a),\varphi(b))$;
\\(2) $\varphi(a \lor b)= f(\varphi(a),\varphi(b))$.
\\
\\ Where $a,b \in [0,1]$.
\end{theorem}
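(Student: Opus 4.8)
The plan is to exhibit the correspondence explicitly by conjugating the logical operations through $\varphi$, and then to check that each of the four axioms of Definition~\ref{def4a} survives the conjugation. Given a T-Norm/T-Conorm pair $\langle \land, \lor \rangle$, I would define
\[
g(x,y) = \varphi\bigl(\varphi^{-1}(x) \land \varphi^{-1}(y)\bigr), \qquad f(x,y) = \varphi\bigl(\varphi^{-1}(x) \lor \varphi^{-1}(y)\bigr),
\]
which is precisely what constraints (1) and (2) dictate once we substitute $x = \varphi(a)$ and $y = \varphi(b)$. Since $\varphi$ is a strictly decreasing bijection (Definition~\ref{def1}), $\varphi^{-1}:[0,+\infty]\to[0,1]$ exists, is strictly decreasing, and satisfies $\varphi^{-1}(0)=1$ and $\varphi^{-1}(\infty)=0$; this makes $f$ and $g$ well-defined on all of $[0,+\infty]^2$, including the endpoints.

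Next I would verify the four axioms for $g$ (the argument for $f$ being identical with $\lor$ in place of $\land$). Commutativity is immediate from that of $\land$. For associativity the key observation is that $\varphi^{-1}(g(y,z)) = \varphi^{-1}(y) \land \varphi^{-1}(z)$, which collapses the nested expression $g(x,g(y,z))$ to $\varphi\bigl(\varphi^{-1}(x) \land (\varphi^{-1}(y) \land \varphi^{-1}(z))\bigr)$, whereupon associativity of $\land$ finishes it. The one step requiring care is monotonicity, because $\varphi$ reverses order and the reversal must be tracked twice: if $y \le z$ then $\varphi^{-1}(y) \ge \varphi^{-1}(z)$, T-Norm monotonicity gives $\varphi^{-1}(x) \land \varphi^{-1}(z) \le \varphi^{-1}(x) \land \varphi^{-1}(y)$, and applying the decreasing map $\varphi$ flips the inequality once more to yield $g(x,y) \le g(x,z)$, exactly the direction required.

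For the boundary conditions I would use the identity elements together with the endpoint values of $\varphi^{-1}$: since $1$ is the identity of the T-Norm and $\varphi^{-1}(0)=1$,
\[
g(x,0) = \varphi\bigl(\varphi^{-1}(x) \land 1\bigr) = \varphi\bigl(\varphi^{-1}(x)\bigr) = x,
\]
and dually, since $0$ is the identity of the T-Conorm and $\varphi^{-1}(\infty)=0$, we get $f(x,\infty) = \varphi(\varphi^{-1}(x) \lor 0) = x$. Thus $g$ is a TD-norm and $f$ a TD-conorm.

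For the converse direction I would define $a \land b = \varphi^{-1}(g(\varphi(a),\varphi(b)))$ and $a \lor b = \varphi^{-1}(f(\varphi(a),\varphi(b)))$ and run the mirror-image verification, the identities now following from $g(\,\cdot\,,0) = \mathrm{id}$ and $f(\,\cdot\,,\infty) = \mathrm{id}$ combined with $\varphi(1)=0$ and $\varphi(0)=\infty$. Finally I would observe that the two constructions are mutually inverse (substituting one definition into the other returns the original operation, since $\varphi \circ \varphi^{-1} = \mathrm{id}$), which promotes the bare existence statements into a genuine bijection and justifies the ``and vice versa''. I expect the only real obstacle to be the order-reversal bookkeeping in the monotonicity step and the careful treatment of the extended endpoints $0$ and $\infty$; everything else is a routine transfer of the monoid identities through conjugation by $\varphi$.
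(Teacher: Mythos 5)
Your proposal is correct, but it takes a genuinely different route from the paper. You construct $f$ and $g$ explicitly by conjugation, $g(x,y)=\varphi\bigl(\varphi^{-1}(x)\land\varphi^{-1}(y)\bigr)$ and $f(x,y)=\varphi\bigl(\varphi^{-1}(x)\lor\varphi^{-1}(y)\bigr)$, and then verify each of the four axioms of Definition~\ref{def4a} directly, with the boundary conditions falling out of $\varphi^{-1}(0)=1$, $\varphi^{-1}(\infty)=0$ and the identity elements of the T-Norm and T-Conorm, and with the order reversal of $\varphi$ tracked twice in the monotonicity step. The paper instead argues by contradiction: it supposes $\varphi(a\land b)>g(\varphi(a),\varphi(b))$ (and then $<$), and rules each case out by squeezing $\varphi(a\land b)$ between the images of the extremal T-Norms (drastic product and minimum) and bounding $g$ via its boundary and monotonicity properties. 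That argument treats $g$ as an operation already assumed to satisfy the TD-norm axioms and only establishes that the constraint equality cannot fail; it never exhibits $f$ and $g$ nor checks their commutativity or associativity, so the existence claim is really only completed by a construction like yours. Your conjugation argument supplies exactly that missing verification, is symmetric enough to dispose of the converse direction for free, and your closing observation that the two constructions are mutually inverse is what genuinely justifies the ``and vice versa'' in the statement. In short, your route is the more complete and more standard transport-of-structure proof; the paper's route is shorter on the page but leaves the axiom-checking implicit.
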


\begin{proof}
Let us assume $a\leq b$.
\\(1) Suppose $\varphi(a \land b) > g(\varphi(a),\varphi(b))$, thus the inequality is true if the maxima of $\varphi(a \land b)$ (must be maximum) is bigger than the minimum of $g(\varphi(a),\varphi(b))$ (must be minimum). $\varphi(a \land b)$ is maximum for $\land \equiv T_{D}$ (drastic product, see \cite{klement} \cite{Klir1995}) and $g(\varphi(a),\varphi(b))$ is minimum for $\varphi(b)=0$, thus for $\varphi(b)=0$ we obtain, $\varphi(a)\leq g(\varphi(a),\varphi(b))$, from the other side $\varphi(a \land b)\leq \varphi(min(a,b))=\varphi(a)$. Therefore, $\varphi(a \land b) \leq g(\varphi(a),\varphi(b))$.

Suppose $\varphi(a \land b) < g(\varphi(a),\varphi(b))$, thus $\varphi(a \land b)$ must be minimum and $g(\varphi(a),\varphi(b))$ must be maximum. $\varphi(a \land b)$ is minimum for $\land \equiv min$ and $g(\varphi(a),\varphi(b))$ is maximum for $a=0$, thus for $a=0$ we obtain, $g(\varphi(a),\varphi(b)) \leq \varphi(a)$, from the other side $\varphi(a \land b)\geq \varphi(min(a,b))=\varphi(a)$. Therefore, $\varphi(a \land b) \geq g(\varphi(a),\varphi(b))$, and from above this implies $\varphi(a \land b) = g(\varphi(a),\varphi(b))$, which proves statement (1).
\\(2) Suppose $\varphi(a \lor b) > f(\varphi(a),\varphi(b))$, thus $\varphi(a \lor b)$ must be maximum and $f(\varphi(a),\varphi(b))$ must be minimum. $\varphi(a \lor b)$ is maximum for $\lor \equiv max$ and $f(\varphi(a),\varphi(b))$ is minimum for $\varphi(a)=0$, thus for $\varphi(a)=0$ we obtain, $f(\varphi(a),\varphi(b))\geq 0$, from the other side $\varphi(a \lor b)\leq \varphi(max(1,b))=\varphi(1)=0$. Therefore, $\varphi(a \lor b) \leq f(\varphi(a),\varphi(b))$.

Suppose $\varphi(a \lor b) < f(\varphi(a),\varphi(b))$, thus $\varphi(a \lor b)$ must be minimum and $f(\varphi(a),\varphi(b))$ must be maximum. $\varphi(a \lor b)$ is minimum for $\lor \equiv S_{D}$ (drastic sum, see \cite{klement} \cite{Klir1995}) and $f(\varphi(a),\varphi(b))$ is maximum for $b=0$, thus for $b=0$ we obtain, $f(\varphi(a),\varphi(b)) \leq \varphi(a)$, from the other side $\varphi(a \lor b)\geq \varphi(max(a,b))=\varphi(a)$. Therefore, $\varphi(a \lor b) \geq f(\varphi(a),\varphi(b))$, and from above this implies $\varphi(a \lor b) = f(\varphi(a),\varphi(b))$, which proves statement (2).
\end{proof}

\setcounter{theorem}{2}

\vspace{10mm}

\begin{theorem}
\label{theorem5_proof}
If $G_{P}=(X,P)$ is a fuzzy proximity graph and $G_{D}=(X,D)$ is the distance graph obtained from $G_P$ via $D=\Phi(P)$, where $\Phi$ is the isomorphism (distance function) in definition ~\ref{def1}, then the following statements are true:

1) $\Phi (P)\dot \supseteq \Phi (P^{2} )\dot \supseteq \Phi (P^{3} )\dot \supseteq \cdots \supseteq \Phi (P^{\infty } )$ ;

2) $D\dot \supseteq D^{2} \dot \supseteq D^{3} \dot \supseteq \cdots \dot \supseteq D^{\infty } $.
\end{theorem}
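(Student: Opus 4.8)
The plan is to establish statement (1) directly in proximity space, where the combinatorial content lives, and then push it through the isomorphism to obtain statement (2) almost for free. The whole argument rests on a single monotonicity claim: the sequence of proximity powers is pointwise \emph{non-decreasing}, i.e. $p^{n+1}_{ij} \geq p^n_{ij}$ for all $x_i,x_j \in X$ and all $n$.

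First I would prove that inflationary bound. Writing $P^{n+1} = P \circ P^n$ as in Definition \ref{def2}, we have $p^{n+1}_{ij} = \bigvee_k \bigwedge(p_{ik}, p^n_{kj})$. Because $G_P$ is reflexive, $p_{ii}=1$, and since $1$ is the identity element of any T-Norm, the $k=i$ term collapses to $\land(1, p^n_{ij}) = p^n_{ij}$. The T-Conorm has identity $0$ and is monotone, hence inflationary ($a \lor b \geq a \lor 0 = a$), so the full $k$-indexed disjunction dominates any single summand, in particular the $k=i$ term. This yields $p^{n+1}_{ij} \geq p^n_{ij}$; the base case $p^2_{ij} \geq p^1_{ij}=p_{ij}$ is the same computation with $P^1 = P$.

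Next I would apply the map $\varphi$. Since $\varphi$ is strictly monotonic decreasing (Definition \ref{def1}), the chain $p^n_{ij} \leq p^{n+1}_{ij}$ reverses to $\varphi(p^n_{ij}) \geq \varphi(p^{n+1}_{ij})$, which is precisely $\Phi(P^n)\dot\supseteq \Phi(P^{n+1})$, giving statement (1). The terminal comparison with $\Phi(P^\infty)$ follows by a limit argument that needs only monotonicity, not continuity: the sequence $p^n_{ij}\in[0,1]$ is bounded and non-decreasing, so it converges to some $p^\infty_{ij}$ with $p^n_{ij}\leq p^\infty_{ij}$, whence $\varphi(p^n_{ij})\geq\varphi(p^\infty_{ij})$. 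To obtain statement (2) I would invoke Theorem \ref{theorem2}: using $d_{ij}=\varphi(p_{ij})$ together with the commutation relations $\varphi(a\land b)=g(\varphi(a),\varphi(b))$ and $\varphi(a\lor b)=f(\varphi(a),\varphi(b))$ and the associativity of $f,g$, a routine induction on $n$ shows $d^n_{ij}=\varphi(p^n_{ij})$, i.e. $D^n=\Phi(P^n)$. Statement (2) is then a verbatim restatement of statement (1), since $d^n_{ij}=\varphi(p^n_{ij})\geq\varphi(p^{n+1}_{ij})=d^{n+1}_{ij}$.

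I expect the genuine content to be confined entirely to the first step — the non-decreasing property of proximity powers — and within that, to the correct use of \emph{reflexivity} ($p_{ii}=1$) in tandem with the boundary/identity properties of the T-Norm and T-Conorm. The one point to state carefully is that the multi-way disjunction $\bigvee_k$ inherits the inflationary bound from the binary T-Conorm, so that it dominates the self-term $p^n_{ij}$. Everything downstream (reversing the inequality under $\varphi$, the limit at $\infty$, and the transfer to distance space) is mechanical.
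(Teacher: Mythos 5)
Your proof of statement (1) is essentially the paper's: the paper simply cites Mordeson for the fact that reflexivity makes the powers pointwise non-decreasing ($P\subseteq P^2\subseteq\cdots$) and then reverses the inequalities through the decreasing map $\varphi$; you supply the standard argument behind that citation (the $k=i$ term of $\bigvee_k\bigwedge(p_{ik},p^n_{kj})$ collapses to $p^n_{ij}$ via $a\land 1=a$, and the multi-way disjunction dominates each summand via $a\lor b\ge a\lor 0=a$ and associativity). That part is correct and complete.

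Where you genuinely diverge is statement (2). You derive it from (1) by first establishing $D^n=\Phi(P^n)$ through the commutation relations of Theorem \ref{theorem2} and an induction on $n$. The paper instead proves (2) \emph{directly in distance space}, by the exact dual of the argument you gave for (1): anti-reflexivity gives $d_{yy}=0$, the boundary condition $g(a,0)=a$ makes the $z=y$ term of $\mathop{f}\limits_{z}\{g(d_{xz},d_{zy})\}$ collapse to $d_{xy}$, and monotonicity together with $f(a,\infty)=a$ makes the multi-way $f$ deflationary, so $d^2_{xy}\le d_{xy}$; the general step follows by induction. This difference matters: Definition \ref{def4} defines $D^n$ with respect to an arbitrary TD-conorm/TD-norm pair $\langle f,g\rangle$, not necessarily the one induced from $\langle\lor,\land\rangle$ by $\varphi$. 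The paper's direct proof of (2) therefore covers any admissible $\langle f,g\rangle$, whereas your identity $D^n=\Phi(P^n)$ holds only when $\langle f,g\rangle$ is precisely the $\varphi$-image of the proximity pair (the content of Theorem \ref{theorem11}, which the paper establishes only after this theorem). Your reading is defensible given that the theorem is phrased in terms of the isomorphism, but you should either state that restriction explicitly or, better, note that the dual distance-space argument (anti-reflexivity plus the boundary axioms of Definition \ref{def4a}) gives (2) unconditionally and with no forward reference. Your limit argument for the $\Phi(P^\infty)$ term, using only monotonicity of $\varphi$, is fine.
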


\noindent where $\Phi(P^{n})\dot \supseteq \Phi(P^{n+1})$ means that: $\forall x_i,x_j \in X: \varphi(p^{n}_{ij}) \geq \varphi(p^{n+1}_{ij})$, and $D^{n} \dot \supseteq D^{n+1}$ means that: $\forall x_i,x_j \in X: d^{n}_{ij} \geq d^{n+1}_{ij}$.

\begin{proof}
1)  $\varphi $  is a monotonic decreasing function and because P is reflexive, from \cite{Mordeson2000} we have  $P\subseteq P^{2} \subseteq P^{3} \subseteq \cdots \subseteq P^{\infty } \Rightarrow \Phi (P)\dot \supseteq \Phi (P^{2} )\dot \supseteq \Phi (P^{3} )\dot \supseteq \cdots \dot \supseteq \Phi (P^{\infty } )$  which proves the statement.\\
2) To prove the second statement we first need to prove that $D\dot \supseteq D^{2}$, which is equivalent to showing that, $\forall x,y,z\in X: d^{2}_{x,y}=\mathop{f}\limits_{z}\{ g(d_{x,z},d_{z,y})\} \leq d_{x,y}$.
Lets prove by absurd this statement: suppose $d_{x,y}^2 > d_{x,y}$ then the minimum of $\mathop{f}\limits_{z}\{ g(d_{x,z},d_{z,y})\}$ must be $> d_{x,y}$. $\mathop{f}\limits_{z}\{ g(d_{x,z},d_{z,y})\}$ is minimum if $f$ and $g$ are minimum. $g$ is minimum if $d_{z,y}=0$ for all $z\in X-\{x\}$, then $g(d_{x,z},d_{z,y})\geq d_{x,z}$. $f$ is minimum if $d_{x,z}\geq d_{x,y}$ for all $z\in X-\{y\}$ then $f(d_{x,y},d_{x,z})\leq f(d_{x,y},+\infty)\leq d_{x,y}$, which contradicts our assumption, $d_{,x,y}^2 > d_{x,y}$. Therefore, $d_{x,y}^2 \leq d_{x,y}$.

By induction we can prove the general result.

 $\forall x,y,z\in X: d_{x,y}^{n+1}=\mathop{f}\limits_{z}\{g(d^{n} _{x,z},d_{z,y})\}$ by hypothesis $d_{x,y}^{n}\leq d^{n-1}_{x,y}$, thus $d_{x,y}^{n+1} \leq \mathop{f}\limits_{z}\{g(d^{n-1} _{x,z},d_{z,y})\}= d^{n}_{x,y}$, which proves the second statement.
\end{proof}

\begin{theorem}
\label{theorem11_proof}
Given a proximity graph $G_{P}=(X,P)$, a distance graph $G_{D}=(X,D)$, and the isomorphism $\varphi$ and $\Phi$ of definition  \ref{def1},  for any algebraic structure $I = ([0,1], \land, \lor )$ with a T-Conorm/T-Norm pair $\langle \land, \lor \rangle$ used to compute the transitive closure of $P$, there exists an algebraic structure $II = ([0,+\infty], f, g )$ with a TD-conorm/TD-norm pair $\langle f, g \rangle$ to compute the isomorphic distance closure of $D$, $D^{T} = \Phi(P^{T})$, which obeys the condition:

\[\forall x_i,x_j,x_k \in X:\mathop{f}\limits_{k} (g(\varphi (p_{ik}),\varphi (p_{kj})) ) = \varphi(\mathop{\lor}\limits_{k} ( (p_{ik}\land p_{kj})))\]

\noindent and vice-versa if we fix $\langle f, g \rangle$ (TD-norm/TD-Conorm) and isomorphism $\varphi$, to obtain $\langle \lor, \land \rangle$:

\[ \forall x_i,x_j,x_k \in X: \mathop{\lor}\limits_{k}( \varphi^{-1} (d_{ik}) \land \varphi^{-1} (d_{kj})) = \varphi^{-1} ( \mathop{f}\limits_{k} (g(d_{ik},d_{kj})) ) \]

\noindent where $\varphi ^{-1}$ is the inverse function of $\varphi$.
\end{theorem}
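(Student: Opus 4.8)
The plan is to bootstrap the stated composition-level identity from Theorem~\ref{theorem2}, which already supplies the pointwise correspondence between the two pairs of operations: $\varphi(a\land b)=g(\varphi(a),\varphi(b))$ and $\varphi(a\lor b)=f(\varphi(a),\varphi(b))$ for all $a,b\in[0,1]$. The existence of a TD-conorm/TD-norm pair $\langle f,g\rangle$ matching a given $\langle\land,\lor\rangle$ (and conversely) is therefore inherited directly from Theorem~\ref{theorem2}; what remains is to show that this correspondence survives the $n$-ary reduction over the intermediate vertex $x_k$ appearing in the graph composition.

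First I would rewrite the left-hand side one $k$-term at a time. Fixing $x_i,x_j$ and writing $a_k := p_{ik}\land p_{kj}$, the conjunction constraint of Theorem~\ref{theorem2} gives $g(\varphi(p_{ik}),\varphi(p_{kj}))=\varphi(p_{ik}\land p_{kj})=\varphi(a_k)$, so that
\[
\mathop{f}\limits_{k}\bigl(g(\varphi(p_{ik}),\varphi(p_{kj}))\bigr)=\mathop{f}\limits_{k}\varphi(a_k).
\]
It then suffices to establish $\mathop{f}_{k}\varphi(a_k)=\varphi\bigl(\mathop{\lor}_{k}a_k\bigr)$. Because $X$ is finite, $\mathop{f}_k$ and $\mathop{\lor}_k$ are finite iterations of the binary operations $f$ and $\lor$, both of which are commutative and associative (Definition~\ref{def4a} for $f$, and the T-Conorm axioms for $\lor$), so the reductions are well defined independently of the order of combination. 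I would then induct on the number of intermediate vertices: the two-element base case is exactly the disjunction constraint $\varphi(a_1\lor a_2)=f(\varphi(a_1),\varphi(a_2))$ of Theorem~\ref{theorem2}, and the inductive step peels off one term, $\mathop{f}_{k=1}^{n+1}\varphi(a_k)=f\bigl(\mathop{f}_{k=1}^{n}\varphi(a_k),\varphi(a_{n+1})\bigr)=f\bigl(\varphi(\mathop{\lor}_{k=1}^{n}a_k),\varphi(a_{n+1})\bigr)=\varphi\bigl(\mathop{\lor}_{k=1}^{n}a_k\lor a_{n+1}\bigr)$, again by Theorem~\ref{theorem2}. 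This yields the first displayed identity.

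The converse identity follows with no further work by the change of variables $p_{ik}=\varphi^{-1}(d_{ik})$, valid because $\varphi$ is a bijection (Definition~\ref{def1}). Substituting into the identity just proved and applying $\varphi^{-1}$ to both sides turns $\mathop{f}_k(g(d_{ik},d_{kj}))=\varphi\bigl(\mathop{\lor}_k(\varphi^{-1}(d_{ik})\land\varphi^{-1}(d_{kj}))\bigr)$ into exactly the stated second equation. To reach the closure statement $D^T=\Phi(P^T)$ I would note that each $n$-power is an iterate of the composition just analysed, so a second induction on the power $n$ gives $\Phi(P^n)=D^n$; since the closure operations $\cup$ and $\dot\cap$ are built from the very same $\lor$ and $f$ (Definitions~\ref{def3} and~\ref{def6b}), commuting $\Phi$ past them is one more application of the disjunction constraint.

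I expect the only genuine subtlety to be the boundary behaviour: terms with $p_{ik}=0$ force $\varphi(p_{ik})=\infty$, so the induction must remain valid at the identity element $\infty$ of $f$. This turns out not to be an obstacle, because Theorem~\ref{theorem2} is stated for all $a,b\in[0,1]$ with endpoints included, and the boundary condition $f(a,\infty)=a$ of Definition~\ref{def4a} is precisely the $\varphi$-image of $a\lor 0=a$; thus the edge cases are already absorbed into the pointwise correspondence and require no separate treatment.
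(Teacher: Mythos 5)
Your proposal is correct and follows the same basic route as the paper's proof --- both arguments rest on the pointwise correspondence of Theorem~\ref{theorem2} and an induction that lifts the single-composition identity to all powers and hence to the closures. The difference is where the logical weight sits. The paper's proof takes the displayed composition-level condition essentially as the hypothesis, shows by induction on the power $n$ that it forces $\Phi(P^{n})=D^{n}$ (basis $\Phi(P)\circ\Phi(P)=\Phi(P^{2})$, then the step $\Phi^{n+1}(P)=\Phi(P^{n})\circ\Phi(P)=\Phi(P^{n+1})$), and only afterwards invokes Theorem~\ref{theorem2} to read off the form of $f$ and $g$; the passage from the binary identity $\varphi(a\lor b)=f(\varphi(a),\varphi(b))$ to the $n$-ary reduction $\mathop{f}\limits_{k}\varphi(a_{k})=\varphi(\mathop{\lor}_{k}a_{k})$ over the intermediate vertices $x_{k}$ is left implicit. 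You instead prove that reduction explicitly, by a second induction on the number of intermediate vertices, justified by the commutativity and associativity of $f$ and $\lor$ (Definition~\ref{def4a} and the T-Conorm axioms), and you also check the boundary case $\varphi(0)=\infty$ against the identity element of $f$. This makes your derivation of the displayed condition self-contained rather than assumed, which is a genuine strengthening; the paper's version buys brevity by treating the condition as the defining constraint and concentrating on its consequence $D^{T}=\Phi(P^{T})$. Your handling of the converse direction via the substitution $p_{ik}=\varphi^{-1}(d_{ik})$ and bijectivity of $\varphi$ matches the paper's, as does the final lift to the closures through Definitions~\ref{def3} and~\ref{def6b}.
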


\begin{proof}
The transitive closure of P is given by $P^{k_1}$ and the distance closure of D by $D^{k_2}$, with $k_1$ and $k_2$ integers. Let $n=max(k_1,k_2)$, thus for $\Phi(P^{n})=D^{n}$ to be true, the following must also be true:  \[\forall x,y,z \in X:\mathop{f}\limits_{z} \{g(\varphi (p_{x,z}),\varphi (p_{z,y})\}=\varphi(\mathop{\lor}\limits_{z}\{p_{x,z}\land p_{z,y}\})\]

\noindent We can prove by induction that $\Phi(P^{n})=D^{n}$ is true if we assume that the condition in this theorem is true.

 \noindent The condition in this theorem is equivalent to:

 \[\Phi ^{-1} (\Phi (P)\circ \Phi (P))=P^{2} =P\circ P\]

\noindent Where $\Phi(P) \circ \Phi(P)$ is the distance composition using $f$ and $g$, and $P \circ P$ is the transitive composition using $\land$ and $\lor$. We also can define $D^n$ in function of $\Phi$ and $P$.

\noindent \[D^{n}=\underbrace{D\circ \cdots \circ D}_{n}=\underbrace{\Phi (P)\circ \cdots \circ \Phi (P)}_{n}\] Therefore, what we want to prove is: \[\Phi ^{n} (P)=\Phi (P^{n} )\] \noindent given the condition on this theorem is true.

\noindent by induction:\\
\\(1) $\Phi(P)\circ \Phi(P) = \Phi(P^{2})$ (Basis);\\
(2) $\Phi ^{n} (P)=\Phi (P^{n} )$ (Hypothesis);\\
(3) $\Phi ^{n+1} (P)=\Phi (P^{n+1} )$ (Thesis).\\

\noindent Assuming the condition on this theorem $\Phi ^{-1} (\Phi (P)\circ \Phi (P))=P^{2}$ is true, then it is also true that $\Phi (P)\circ \Phi (P)=\Phi(P^{2})$. Thus, $\Phi ^{n+1} (P)= \Phi^{n}(P)\circ \Phi(P)=\Phi(P^{n})\circ \Phi(P)=\Phi(P^{n+1})$ from statements (1) and (2), which proves the theorem.

Let us prove that there exist a pair of binary functions $f$ and $g$ per definition  \ref{def4a}. From theorem \ref{theorem2} we have \[g(\varphi(p_{x,z}),\varphi(p_{z,y}))=\varphi(p_{x,z}\land p_{z,y})\]
\noindent and from the condition in this theorem, we have \[\mathop{f}\limits_{z}\{g(\varphi(p_{x,z}),\varphi(p_{z,y}))\}=\varphi( \mathop{\lor}\limits_{z}\{p_{x,z}\land p_{z,y}\})\] \[\mathop{f}\limits_{z}\{\varphi(p_{x,z}\land p_{z,y})\}=\varphi( \mathop{\lor}\limits_{z}\{p_{x,z}\land p_{z,y}\})\] \noindent 
\noindent Therefore, \[f(d_{x,z},d_{z,y}) \equiv \varphi (\varphi^{-1}(d_{x,z})\lor \varphi^{-1}(d_{z,y}))\] \noindent

The conditions of this theorem leads to the equations of theorem \ref{theorem2}:\[g(d_{x,z},d_{z,y})=\varphi(\varphi^{-1}(d_{x,z})\land \varphi^{-1}(d_{z,y}))\]  \[f(d_{x,z},d_{z,y}) \equiv \varphi (\varphi^{-1}(d_{x,z})\lor \varphi^{-1}(d_{z,y}))\].

From these last equations we can also find $\lor$ and $\land$ given $f$, $g$ and the isomorphism $\varphi$: \[p_{x,z}\lor p_{z,y}=\varphi^{-1}(f(\varphi(p_{x,z}),\varphi(p_{z,y})))\]  \[p_{x,z}\land p_{z,y}=\varphi^{-1}(g(\varphi(p_{x,z}),\varphi(p_{z,y})))\]








\end{proof}

\begin{theorem}
\label{theorem3a_proof}
Given a finite proximity graph $G_P(X,P)$, and an algebraic structure $I=([0,1], \lor, \land)$, with a T-Conorm/T-Norm pair $\langle \land, \lor \rangle$ used to compute the transitive closure of $G_P$, if $I$ is a dioid, then the transitive closure $G_P^T(X,P^T)$ can be computed by equation \ref{TC1} for a finite $\kappa$.
\end{theorem}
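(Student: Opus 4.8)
The plan is to show that the non-decreasing sequence of partial closures $S_{\kappa}=\bigcup_{n=1}^{\kappa}P^{n}$ stabilises after finitely many steps, and in fact that $S_{\kappa}=S_{|X|-1}$ for every $\kappa\geq|X|-1$. The whole argument rests on one structural consequence of the dioid hypothesis combined with the fact that the weights live in $[0,1]$: the multiplicative unit of $\land$ (namely $1$) is simultaneously the absorbing top element of $\lor$, so that $1\lor a=1$ for all $a$. I would first record that this, together with the distributivity that the dioid assumption supplies, makes $I$ an \emph{absorptive} dioid, and then extract from it both idempotency of $\lor$ and a path-absorption law.

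The first step is purely algebraic. Distributivity of $\land$ over $\lor$ gives $a=a\land(1\lor b)=(a\land 1)\lor(a\land b)=a\lor(a\land b)$ for all $a,b\in[0,1]$; taking $b=1$ yields $a\lor a=a$, so $\lor$ is idempotent, and the general identity $a\lor(a\land b)=a$ is the absorption law I need. It is worth stressing that these two facts are exactly where the dioid hypothesis (distributivity, beyond the two monoid structures) is used; for a non-distributive pair they fail, and indeed idempotency forces $\lor\equiv\max$, which reconnects this case with the classical $\lor=\max$ convergence result. Idempotency also makes the dioid's canonical order coincide with the natural order on $[0,1]$ and guarantees that $x\leq y$ implies $x\lor y=y$, which is what lets me discard dominated terms from a $\lor$-sum without altering it.

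The second step interprets the powers combinatorially: by associativity and distributivity, $(P^{n})_{ij}$ equals the $\lor$-sum, over all walks of length $n$ from $x_{i}$ to $x_{j}$, of the $\land$-product of the edge weights along each walk. I would then show that any walk $w$ traversing a cycle is absorbed by the shorter walk $w'$ obtained by excising that cycle: writing $w=w_{1}\,c\,w_{2}$ with $c$ a cycle, and applying the absorption law factored through the cycle weight, $\mathrm{wt}(w')\lor\mathrm{wt}(w)=(\mathrm{wt}(w_{1})\land\mathrm{wt}(w_{2}))\land(1\lor\mathrm{wt}(c))=\mathrm{wt}(w_{1})\land\mathrm{wt}(w_{2})=\mathrm{wt}(w')$. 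Hence the weight of every non-simple walk is already accounted for by a shorter walk, and, iterating, by a genuine simple path. Since a simple path on $|X|$ vertices has at most $|X|-1$ edges, every walk of length $\geq|X|$ is dominated by one of length $\leq|X|-1$, so $P^{n}$ contributes nothing new to the union once $n\geq|X|$. Therefore $P^{T}=\bigcup_{n=1}^{|X|-1}P^{n}$ and $\kappa\leq|X|-1$ is finite, as claimed.

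I expect the main obstacle to be making the cycle-excision step fully rigorous at the level of the matrix sum rather than for a single walk: one must argue that excising cycles from an arbitrary non-simple walk can be iterated down to a simple path while preserving the domination $\mathrm{wt}(w)\leq\mathrm{wt}(w'')$ at every stage, and that summing these dominations over the finitely many walks of each length genuinely collapses the infinite union onto the finite one. The bookkeeping is routine once the single-cycle absorption identity and idempotency are in hand, but it is the step where associativity, distributivity, and the boundary behaviour of $\land$ and $\lor$ must all be invoked together; alternatively, one can bypass the combinatorics entirely by invoking the absorptive-dioid closure theorem of \cite{Gondran2007}, which packages precisely this induction.
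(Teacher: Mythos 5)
Your proposal is correct, and it does considerably more than the paper does at this point: the paper offers no proof of this theorem at all, deferring entirely to Gondran and Minoux with the single line ``See \cite{Gondran2007} for proof.'' What you have written is, in effect, the self-contained argument behind that citation, specialised to the unit interval. Your cycle-excision step is sound: the absorption law $x\lor(x\land y)=x$ follows exactly as you derive it from distributivity plus the fact that $1$ is absorbing for any T-Conorm (by monotonicity and the codomain $[0,1]$), and commutativity of $\land$ (axiom T1) licenses factoring $\mathrm{wt}(w_1)\land\mathrm{wt}(w_2)$ out of the cycle term; since each excision strictly shortens the walk, the iteration terminates at a simple path of length at most $|X|-1$, and idempotency lets you drop the dominated terms from the finite $\lor$-sum defining $(P^n)_{ij}$. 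One observation in your first step deserves emphasis because the paper obscures it: on $[0,1]$ with the T-Norm/T-Conorm boundary conditions, distributivity already forces $\lor$ to be idempotent, and the only idempotent T-Conorm is $\max$. So the dioid hypothesis of this theorem is not genuinely broader than the classical ``$\lor=\max$ with any T-Norm'' convergence result quoted earlier from \cite{Klir1995}; your proof makes the two statements visibly the same theorem, whereas the paper presents them as if the dioid case were an independent extension. The only cosmetic caveat is that your bound $\kappa\leq|X|-1$ presumes, as the classical statement does, that the walk-sum expansion of $P^n$ is justified by distributivity and that the union in Definition~\ref{def3} uses the same $\lor$ --- both of which hold under the paper's definitions --- so nothing is missing.
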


See \cite{Gondran2007} for proof; further discussion and examples also see \cite{Han2004, Han2007,Pang2003,Klir1995}.

\begin{theorem}
\label{theorem3c_ap}
Given a finite distance graph $G_D(X,D)$, and an algebraic structure $II=(\{[0,+\infty],f,g)$, with a TD-Conorm/TD-Norm pair $\langle f, f \rangle$ used to compute the distance closure of $G_D$, if $II$ is a dioid, then the distance closure $G_D^T(X,D^T)$ can be computed in finite time via the transitive closure of isomorphic graph $G_P(X,P)$ with algebraic structure $I$ obtained by an isomorphism satisfying Theorem \ref{theorem11}. In other words, if $II$ is a dioid, via an isomorphism satisfying Theorem \ref{theorem11} we obtain an algebraic structure $I$ which is also a dioid.
\end{theorem}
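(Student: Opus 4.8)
The plan is to use the isomorphism $\varphi$ of Theorem \ref{theorem11} as a bijection of algebraic structures that transports the dioid axioms from $II$ to $I$, and then to invoke Theorem \ref{theorem3a} in proximity space together with the intertwining $D^T = \Phi(P^T)$ to conclude finite-time computability. Since $\varphi$ is strictly monotonic decreasing it is bijective (Definition \ref{def1}), and by the constraint equations of Theorem \ref{theorem11} (equivalently Theorem \ref{theorem2}) we have $\varphi(a \land b) = g(\varphi(a),\varphi(b))$ and $\varphi(a \lor b) = f(\varphi(a),\varphi(b))$ for all $a,b \in [0,1]$. Thus $\varphi$ is at once a monoid isomorphism $(I,\land) \to (II,g)$ and $(I,\lor) \to (II,f)$, and it matches the distinguished elements correctly, since $\varphi(1)=0$ is the $g$-identity (unit $e$) and $\varphi(0)=\infty$ is the $f$-identity (zero $\epsilon$).

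First I would transfer the semiring axioms (Definition \ref{semiring}). Because $\varphi$ is a bijective homomorphism for each operation, commutativity, associativity, the existence of the identities, and the absorbing property of the zero pass from $II$ to $I$ directly. Distributivity transfers in the same manner: applying $\varphi$ to $a \land (b \lor c)$ and using the two constraint identities reduces the desired equality to $g(\varphi(a), f(\varphi(b),\varphi(c))) = f(g(\varphi(a),\varphi(b)), g(\varphi(a),\varphi(c)))$, which holds because $g$ distributes over $f$ in the dioid $II$; injectivity of $\varphi$ then yields distributivity in $I$. Hence $I$ is a semiring whenever $II$ is.

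The step I expect to be the crux is the canonical order (Definition \ref{canonicalorder}): I must show that the canonical preorder of $(I,\lor)$ is antisymmetric, given that the canonical preorder of $(II,f)$ is. Here I would verify that $\varphi$ intertwines the two canonical preorders, namely $a \preceq_I b \Leftrightarrow \varphi(a) \preceq_{II} \varphi(b)$: from $\varphi(a \lor c) = f(\varphi(a),\varphi(c))$ and the bijectivity of $\varphi$, a witness $c$ for $b = a \lor c$ corresponds exactly to a witness $\varphi(c)$ for $\varphi(b) = f(\varphi(a),\varphi(c))$, and conversely. The subtle point to get right is that this alignment holds even though $\varphi$ reverses the ambient real order on the intervals, because the canonical preorder is defined intrinsically through the monoid operation and not through the real order on $[0,1]$ or $[0,+\infty]$. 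Since $\varphi$ is a bijection intertwining $\preceq_I$ with the antisymmetric $\preceq_{II}$, the relation $\preceq_I$ is antisymmetric as well, so $(I,\lor)$ is canonically ordered and $I$ is a dioid (Definition \ref{dioid}).

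Finally, with $I$ established as a dioid, Theorem \ref{theorem3a} guarantees that the transitive closure $P^T$ converges for a finite $\kappa$ in equation \ref{TC1}, the nesting $P \subseteq P^2 \subseteq \cdots$ that makes the union stabilize being supplied by Theorem \ref{theorem5}. Because the composition is intertwined by $\Phi$ (Theorem \ref{theorem11} gives $\Phi(P^n) = D^n$ for all $n$), the distance closure satisfies $D^T = \Phi(P^T)$ and is thus obtained from the finitely computed $P^T$ by a single application of $\Phi$, so it too is computed in finite time, completing the argument.
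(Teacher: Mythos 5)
Your proposal is correct and follows exactly the route the paper itself indicates: the paper's entire ``proof'' is the single remark that the result follows from Theorems \ref{theorem5}, \ref{theorem11} and \ref{theorem3a} by evoking the isomorphism to proximity space, and your argument simply supplies the details of that transfer (the homomorphism identities $\varphi(a\land b)=g(\varphi(a),\varphi(b))$, $\varphi(a\lor b)=f(\varphi(a),\varphi(b))$, the matching of the distinguished elements, the transport of distributivity and of the canonical order, and the final appeal to Theorem \ref{theorem3a} together with $D^T=\Phi(P^T)$). In particular your observation that the canonical preorder is defined intrinsically through the monoid operation, so that the order-reversing nature of $\varphi$ on the ambient intervals is irrelevant, is a correct and worthwhile clarification of a point the paper leaves implicit.
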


This theorem can be easily proven from theorems \ref{theorem5}, \ref{theorem11} and \ref{theorem3a}, by evoking the isomorphism to proximity space.

\begin{corollary}
\label{theorem12_proof}
Given the isomorphism constraint on the T-Norm from algebraic structure $I$ (eq. \ref{eq_constraint_and}) from theorem \ref{theorem11}, let  $f \equiv \min$,  $g \equiv +$  and  $\varphi$  a distance function, per definition  \ref{def1}. If  $\lor \equiv \max$  as T-Conorm, then the T-Norm operator $\land$ exists and $\varphi$ is its generator function.
\end{corollary}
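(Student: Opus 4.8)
The plan is to reduce the statement to the Characterization Theorem of T-Norms (Theorem~\ref{characterization}) via the single isomorphism constraint~\ref{eq_constraint_and}. First I would take equation~\ref{eq_constraint_and} from Theorem~\ref{theorem11}, which reads $a\land b=\varphi^{-1}(g(\varphi(a),\varphi(b)))$ for $a,b\in[0,1]$, and substitute the fixed TD-norm $g\equiv+$ to obtain
\[a\land b=\varphi^{-1}(\varphi(a)+\varphi(b)).\]
This is already the algebraic shape demanded by Theorem~\ref{characterization}; what remains is to certify that it is a genuine instance of that theorem and that the hypotheses are met.

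The one subtle point --- and the step I expect to carry the weight of the proof --- is matching the ordinary inverse $\varphi^{-1}$ produced by the isomorphism against the pseudo-inverse $\varphi^{(-1)}$ of Definition~\ref{def8} that appears in Theorem~\ref{characterization}. I would show that these agree on every argument that can actually occur. Because $\varphi$ maps into $[0,+\infty]$, each sum $\varphi(a)+\varphi(b)$ is nonnegative, so the branch of $\varphi^{(-1)}$ on $(-\infty,0)$ is never invoked; and because Definition~\ref{def1}(2) gives $\varphi(0)=\infty$, the truncation interval $[0,\varphi(0)]$ is all of $[0,+\infty]$, which makes the branch on $(\varphi(0),\infty)$ vacuous. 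Hence $\varphi^{(-1)}(t)=\varphi^{-1}(t)$ for every $t=\varphi(a)+\varphi(b)$, and therefore $a\land b=\varphi^{(-1)}(\varphi(a)+\varphi(b))$ exactly.

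Finally I would confirm that $\varphi$ is a legitimate decreasing generator in the sense of Definition~\ref{def-generator}: it is strictly decreasing by Definition~\ref{def1}(1), and it is continuous because a strictly monotone bijection of $[0,1]$ onto the interval $[0,+\infty]$ (bijectivity was noted just after Definition~\ref{def1}) admits no jumps. Theorem~\ref{characterization} then applies directly and yields both conclusions at once: $\land$ is a well-defined Archimedean T-Norm --- so the four T-Norm axioms hold automatically, with no case-checking by hand --- and $\varphi$ is precisely its generator. As a consistency remark, the companion choice $f\equiv\min$ forces the hypothesis $\lor\equiv\max$ rather than requiring it as an extra assumption: by equation~\ref{eq_constraint_or} together with the strict monotonicity of $\varphi$, one has $a\lor b=\varphi^{-1}(\min(\varphi(a),\varphi(b)))=\max(a,b)$, exactly as in Example~1.
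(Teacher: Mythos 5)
Your proof is correct and follows essentially the same route as the paper's: both reduce the claim to the Characterization Theorem of T-Norms (Theorem~\ref{characterization}) via the pointwise identity $a\land b=\varphi^{-1}(\varphi(a)+\varphi(b))$. The only difference is that the paper derives this identity from the aggregated ($\min$/$\max$ over $k$) condition of Theorem~\ref{theorem11}, using the monotonicity of $\varphi^{-1}$ to swap the $\min$ into a $\max$ and then stripping the outer $\max$, whereas you start directly from the binary constraint of eq.~\ref{eq_constraint_and}; your additional checks that the pseudo-inverse $\varphi^{(-1)}$ agrees with $\varphi^{-1}$ on all arguments that occur and that $\varphi$ is a continuous decreasing generator are details the paper leaves implicit, and they make the appeal to the Characterization Theorem airtight.
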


\begin{proof}
We have seen in theorem ~\ref{theorem2} that  $\varphi (x\wedge y)=g(\varphi (x),\varphi (y))$  therefore $\forall x,y,z \in P$ and by theorem \ref{theorem11}: \[\begin{array}{l} {\varphi ^{-1} (\mathop{min}\limits_{z} \{\varphi (p_{x,z})+\varphi (p_{z,y})\})=\mathop{max}\limits_{z}\{p_{x,z}\land p_{z,y}\}} \\ {\mathop{max}\limits_{z} \{\varphi ^{-1} (\varphi (p_{x,z})+\varphi (p_{z,y}))\}=\mathop{max}\limits_{z} \{p_{x,z}\land p_{z,y}\}} \\ {\Rightarrow } \\ {\varphi ^{-1} (\varphi (p_{x,z})+\varphi (p_{z,y}))=p_{x,z}\land p_{z,y}} \end{array}\] This last result is the characterisation function of T-Norms, according to theorem ~\ref{characterization} \cite{Klir1995}, which states that  $\land $  is a T-Norm and  $\varphi $ is  the decreasing generator function (obeying definition \ref{def1}).
\end{proof}

\setcounter{theorem}{7}

\begin{theorem}
\label{theorem9_proof}
Given the isomorphism $\varphi $, if $D^{mc}$ is the metric closure with $f \equiv \min$ and $g_{1}\equiv +$, and $D^{um}$ is the ultra-metric closure with $f\equiv \min$ and $g_{2}\equiv \max$ then $D^{mc}\dot \supseteq D^{um}$ is equivalent to $P^{mc}\subseteq P^{um}$, where $D^{mc}=\Phi(P^{mc})$ and $D^{um}=\Phi(P^{um})$. Therefore, $\Delta(P^{um}) \geq \Delta(P^{mc}).$
\end{theorem}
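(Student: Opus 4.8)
The plan is to reduce the entire statement to the single pointwise inequality $\max(a,b) \le a+b$, valid for all nonnegative reals, and then transport the resulting distance inequality into proximity space through the order-reversing isomorphism. First I would establish the heart of the claim, $D^{mc} \dot\supseteq D^{um}$, by induction on the power $n$ of the distance graph. The base case is immediate since the first power of each iterate is just $D$. For the inductive step, writing $D^{n+1} = D^{n}\circ D$ (composition is associative by the TD-norm/TD-conorm axioms of Definition~\ref{def4a}), I would use, for each intermediary vertex $x_k$,
\[
\max\!\big(d^{um,n}_{ik},\, d_{kj}\big) \;\le\; d^{um,n}_{ik} + d_{kj} \;\le\; d^{mc,n}_{ik} + d_{kj},
\]
where the first inequality holds because distances are nonnegative and the second is the inductive hypothesis $d^{um,n}_{ik} \le d^{mc,n}_{ik}$. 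Since $f \equiv \min$ is monotone, taking the minimum over $x_k$ of both sides yields $d^{um,n+1}_{ij} \le d^{mc,n+1}_{ij}$, completing the induction.

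Next I would pass from the powers to the closures. Both closures use the TD-conorm $f \equiv \min$, so the operator $\dot\cap$ in Definition~\ref{def6b} is entrywise minimum, and both are shortest-path closures (\S\ref{other_closures}) that converge in finitely many steps. Hence each closed entry is the minimum over the (finite) family of its powers, and the per-power inequality survives: $d^{um}_{ij} \le d^{mc}_{ij}$ for every pair, i.e. $D^{mc} \dot\supseteq D^{um}$. Because $\varphi$ is strictly monotonic decreasing (Definition~\ref{def1}), applying $\varphi^{-1}$ reverses the inequality, giving $p^{mc}_{ij} \le p^{um}_{ij}$ for all $i,j$, which is exactly $P^{mc} \subseteq P^{um}$; this is the stated equivalence.

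Finally, for the distortion inequality I would invoke Theorem~\ref{theorem5}: any closure only decreases distances, equivalently only increases proximities, so $P \subseteq P^{mc}$ and $P \subseteq P^{um}$. This lets me drop the absolute value in eq.~\ref{distortion} and write $\Delta(P^{um}) = \sum_{i}\sum_{j}(p^{um}_{ij} - p_{ij})$, and likewise for $P^{mc}$. Subtracting,
\[
\Delta(P^{um}) - \Delta(P^{mc}) = \sum_{i}\sum_{j}\big(p^{um}_{ij} - p^{mc}_{ij}\big) \ge 0,
\]
where each summand is nonnegative by $P^{mc} \subseteq P^{um}$, so $\Delta(P^{um}) \ge \Delta(P^{mc})$.

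The argument is essentially elementary, and the points needing care are bookkeeping rather than depth. The main obstacle I anticipate is justifying the two reductions cleanly: that the closure of a shortest-path ($f \equiv \min$) structure is genuinely the entrywise minimum of the finite family of powers, so that the per-power inequality is inherited by the limit, and that the absolute value in the distortion functional may legitimately be removed. Both lean on the monotone behaviour captured by Theorem~\ref{theorem5} together with the finite-time convergence guaranteed when $\lor \equiv \max$; everything else follows from $\max(a,b)\le a+b$ and the order-reversing nature of $\varphi$.
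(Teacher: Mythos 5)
Your proposal is correct and follows essentially the same route as the paper's proof: an induction on the power $n$ whose engine is the pointwise inequality $\max(a,b)\le a+b$ on nonnegative reals, combined with monotonicity of $f\equiv\min$ and the order-reversing map $\varphi$ to transport the conclusion to proximity space. You are in fact somewhat more explicit than the paper on two points it leaves implicit --- that the $f\equiv\min$ closure is the entrywise minimum of the finite family of powers (so the per-power inequality survives), and that Theorem~\ref{theorem5} lets you drop the absolute value in eq.~\ref{distortion} to deduce $\Delta(P^{um})\ge\Delta(P^{mc})$ --- which is a welcome tightening rather than a deviation.
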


\begin{proof}
We can prove by induction that:

1)  $D^{2} \dot \supseteq \Phi (P^{2} )$ ;

2)  $\left\{\begin{array}{c} {H:D^{n}\dot \supseteq \Phi (P^{n} )} \\ {T:D^{n+1} \dot \supseteq \Phi (P^{n+1} )} \end{array}\right. $

Let's prove 1)

 $\forall x,y,z\in X:D_{mc}^{2}=\mathop{f }\limits_{z} (d_{x,z}+d_{z,y})=\mathop{f }\limits_{z} (\varphi (p_{x,z})+\varphi (p_{z,y}))\ge \mathop{f }\limits_{z} (g_{2}(\varphi (p_{x,z}),\varphi (p_{z,y})))=D_{um}^{2}$, therefore $D_{mc}^{2} \dot \supseteq D_{um}^{2}$.

2) by the hypothesis we know that  $\forall x,y,z\in X:D^{n}\ge \Phi (P^{n})$ , then using this result we have  $\forall x,y,z\in X:D^{n+1}=\mathop{f}\limits_{z} \{ d^{n}_{x,z}+d _{z,y}\} \ge \mathop{f}\limits_{z} \{ \varphi (p^{n}_{x,z})+\varphi (p _{z,y})\} $ , because   $\mathop{f}\limits_{z} \{ \varphi (p^{n}_{x,z})+\varphi (p_{z,y})\} \ge \mathop{f}\limits_{z} \{ \varphi (p^{n}_{x,z})\vee \varphi (p_{z,y})\} $  and using theorem \ref{theorem2},   $\mathop{f}\limits_{z} \{ g_{2}(\varphi (p^{n}_{x,z}),\varphi (p_{z,y}))\} =\varphi (\mathop{\vee}\limits_{z} \{ p^{n}_{x,z}\wedge p_{z,y}\} )=\Phi (P^{n+1})$ , so

 $\forall x,y,z\in X:D^{n+1}\ge \Phi (P^{n+1})$ , which proves that  $D^{mc}\equiv D^{n} \dot \supseteq \Phi (P^{n} ) \equiv D^{um}$.
\end{proof}

\begin{theorem}
\label{complement_theorem}
Given a fuzzy complement $c(x)$, a T-Norm $DT_{\land}^1=\frac{ab}{a+b-ab}$ and a T-Conorm $max(a,b)$, then the triple has no involutive complement, which satisfies the De Morgan's laws.
\end{theorem}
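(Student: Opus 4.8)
The plan is to argue by contradiction. Suppose an involutive complement $c$ exists for which the pair $\langle \lor, \land \rangle = \langle \max, DT_{\land}^{1} \rangle$ is dual, i.e. (per the duality definition accompanying Definitions \ref{complement} and \ref{complement2}) satisfies $c(a \land b) = c(a) \lor c(b)$ and $c(a \lor b) = c(a) \land c(b)$ for all $a,b \in [0,1]$, with $\land = DT_{\land}^{1}$ and $\lor = \max$. First I would record two elementary consequences of the complement axioms. Involution, $c(c(x)) = x$, forces $c$ to be injective (if $c(u)=c(v)$ then $u = c(c(u)) = c(c(v)) = v$); combined with the monotonicity axiom (c2) this makes $c$ a strictly decreasing bijection of $[0,1]$. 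In particular $c$ is cancellable: $c(u) = c(v)$ implies $u = v$.

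Next I would feed this into the first De Morgan law, $c(a \land b) = \max(c(a), c(b))$, specialised to an ordered pair $a \le b$. Monotonicity gives $c(a) \ge c(b)$, so the right-hand side collapses to $c(a)$, leaving $c(DT_{\land}^{1}(a,b)) = c(a)$. Cancelling $c$ yields the purely algebraic requirement $DT_{\land}^{1}(a,b) = a$, that is, $\frac{ab}{a+b-ab} = a$, which would have to hold for every pair with $a \le b$.

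The final step is a short computation. The identity $\frac{ab}{a+b-ab} = a$ simplifies to $a^{2}(1-b) = 0$, so it can hold only when $a = 0$ or $b = 1$. Choosing any interior pair, for instance $0 < a < b < 1$, violates both conditions, which is the desired contradiction: no involutive $c$ can satisfy the law across its whole domain. I expect the only delicate point to be the justification that an involutive complement is strictly monotone and hence cancellable, since the stated axioms demand only non-strict monotonicity; this is resolved by noting that involution already forces injectivity, which upgrades monotonicity to strictness. Everything after the cancellation is routine algebra.

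As a remark, the same contradiction is reachable symmetrically from the second De Morgan law $c(\max(a,b)) = DT_{\land}^{1}(c(a), c(b))$: for $a \le b$ the left side is $c(b)$, and writing $u = c(a) \ge c(b) = v$ reduces the law to $v = \frac{uv}{u+v-uv}$, i.e. $v(1-u) = 0$, which again fails whenever $a$ and $b$ lie in the open interior $(0,1)$. Exhibiting either one of the two laws is enough to establish the theorem.
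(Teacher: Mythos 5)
Your proof is correct, and it is close in spirit to the paper's but not identical in structure. The paper works from the second De Morgan law, $c(\max(a,b)) = DT^{1}_{\land}(c(a),c(b))$: for $a \ge b$ the left side collapses to $c(a)$, and the resulting identity forces $c(a)=0$ or $c(b)=1$ for all such pairs, from which the paper concludes that $c$ must be a threshold function and then invokes the external fact that threshold functions are not involutive. Your main argument instead starts from the first De Morgan law, $c(a \land b) = \max(c(a),c(b))$, uses involution \emph{up front} to get injectivity of $c$, cancels $c$ to obtain the purely algebraic condition $DT^{1}_{\land}(a,b)=a$ for $a \le b$, and kills it with an interior pair $0<a<b<1$. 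This buys you two things: you never divide by a possibly-zero quantity (the paper's cancellation of $c(a)$ silently assumes $c(a)\neq 0$), and your contradiction is self-contained rather than resting on the unproved assertion that the threshold function fails involution. Your closing remark essentially reproduces the paper's route and finishes it more cleanly by again choosing an interior pair. The one point worth tightening is your claim that $c$ is a strictly decreasing bijection: you only need injectivity (from involution) to cancel, and non-strict monotonicity (axiom (c2)) to collapse the $\max$, so the strictness claim, while true, is not load-bearing. Either of your two arguments is a complete and valid proof.
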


\begin{proof}
A complement is involutive if $c(c(x))=x$. If the complement $c(x)$ satisfies the De Morgan's laws we have: \[\overline{a \lor b}=\bar a \land \bar b\] \[c(max(a,b))=\frac{c(a)c(b)}{c(a)+c(b)-c(a)c(b)}\] \noindent without loss of generality let $a\geq b$ \[c(a)=\frac{c(a)c(b)}{c(a)+c(b)-c(a)c(b)} \] \[c(a)(1-c(b))=0\] \[c(a)=0 \lor c(b)=1\] \noindent the only function that satisfies this condition is the threshold function, which is not involutive \cite{Klir1995}.
\end{proof}


\newpage
\section{Optimal Dombi T-Norm for a characteristic path length}



We have seen that we can apply an infinity of pairs of T-Norms and T-Conorms to calculate distance closure, and compute shortest paths in distance graphs. In this formulation (see corollary \ref{theorem12}), we fix the T-Conorm with $\lor \equiv max$, allowing us to explore many options for the T-Norm $\land$. The T-Norm is defined via the T-Norm generator isomorphism $\varphi$ (corollary \ref{theorem12}). Then, using $\langle f \equiv min,g\equiv + \rangle$ as the TD-norm/TD-conorm pair for computing the metric closure, via the APSP/Dijkstra, distance product or equivalent, we can sweep the space of possible T-Norms, thus simultaneously exploring the range of possible isomorphisms. 
%
%
In this section we explore the Dombi T-Norm family,
where the Dombi T-Norm generator is:

\begin{equation}
\label{dombi4}
\varphi(x)=\left(\frac{1}{x}-1\right)^{\lambda}
\end{equation}

\noindent where $\lambda$ is the sweeping parameter. The parameter $\lambda$ in the T-Norm generator takes values in $]0,+\infty[$: $\lambda \to 0$ lower bound (\emph{drastic product}) and $\lambda \to \infty$ is the upper bound (\emph{minimum}). The reason we choose this T-Norm generator is because it yields the more commonly used isomorphism from proximity to distance; when $\lambda = 1$, \cite{Eckhardt2009} \cite{Strehl}, the generator of eq. \ref{dombi4}, becomes the isomorphism of formulae \ref{distance_measures4}, which we have used in the previous section:

 \[\varphi(x)=\frac{1}{x}-1.\]

We have seen that when T-Norm and T-Conorm ($\lor,\land$) are fixed, the transitive closure and the distance closure are equivalent via isomorphism $\varphi$.

For empirical analysis of complex networks it is desirable that properties of the graphs obtained via specific closures, such as \emph{average shortest path}, be simultaneously characteristic in both spaces (proximity and distance). That is, the fluctuations of the mean, must be constrained on both spaces (average shortest path and average strongest path). In order to have a characteristic average path length, the shortest paths distribution must follow approximately a normal distribution. We want to find the best $\lambda$, using the Dombi T-Norm generator, which guarantees these assumptions, while fixing $\lor = \max$.

Assuming that the shortest path distribution of a distance graph follows a normal distribution, the probability density function for a normal random variable $X$, here the shortest path, is given by:

\begin{equation}
\label{density}
h_{X}(x)=\frac{1}{\sqrt{2\pi \sigma^{2}}} e^{\frac{-(x-\mu)^{2}}{2\sigma^{2}}}
\end{equation}

\noindent where $\mu$ and $\sigma$ are the mean and standard deviation of the normal distribution.

\begin{figure}[!th]
\centering
\includegraphics[width=80mm,height=60mm]{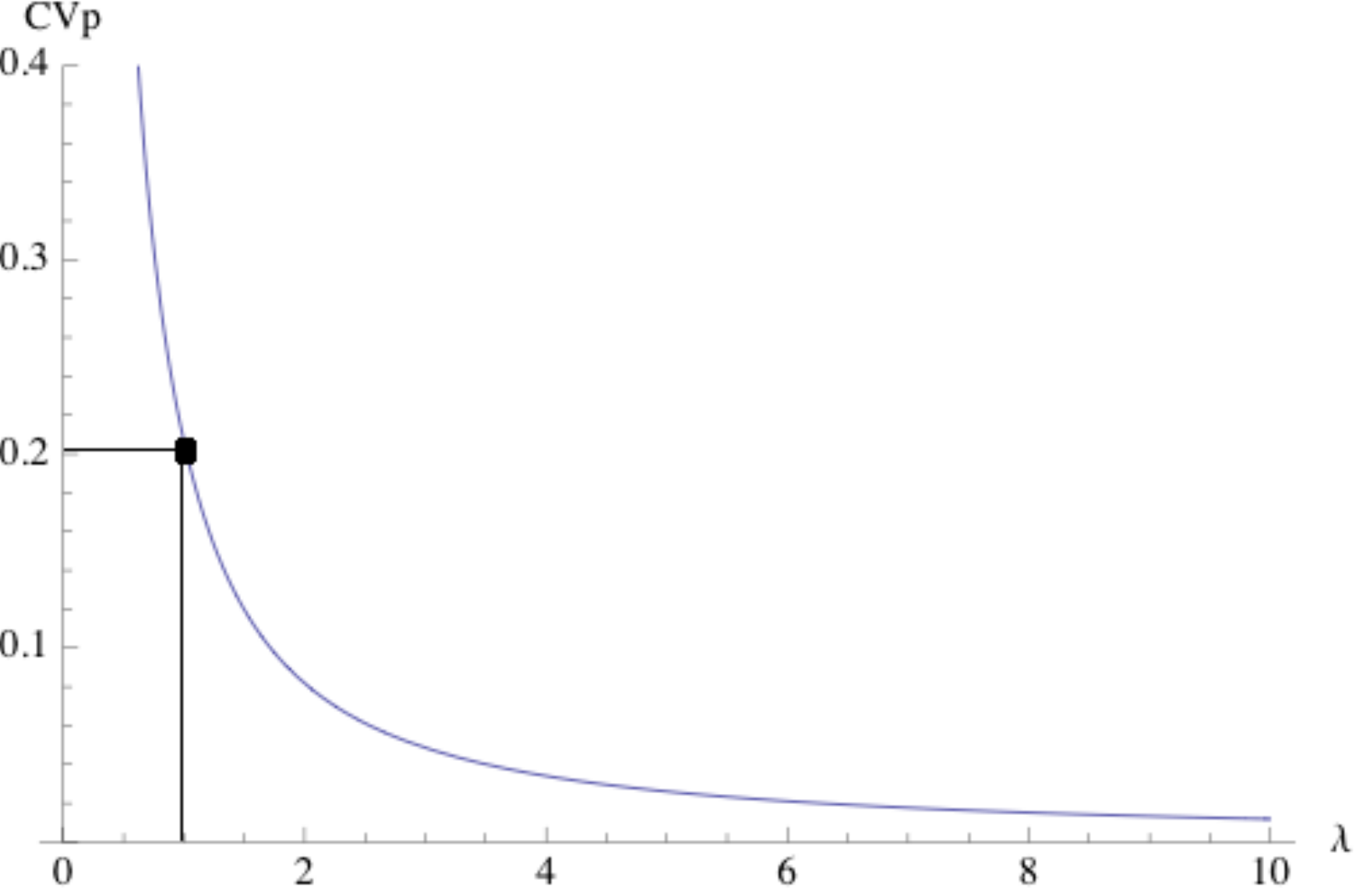}
\caption{Study of the fluctuations in proximity space, $CV_{p}$ as function of $\lambda$ for $\mu=10$ (average path length in distance space) with $CV_{d}=0.2$.\label{cvp}}
\end{figure}

The mean of a random variable $Y=j(X)$, which is a monotonic function of X, where X is the random variable representing shortest path in a distance graph, and Y the random variable representing the strongest path in the isomorphic distance graph, is given by:

\begin{equation}
\label{avg}
<Y>=\int_{0}^{\infty} j(x)h_{X}(x)dx
\end{equation}

\noindent In our case, \[j(x)=\varphi^{-1}(x)=\frac{1}{x^{\frac{1}{\lambda}}+1}\]

Therefore, the fluctuations of the mean, in the proximity space are given by:

\begin{equation}
\label{fluctuations}
CV_{p}=\frac{\sigma_{p}}{\mu_{p}}=\frac{\sqrt{<Y^{2}>-<Y>^{2}}}{<Y>}
\end{equation}

\noindent where $CV_{p}$ is the \emph{coefficient of variability}\footnote{The coefficient of variability is scale invariant.}, and $\sigma_{p}$ and $\mu_{p}$ are the standard deviation and mean of the strongest path in the proximity space and $<Y^{2}>$ is given by:

\begin{equation}
\label{varY}
<Y^2>=\int_{0}^{\infty} j^{2}(x)h_{X}(x)dx
\end{equation}

The fluctuations in the distance space of the shortest path, are given by the \emph{coefficient of variability}, $CV_{d}$:

\begin{equation}
\label{fluctuations2}
CV_{d}=\frac{\sigma}{\mu}
\end{equation}

\begin{figure}[!th]
\centering
\includegraphics[width=100mm,height=65mm]{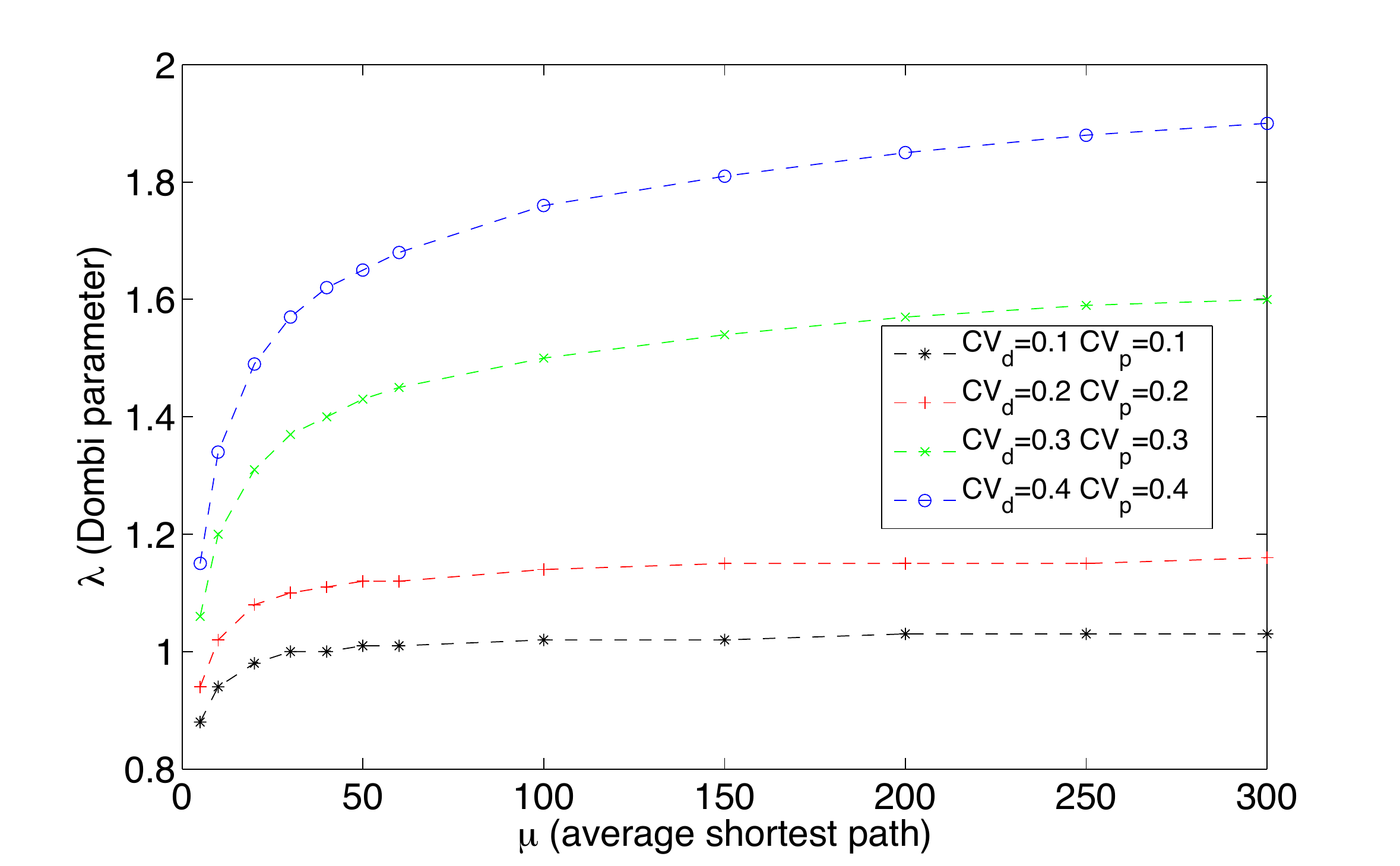}
\caption{$\lambda$ versus $\mu$ for several coefficients of variability $CV_{d}$ and $CV_{p}$\label{cv}}
\end{figure}

The dependence of $CV_p$ on $CV_d$ comes from equations \ref{density}, \ref{avg} and \ref{varY}. In figure \ref{cvp} we plot the theoretical relation between $\lambda$ and $CV_{p}$ for $\mu=10$ (average shortest path in distance space is normally distributed) and $CV_{d}=0.2$, using equation \ref{fluctuations}; the shape is preserved for different parameter values. We can see from this figure that the coefficient of variability in the proximity space is minimum when $\lambda$ converges to the \emph{min} T-Norm ($\lambda \rightarrow +\infty$); the ultra-metric closure. However, from our assumptions we require that $CV_{p}\approx CV_{d}=0.2$, in this case. The marked point in the figure \ref{cvp} shows the point where the assumptions are met. We observe that $\lambda \approx 1$ in this scenario. 

To inspect in more detail the best value or values for $\lambda$, using the metric closure we plot, in figure \ref{cv} the theoretical $\lambda$ versus $\mu$ (average shortest path), for several acceptable coefficients of variability in both spaces, assuming that the optimal value should share a controlled $CV_{d} \approx CV_{p} \leq 0.6$. The results from this figure are obtained by finding the root ($\lambda$) of the equation: \[CV^{theoretical}_{p}(\lambda)-CV_{p}=0\] \[CV^{theoretical}_{p}(\lambda)=\frac{\sqrt{<Y^{2}>-<Y>^{2}}}{<Y>}\]  \noindent Where $<Y^2>$ and $<Y>$ are given by equations \ref{density}, \ref{avg} and \ref{varY} with $j(x)=\frac{1}{x^{\frac{1}{\lambda}}+1}$ and we assume $h_{X}(\mu,\sigma)$ is normally distributed with $\sigma=\mu\times CV_{d}$ ($\mu$ is the average shortest path) with $CV_{d} \approx CV_{p}$ the real data fluctuations. We use \emph{Mathematica 7} to find the roots of this equation. From this figure we can see that when we increase the coefficients of variability, $\lambda$ also increases. However, $\lambda$ remains contained in the interval $[0.8,1.9]$. For small average shortest paths the best $\lambda \in [0.8,1.2]$, where after a transient ($\mu \approx 25$), $\lambda$ reaches an equilibrium, independent of scale factors ($\lambda$ becomes invariant). The scale factor associated to the average shortest path length (characteristic for each network), depends mainly on the weights distribution.  We can also observe that for very small fluctuations ($CV_{d}=CV_{p}=0.1$), $\lambda$ becomes invariant for values $\approx 1$. $\lambda=1$ is an optimal asymptotic value for small fluctuations, since $CV\geq 0$. In real data in order to guarantee a characteristic mean (average strongest and shortest path), in both spaces (proximity and distance), the fluctuations should be as small as possible. However in real data the shortest path distribution only approximates to the normal distribution, which is one of our assumptions, resulting in higher fluctuations, for both spaces. For fluctuations $CV_{d}\approx CV_{p} \in [0,0.4]$ we should use an isomorphism with $\lambda \in [0.8,1.9]$. For $CV \approx 0$ the asymptotical optimal value is $\lambda=1$ (see figure \ref{cv}). This gives us a lower bound to calculate the desired metric closure in a distance graph to minimize fluctuations, $\lambda$ should be larger or equal than 1 ($\lambda \geq 1$). To control fluctuations in both spaces (proximity, distance) we should choose $\lambda$ according to the fluctuations obtained in the distance or proximity spaces (this can be seen as an optimization problem).

In most applications, researchers use mappings between proximity and distance spaces similar to $\lambda=1$, using isomorphisms $\varphi=\frac{1}{x}$ or $\varphi=\frac{1}{x}-1$. We have to alert that the first choice $\varphi=\frac{1}{x}$ is not mathematically correct, since it maps $\varphi:[0,1] \rightarrow [1,+\infty]$, which is not a distance space. $\lambda=1$ leads to the more common $\varphi$ and asymptotical optimal value, assuming small fluctuations. However, to constrain fluctuations we may want to use other values of $\lambda\geq 1$, depending on the level real data fluctuations. 

%


\newpage
\section{Community Structure in Example Networks}


\begin{figure}[!th]
\centerline{\includegraphics[width=1\textwidth]{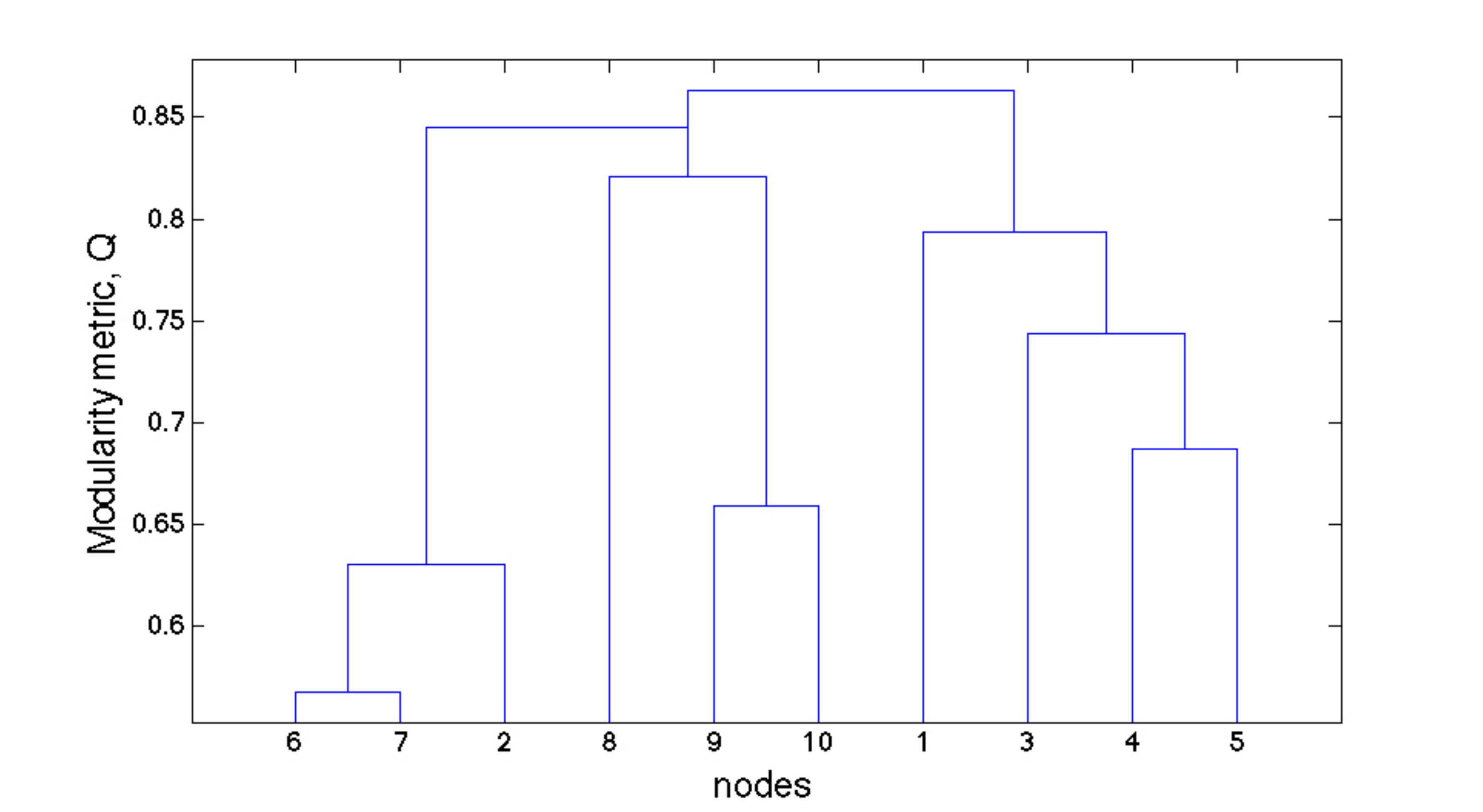}}
\caption{\small Community Structure of Toy Network with Newman's Fast Algorithm.
}
\label{fig_toy_NF}
\end{figure}

\begin{figure}[!th]
\centerline{\includegraphics[width=1\textwidth]{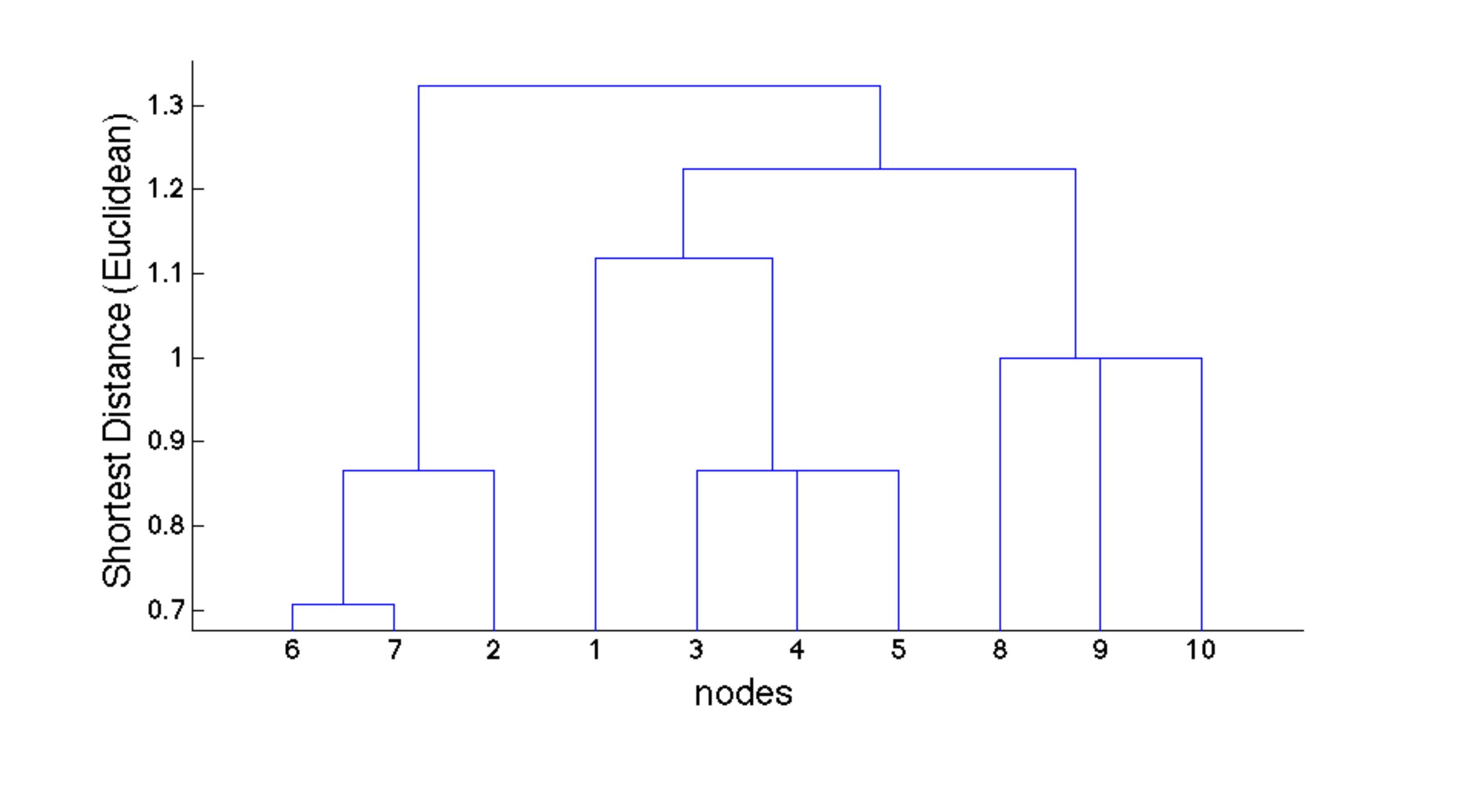}}
\caption{\small Community Structure of Toy Network with Hierarchical Clustering.
}
\label{fig_toy_H}
\end{figure}

\begin{figure}[!th]
\centerline{\includegraphics[width=1\textwidth]{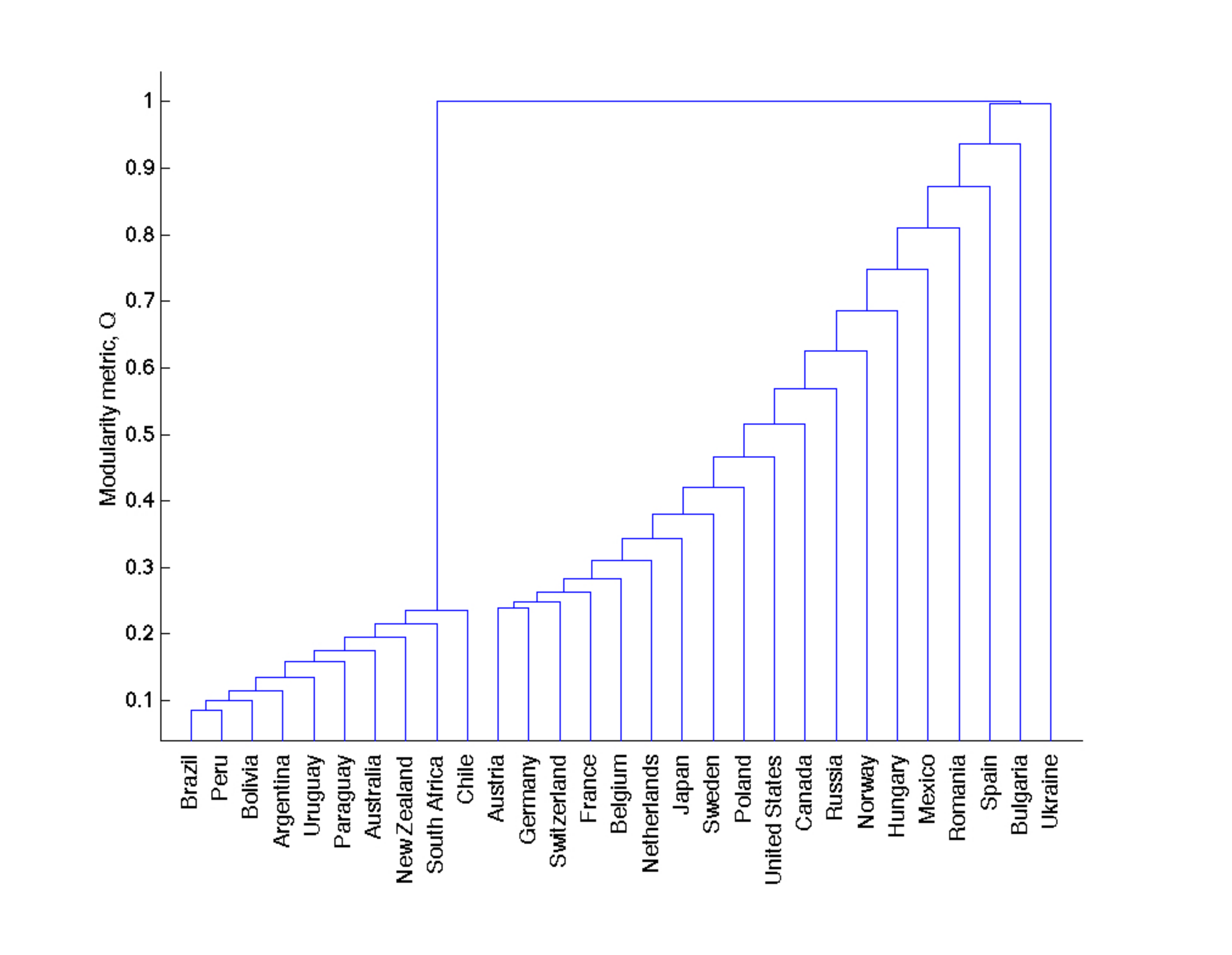}}
\caption{\small Community Structure of Flu Network with Newman's Fast Algorithm.
}
\label{fig_toy_NF}
\end{figure}

\begin{figure}[!th]
\centerline{\includegraphics[width=1\textwidth]{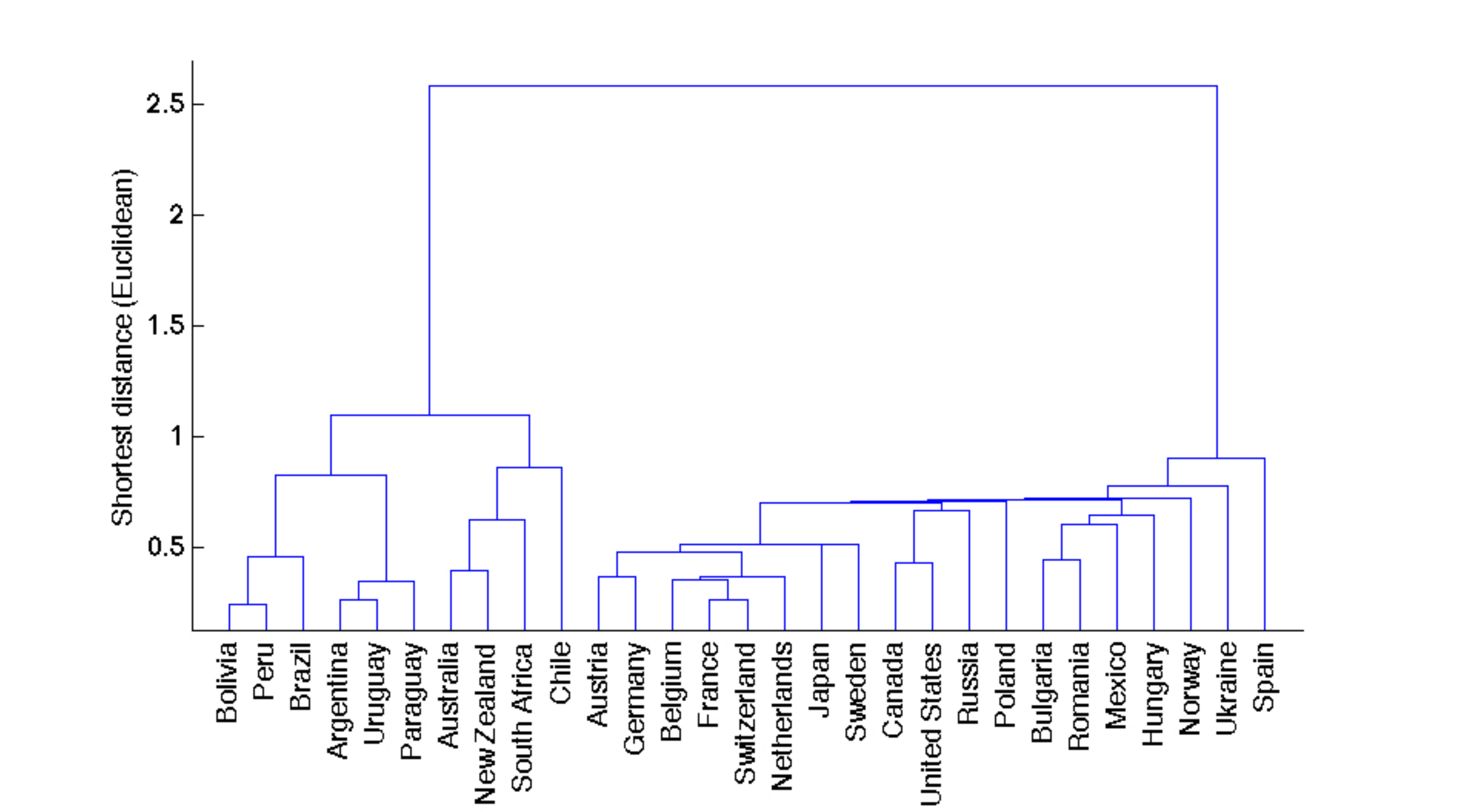}}
\caption{\small Community Structure of Flu Network with Hierarchical Clustering.
}
\label{fig_toy_H}
\end{figure}







\end{document}